\setlist{nosep}
\DeclareMathAlphabet{\mathpzc}{OT1}{pzc}{m}{it} 
\newcommand{\Porth}{P_{\hspace{-2pt}\raisebox{1.2pt}{\scalebox{.43}{$\perp$}}}}
\definecolor{darkblue}{rgb}{0.05,0.25,0.65}
\definecolor{darkgreen}{RGB}{20,140,10}
\definecolor{lightgray}{rgb}{0.9,0.9,0.9}
\definecolor{darkorange}{RGB}{200,100,5}
\definecolor{darkyellow}{rgb}{.91,.91,0}
\definecolor{orangeii}{RGB}{200,100,5}
\definecolor{lightblue}{RGB}{243, 250, 255}
\newtheorem{theorem}{Theorem}[section]
\newtheorem{lemma}[theorem]{Lemma}
\theoremstyle{definition}
\newtheorem{definition}[theorem]{Definition}
\newtheorem{example}[theorem]{Example}
\newtheorem{remark}[theorem]{Remark}
\newlength{\dhatheight}
\newcommand{\SecondFundamentalForm}{\mbox{\rm I\hspace{-1.25pt}I}}
\newcommand{\CoverOf}[1]{\widetilde{#1}}
\let\PLAINthebibliography\thebibliography
\renewcommand\thebibliography[1]{
  \PLAINthebibliography{#1}
  \setlength{\parskip}{0.5pt}
  \setlength{\itemsep}{0.5pt plus .3ex}
}
\newcommand{\proofstep}[1]{\scalebox{.85}{#1}}
\newcommand{\evencoordinateindex}{r}
\newcommand{\oddcoordinateindex}{\rho}
\newcommand{\ZTwo}{\mathbb{Z}_2}
\newcommand{\defneq}{\equiv}
\newcommand{\differential}{\mathrm{d}}
\newcommand\bos[1]{\mathstrut\mkern2.5mu#1\mkern-14mu\raise1.7ex%
  \hbox{$\scriptstyle\rightsquigarrow$}}
\newcommand\bosonic[1]{\mathstrut\mkern2.5mu#1\mkern-14mu\raise1.7ex%
  \hbox{$\scriptstyle\rightsquigarrow$}}
\newcommand{\longsquiggly}{\xymatrix{{}\ar@{~>}[r]&{}}}
\newcommand{\fixed}[1]{\mathcolor{black}{#1}}
\newcommand{\fixedtext}[1]{{\color{black}{#1}}}
\begin{document}

\setlength{\abovedisplayskip}{3pt}
\setlength{\belowdisplayskip}{3pt}
\setlength{\abovedisplayshortskip}{-4pt}
\setlength{\belowdisplayshortskip}{2pt}

\title{Holographic M-Brane Super-Embeddings}


\author{
  Grigorios Giotopoulos${}^{\ast}$,
  \;\;
  Hisham Sati${}^{\ast \dagger}$,
  \;\;
  Urs Schreiber${}^{\ast}$
}

\maketitle

\begin{abstract}
 Over a decade before the modern formulation of AdS/CFT duality,
 Duff et al. had observed a candidate microscopic explanation by identifying the CFT fields with fluctuations of probe $p$-branes stretched out in parallel near the horizon of their own black brane incarnation. A profound way to characterize these and more general probe $p$-brane configurations, especially for M5-branes, is expected to be as ``super-embeddings'' of their super-worldvolumes into target super-spacetime --- but no concrete example of these had appeared in the literature. 
 Here we fill this gap by constructing the explicit holographic $\sfrac{1}{2}$-BPS super-embeddings of probe M5-branes and M2-branes into their corresponding super-AdS backgrounds.
\end{abstract}

\vspace{.8cm}

\begin{center}
\begin{minipage}{10cm}
  \tableofcontents
\end{minipage}
\end{center}

\medskip

\vfill

\hrule
\vspace{5pt}

{
\footnotesize
\noindent
\def\arraystretch{1}
\tabcolsep=0pt
\begin{tabular}{ll}
${}^*$\,
&
Mathematics, Division of Science; and
\\
&
Center for Quantum and Topological Systems,
\\
&
NYUAD Research Institute,
\\
&
New York University Abu Dhabi, UAE.  
\end{tabular}
\hfill
\adjustbox{raise=-15pt}{
\includegraphics[width=3cm]{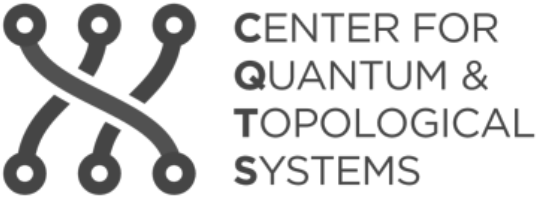}
}

\vspace{1mm} 
\noindent ${}^\dagger$The Courant Institute for Mathematical Sciences, NYU, NY.

\vspace{.2cm}

\noindent
The authors acknowledge the support by {\it Tamkeen} under the 
{\it NYU Abu Dhabi Research Institute grant} {\tt CG008}.
}

\newpage

\section{Introduction}
\label{IntroductionAndOverview}

\noindent
{\bf Microscopic holography via probe $p$-branes.}\label{MicroscopicHolography}
While holographic duality has become common-place (review includes \cite{AGMOO00}\cite{Natsuume15})
it may be less widely appreciated that well before its modern formulation a candidate microscopic explanation had been found by Duff et al., first discussed for the M2-brane \cite{BDPS87}\cite{BlencoweDuff88}\cite{DFFFTT99} then generalized to include also M5-branes and D-branes \cite{CKvP98}\cite{CKKTvP98}\cite{PST99}\cite{GM00}\cite{NurmagambetovPark02}, reviewed in \cite{Duff99-AdSa}\cite{Duff99-AdSb} (more recent variations include \cite{DGTZ20}\cite{Gupta21}\cite{Gupta24}):

\smallskip 
In this {\it microscopic $p$-brane holography} -- as we shall call it here for lack of an established name --  one considers (as indicated in \hyperlink{FigureBraneConfiguration}{Figure B}) probe $p$-branes (i.e., {\it light} branes described by sigma-models not back-reacting onto the ambient spacetime, cf. \cite{Simon12}) embedded in parallel near the (asymptotically $\mathrm{AdS}$) horizon of their own black-brane incarnation (their {\it heavy} back-reacted version described by singular solutions of supergravity, cf. \cite{DuffLu94}\cite[\S 5]{Duff99-MTheory}) and finds that their fluctuations about this configuration are described by the conformal field theory (CFT) known from AdS/CFT duality.

\smallskip 

In this picture, the otherwise somewhat mysterious holographic duality between 
(i) quantum systems 
and 
(ii) gravity 
reflects but two perspectives on the expected nature of branes: 

(i) as dynamical (fluctuating) physical objects in themselves, and 

(ii) as sources of gravitational (and higher gauge-) fields propagating away from the black brane. \footnotemark

\vspace{-.2cm}
\hspace{-.8cm}
\begin{tabular}{p{9.1cm}}
  \hypertarget{FigureBraneConfiguration}{}
  \footnotesize
  {\bf Figure B.}
  Schematics of a probe brane worldvolume immersed (embedded) near the horizon of its own black brane incarnation, parallel to it at some coordinate distance $r_{\mathrm{prb}}$. (Precise details on the black M5-brane background are in \S\ref{SuperAdS} and on the probe M5 in \S\ref{TheHolographicM5Immersion}.)

  \vspace{.3pt}
  
  The curvy line indicates (quantum-)fluctuations about this parallel configuration, thought to incarnate the strongly coupled quantum system holographically encoded in the ambient gravitational field.
\end{tabular}
\hspace{-10pt}
\adjustbox{
  raise=-1.7cm
}{
\begin{tikzpicture}[
  yscale=1.1
]

\shade[
  top color=lightgray,
  bottom color=black,
]
  (-2.54,1) rectangle (2.54,0-.05);
\shade[
  top color=black,
  bottom color=lightgray,
]
  (-2.54,0+.05) rectangle (2.54,-1);

\draw[
  orange,
  line width=1.5,
]
  (-2.54,-.6) -- (2.54,-.6);
\draw[
  orange,
  line width=1.5,
  opacity=.6,
  snake=coil, segment aspect=0,
  segment length=25
]
  (-2.54,-.6) -- (2.54,-.6);

\node
  at(0,-.54)
  {
    \scalebox{.7}{
      \color{darkyellow}
      \bf
      probe brane
    }
  };

\draw[
  draw=white,
  line width=2.5pt
]
  (-2.4,1) -- (-2.4,-1);
\draw[
  draw=white,
  line width=2.5pt
]
  (-2.2,1) -- (-2.2,-1);
\draw[
  draw=white,
  line width=2.5pt
]
  (-2,1) -- (-2,-1);

\draw[
  draw=white,
  line width=30pt
]
  (+2.6,1) -- (+2.6,-1);

\node[
]
  at (1.54,0)
  {
    \rlap{
    $
      \left.
      \def\arraystretch{2.9}
      \begin{array}{c}
        {}
        \\
        {}
      \end{array}
      \right\}
    $
    \hspace{-11pt}
    \scalebox{.7}{
      {
      \color{darkblue}
      \bf
      spacetime
      }
      \hspace{-8pt}
      $X$
    }
    }
  };

\draw[
  line width=9pt,
  draw=white,
  draw opacity=.8
]
  (2.1,-.54) -- (2.3,-.54);

\node[
]
  at (1.85,-.54)
  { 
    \rlap{
    \scalebox{.7}{
      {
      \color{darkblue}
      \bf
      immersion
      }
      \hspace{0pt}
      $\Sigma$
      }
    }
  };

\draw[
  ->,
]
  (3.77, -.4) -- (3.77, -.15);

\node
  at (0,0)
  {
    \scalebox{.7}{
      \color{magenta}
      \bf
      black brane
    }
  };

\node
  at
  (-3.5,.05)
  {
    \scalebox{.7}{
      \color{darkblue}
      \bf
    horizon
    }
  };

\draw[
  gray
]
  (-3,0) -- (-1.5,0);

\draw[
  gray,
  -Latex
]
  (-2.8,-0) -- (-2.8,-.6);

\node
  at
  (-2.8,-.25)
  {
    \scalebox{.7}{
      \colorbox{white}{
        \hspace{-5pt}$r_{\mathrm{prb}}$\hspace{-5pt}
      }
    }
  };

\draw[
  gray
]
  (-3,-.6) -- (-1.5,-.6);

\end{tikzpicture}
}

\footnotetext{
  For the case of 0-branes, namely for {\it particles}, the investigation of these dual perspectives --- (1.) as quanta and (2.) as black hole solutions --- goes back all the way to \cite{EinsteinInfeldHoffmann38}, and has fascinated authors since, see for instance \cite{ArcosPereira04}\cite{Burinskii08}.
}

\smallskip

\noindent
{\bf The open problem of holographic flux quantization.}
Even though a key aspect and motivation of holographic duality is the access it provides to non-perturbative strongly-coupled quantum physics on the brane worldvolume, existing discussions tend to ignore the main non-perturbative effect already in the classical worldvolume theory, namely the global completion of its (higher) gauge field content by flux quantization laws \cite{SS24-Flux}\cite{SS24-Phase}. Such flux-quantization provides the solitonic field content in analogy to how familiar Dirac charge quantization gives rise to Dirac monopole and Abrikosov vortex field configurations in the electromagnetic field, and is thus crucial for a complete picture of the non-perturbative physics on the brane.

\smallskip
\noindent
{\bf The need for probe brane super-embeddings.} \label{NeedForSuperEmbeddings}
But in previous articles \cite{GSS24-FluxOnM5}\cite{GSS24-SuGra}, we have explained that the issue of flux quantization especially on probe M5-branes may be solved once their worldvolume fields are promoted to ``super-embeddings'', or rather to {\it $\sfrac{1}{2}$BPS super-immersions}. Therefore the goal of the present article is to construct explicit examples of holographic M-brane super-immersion. (Based on this, we discuss the resulting worldvolume flux quantization in the companion article \cite{SS24-Companion}.)

\smallskip

In fact, what we construct here seems to be the first non-trivial example of brane ``super-embeddings'':

\vspace{1mm} 
\noindent {\bf (i)} The existing literature 
\cite{BPSTV95}\cite{HoweSezgin97a}\cite{HRS98} \cite{HoweSezgin97b}\cite{Sorokin00} (recent review in \cite{BandosSorokin23})
contains arguments that ``super-embeddings'' (i.e., $\sfrac{1}{2}$BPS super-im\-mer\-sions, \cite[Def. 2.19]{GSS24-FluxOnM5}) of super $p$-brane worldvolumes {\it imply} the equations of motion of the corresponding super $p$-brane $\sigma$-model. However, the converse conclusion --- that {\it no further} constraints than these equations of motion are implied --- is far from obvious and has only partially been addressed (e.g. for some aspects of the M2-brane in \cite[(2.50-52)]{BPSTV95}). 
Related to this may be the absence of previously published examples of non-trivial super-embeddings.

\vspace{1mm} 
\noindent {\bf (ii)}  The analogous issue in the derivation of 11d supergravity (from the superspace torsion constraint) had similarly remained unaddressed in published literature. In this case, we had settled the reverse implication with the substantial help of mechanized computer algebra \cite[Thm. 3.1]{GSS24-SuGra}. The humongous cancellations that happen to make this work seem nothing less than a miracle, quite reinforcing the idea that 11d supergravity occupies a special point in the space of all field theories.

\vspace{1mm} 
\noindent {\bf (iii)} A similar miracle may be needed to guarantee that for constructing an M5 super-immersion it is sufficient to solve its equations of motion. In lack of a complete argument to this extent, but to still have the desired implication of the super-flux Bianchi identity (\cite[Prop. 3.17]{GSS24-FluxOnM5}, needed for the flux quantization argument in \cite{SS24-Companion}), we have to construct M5 super-immersions explicitly.

\medskip 
\noindent
{\bf The construction of holographic M-brane super-embeddings.}
This is what we do here for the case of holographic M-branes (for M5-branes in \S\ref{TheHolographicM5Immersion} and for M2-branes in \S\ref{HolographicM2Immersions}).

\smallskip 
Apart from serving as a prerequisite for worldvolume {\it flux quantization} in \cite{SS25-Seifert}\cite{SS24-Companion}\cite{SS25-Srni}, these solutions should be of interest in their own right as rare explicit examples of non-trivial $p$-brane super-embeddings. 

\smallskip 
The key tool we use for the construction is the explicit rheonomy equations due to \cite{Tsimpis04} for 11d supergravity fields on superspace, of which we give a detailed re-derivation in \S\ref{ExplicitRheonomy}. Moreover, we find it most useful to not use a matrix representation of the relevant worldvolume spin representations (in contrast to most existing literature) but instead to carve these out of the 11d spinor representation by suitable projection operators (in \S\ref{SpinorsOnM5Branes} and \S\ref{SpinorsOnM2Branes}), an approach that is naturally adapted to the discussion of BPS brane super-immersions.

\medskip

Our supergravity superspace notation follows \cite{GSS24-SuGra}\cite{GSS24-FluxOnM5} (close to that of \cite{CDF91}), briefly recalled in \S\ref{TensorConventionsAnd11dSpinors} in the Appendix. 

 \medskip 
 
\section{Explicit rheonomy in 11d}
\label{ExplicitRheonomy}

Here we present explicit formulas for extending solutions of 11d supergravity from ordinary spacetime to super-spacetime, in those cases where the $(\Psi^0)$-component of the gravitino field strength vanishes \eqref{VanishingGravitinoFieldStrength} -- which is the case relevant for SuGra backgrounds (cf. \cite[\S 12.6]{FreedmanVanProeyen12}).

\smallskip

This extension process (or the property that it exists) has been called {\it rheonomy} \cite[\S III.3.3]{CDF91}, alluding to the idea that the ordinary fields ``flow'' in the odd coordinate directions from the bosonic submanifold over the full supermanifold, to become super-fields.
Explicit such formulas have been claimed 
for the special case of coset-spacetimes (like $\mathrm{AdS}_{p+2} \times S^{D-p+2}$) by \cite[p. 156]{dWPPS98}\cite{Claus99} (following \cite{KRR98}\cite{ClausKallosh99}), and a derivation in full generality has been given by \cite{Tsimpis04}. 

\smallskip 
We closely follow the latter but find that the specialization \eqref{VanishingGravitinoFieldStrength}
to vanishing gravitino field strength (which still subsumes all the former examples) gives a substantial improvement in transparency and usability that may be of interest in its own right. Additionally, we provide full details in order to secure the relative prefactors in the formulas.

\smallskip

The strategy of the construction is to expand the super-fields and their structural equations in a suitable gauge on a suitable super-coordinate chart in order to obtain explicit differential equations for the flow along the odd coordinate directions. Therefore we start by considering:

\medskip 

\hspace{-.85cm}
\begin{tabular}{p{10cm}}
{\bf Coordinate-components of superfields.}
On a super-chart with coordinates $(X,\Theta)$
we have the expansion of the super-gravitational fields \eqref{LocalCartanConnection} first into their coefficients of the coordinate-differentials and then further their super-field expansion as polynomials in the odd coordinates  (with index convention as shown on the right),
\end{tabular}
\quad 
{\small 
\def\arraystretch{1.3}
\def\tabcolsep{5pt}
\begin{tabular}{c|cc}
  \hline
  & 
  {\bf Even}
  &
  {\bf Odd}
  \\
  \hline
\rowcolor{lightgray}  
{\bf Frame}
  & 
  $a \in \{0, \cdots, 10\}$
  &
  $\alpha \in \{1, \cdots, 32\}$ 
  \\
  {\bf Coord.}
  & 
  $\evencoordinateindex \in \{0, \cdots, 10\}$
  &
  $\oddcoordinateindex \in \{1, \cdots, 32\}$ 
  \\
  \hline
\end{tabular}
}

\begin{equation}
\label{CoordinateComponentsOfCoframeField}
\hspace{-6mm} 
  \def\arraystretch{1.8}
  \begin{array}{ccccc}
  E^a 
  &=:&
  E^a_{\evencoordinateindex}
  \,
  \mathrm{d} X^\evencoordinateindex
  &+&
  E^a_{\oddcoordinateindex}
  \,
  \mathrm{d} \Theta^\oddcoordinateindex
  \,
  \\
  \Psi^\alpha 
  &=:&
  \Psi^\alpha_{\evencoordinateindex}
  \,
  \mathrm{d} X^\evencoordinateindex
  &+&
  \Psi^\alpha_{\oddcoordinateindex}
  \,
  \mathrm{d} \Theta^\oddcoordinateindex
  \\
  \Omega^{a b}
  &=&
  \Omega^{a b}_{\evencoordinateindex}
  \,
  \mathrm{d} X^\evencoordinateindex
  &+&
  \Omega^{a b}_{\oddcoordinateindex}
  \,
  \mathrm{d} \Theta^\oddcoordinateindex
  \end{array}
  \hspace{.9cm}
  \def\arraystretch{1.8}
  \begin{array}{l}
  E^a_{r/\rho}
  \;=:\;
  \sum_{n=0}^{32}
  \big(
    E^{(n)}
  \big)^a_{r/\rho}
  \;=:\;
  \sum_{n=0}^{32}
  \tfrac{1}{n!}
  \,
  \Theta^{\rho_1}
  \cdots 
  \Theta^{\rho_n}
  \,
  \big(
    E^{(n)}_{\rho_1 \cdots \rho_n}
  \big)^a_{r/\rho}
  \\
  \Psi^\alpha_{r/\rho}
  \;=:\;
  \sum_{n=0}^{32}
  \big(
    \Psi^{(n)}
  \big)^\alpha_{r/\rho}
  \;=:\;
  \sum_{n=0}^{32}
  \tfrac{1}{n!}
  \,
  \Theta^{\rho_1}
  \cdots 
  \Theta^{\rho_n}
  \,
  \big(
    \Psi^{(n)}_{\rho_1 \cdots \rho_n}
  \big)^\alpha_{r/\rho}
  \\
  \Omega^{a b}_{r/\rho}
  \;=:\;
  \sum_{n=0}^{32}
  \big(
    \Omega^{(n)}
  \big)^{a b}_{r/\rho}
  \;=:\;
  \sum_{n=0}^{32}
  \tfrac{1}{n!}
  \,
  \Theta^{\rho_1}
  \cdots 
  \Theta^{\rho_n}
  \,
  \big(
    \Omega^{(n)}_{\rho_1 \cdots \rho_n}
  \big)^{a b}_{r/\rho}
  \mathrlap{\,,}
  \end{array}
\end{equation}

\vspace{1mm} 
\noindent whose coefficients are functions on the underlying bosonic manifold which are skew-symmetric in their indices:
\begin{equation}
  \label{SkewSymmetryOfCoeffficientFunctions}
  \left(\!\!
    \def\arraystretch{1.6}
    \begin{array}{l}
    E^{(n)}_{\rho_1 \cdots \rho_n}
    \\
    \Psi^{(n)}_{\rho_1 \cdots \rho_n}
    \\
    \Omega^{(n)}_{\rho_1 \cdots \rho_n}
    \end{array}
  \!\!\! \right)
  \;:\;
  \bosonic{X} 
  \xrightarrow{\quad}
  \mathfrak{iso}
  \big(
    \mathbb{R}^{1,10 \vert \mathbf{32}}
  \big),
  \hspace{1.4cm}
  \def\arraystretch{1.7}
  \begin{array}{l}
  E^{(n)}_{\rho_1 \cdots \rho_n} 
  \;=\; 
  E^{(n)}_{[\rho_1 \cdots \rho_n]}
  \\
  \Psi^{(n)}_{\rho_1 \cdots \rho_n} 
  \;=\; 
  \Psi^{(n)}_{[\rho_1 \cdots \rho_n]}
  \\
  \Omega^{(n)}_{\rho_1 \cdots \rho_n} 
  \;=\; 
  \Omega^{(n)}_{[\rho_1 \cdots \rho_n]}
  \,.
  \end{array}
\end{equation}
Notice that this implies:
\newpage 
\begin{equation}
  \label{OneMoreIndexSkewSymmetrizing}
  \def\arraystretch{1.8}
  \begin{array}{l}
    \big(
    E
      ^{(n)}
      _{[\rho' \, \rho_2 \cdots \rho_n }
    \big)^a_{\rho]}
    \;\;=\;\;
    \tfrac{1}{n+1}
    \Big(
    n
    \big(
    E
      ^{(n)}
      _{\rho' \, [\rho_2 \cdots \rho_n}
    \big)^a_{\rho]}
    -
    \big(
    E
      ^{(n)}
      _{\rho \, \rho_2 \cdots \rho_n}
    \big)^a_{\rho'}    
    \Big)
    \,.
  \end{array}
\end{equation}

\vspace{2mm} 
\noindent Also notice the $\mathbb{N} \!\times\! \ZTwo$ bi-degrees (cf. \cite[\S 2.1.1]{GSS24-SuGra}) of the $\Psi$-components,
\begin{equation}
  \label{BiDegreeOfComponentFunctions}
  \def\tabcolsep{-2pt}
  \begin{array}{cccccccc}
     &
     \Psi^\alpha
     &=&
     \Psi^\alpha_r 
     &
     \mathrm{d}X^r
     &+&
     \Psi^\alpha_\rho
     &
     \mathrm{d}\Theta^\rho
     \\
     \mbox{deg:}
     &
     \scalebox{.9}{$(1,1)$}
     &&
     \scalebox{.9}{$(0,1)$}
     &
     \scalebox{.9}{$(1,0)$}
     &&
     \scalebox{.9}{$(0,0)$}
     &
     \scalebox{.9}{$(1,1)$}
     \mathrlap{\,,}
  \end{array}
\end{equation}

\vspace{1mm} 
\noindent which implies in particular that the component functions $\Psi^\alpha_\rho$ commute with all other terms.

\medskip 
\noindent
{\bf Wess-Zumino-Tsimpis gauge.}
On these components, we may impose the following gauge conditions
(\cite[(39-42)]{Tsimpis04}, following \cite[(A.3-4)]{McArthur84}\cite[(17-18)]{AtickDhar87}):

\begin{definition}[\bf Wess-Zumino-Tsimpis gauge]\footnote{
  Recall (e.g. \cite[\S 3.4.3]{BuchbinderKuzenko95}) that the {\it Wess-Zumino gauge} on chiral superfields constrains their dependence on the super-coordinates, hence their auxiliary super-components, but not the physical fields. The suggestion to think of this, in the context of curved superspace/supergravity, as a special case of fermionic Riemann normal coordinates may be due to \cite{AtickDhar87}, and the higher component generalization \eqref{GaugeConditions} that we use is due to \cite{Tsimpis04}.
  }
\label{WZTGauge}
The {\it WZT gauge} is given by the following conditions: 
\begin{equation}
  \label{GaugeConditions}
  \hspace{-.7cm}
  \def\arraystretch{1.6}
  \begin{array}{ccl}
    \big(
      E^{(0)}
    \big)_\rho^a 
      &\defneq&
    0
    \\
    \big(
      \Psi^{(0)}
    \big)_\rho^\alpha
     &\defneq&
    \delta^\alpha_\rho
    \\
    \big(
      \Omega^{(0)}
    \big)_\rho^{a b}
    &\defneq&
    0
  \end{array}
  \hspace{1cm}
  \mbox{and}
  \hspace{.8cm}
  \underset{ n \in \{1, \cdots, 32\}}{\forall}
  \;\;
  \left\{\!\!
  \def\arraystretch{1.7}
  \begin{array}{l}
  \big(
    E^{(n)}
    _{[\rho_1 \cdots \rho_n}
  \big)_{\rho]}^a
  \;\defneq\;
  0
  \\
  \big(
    \Psi^{(n)}
    _{[\rho_1 \cdots \rho_n}
  \big)_{\rho]}^\alpha
  \;\defneq\;
  0
  \\
  \big(
    \Omega^{(n)}
    _{[\rho_1 \cdots \rho_n}
  \big)_{\rho]}^{ab}
  \;\defneq\;
  0
  \mathrlap{\,.}
  \end{array}
  \right.
\end{equation}
\end{definition}
\begin{lemma}[\bf Direct implications of WZT gauge]
The WZT gauge conditions \eqref{GaugeConditions} imply:
\begin{equation}
  \label{GaugeConditionImplication}
  \def\arraystretch{1.6}
  \begin{array}{l}
    \Theta^\rho
    E^a_\rho
    \;=\;
    0
    \\
    \Theta^\rho
    \,
    \Psi^\alpha_\rho
    \;=\;
    \Theta^\rho
    \,
    \delta^\alpha_\rho 
    \,=:\,
    \Theta^\alpha
    \\
    \Theta^\rho
    \,
    \Omega_\rho{}^{a b}
    \;=\;
    0
  \end{array}
  \hspace{1cm}
  \mbox{\rm and}
  \hspace{.6cm}
  \underset{
    n \in \{1,\cdots, 32\}
  }{\forall}
  \;\;
  \left\{\!\!
  \def\arraystretch{1.7}
  \begin{array}{l}
    \Theta^\rho
    \, \partial_{\rho'}
    \big(E^{(n)}\big)^a_{\rho}
    \;=\;
    \big(E^{(n)}\big)^a_{\rho'}
    \\
    \Theta^\rho
    \, \partial_{\rho'}
    \big(\Psi^{(n)}\big)^\alpha_{\rho}
    \;=\;
    \big(\Psi^{(n)}\big)^\alpha_{\rho'}
    \\
    \Theta^\rho
    \, \partial_{\rho'}
    \big(\Omega^{(n)}\big)^{a b}_{\rho}
    \;=\;
    \big(\Omega^{(n)}\big)^{a b}_{\rho'}
    \mathrlap{\,.}
  \end{array}
  \right.
\end{equation}
\end{lemma}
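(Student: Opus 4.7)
The plan is to expand each quantity in the finite polynomial basis \eqref{CoordinateComponentsOfCoframeField} and to process the two halves of the statement using, respectively, the total antisymmetry of the product $\Theta^{\rho_1}\cdots\Theta^{\rho_n}$ and the algebraic identity \eqref{OneMoreIndexSkewSymmetrizing}. I describe the argument for the coframe component $E^a$; the cases of $\Psi^\alpha$ and $\Omega^{ab}$ are entirely analogous, with the only change occurring in the $n=0$ stratum of $\Psi$ where the gauge condition $(\Psi^{(0)})^\alpha_\rho = \delta^\alpha_\rho$ produces the term $\Theta^\alpha$.

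For the left column of \eqref{GaugeConditionImplication}, I expand
\[
  \Theta^\rho\,E^a_\rho
  \;=\;
  \sum_{n=0}^{32}
  \tfrac{1}{n!}\,
  \Theta^\rho\,\Theta^{\rho_1}\cdots\Theta^{\rho_n}
  \big(E^{(n)}_{\rho_1\cdots\rho_n}\big)^a_\rho .
\]
The $n=0$ term vanishes by the zeroth-order gauge condition. For each $n\geq 1$, the odd-coordinate product $\Theta^\rho\Theta^{\rho_1}\cdots\Theta^{\rho_n}$ is totally antisymmetric in its $n+1$ indices, so it is unchanged upon replacing the coefficient by its total antisymmetrization; but that antisymmetrization is precisely $(E^{(n)}_{[\rho_1\cdots\rho_n})^a_{\rho]}$, which vanishes by \eqref{GaugeConditions}. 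Hence $\Theta^\rho E^a_\rho=0$.

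For the right column, fix $n\in\{1,\ldots,32\}$ and compute
\[
  \partial_{\rho'}\big(E^{(n)}\big)^a_\rho
  \;=\;
  \tfrac{1}{(n-1)!}\,
  \Theta^{\rho_2}\cdots\Theta^{\rho_n}
  \big(E^{(n)}_{\rho'\rho_2\cdots\rho_n}\big)^a_\rho ,
\]
using that the coefficient is totally skew in its $n$ subscript indices so that all $n$ differentiation terms coincide. Multiplying by $\Theta^\rho$ and invoking the total antisymmetry of $\Theta^\rho\Theta^{\rho_2}\cdots\Theta^{\rho_n}$, I may skew-symmetrize the coefficient over its $n$ indices $(\rho,\rho_2,\ldots,\rho_n)$ without cost:
\[
  \Theta^\rho\,\partial_{\rho'}\big(E^{(n)}\big)^a_\rho
  \;=\;
  \tfrac{1}{(n-1)!}\,
  \Theta^\rho\,\Theta^{\rho_2}\cdots\Theta^{\rho_n}\,
  \big(E^{(n)}_{\rho'[\rho_2\cdots\rho_n}\big)^a_{\rho]}.
\]
Now apply \eqref{OneMoreIndexSkewSymmetrizing}: the WZT condition makes its left-hand side vanish, so
\(
  n\big(E^{(n)}_{\rho'[\rho_2\cdots\rho_n}\big)^a_{\rho]}
  = \big(E^{(n)}_{\rho\rho_2\cdots\rho_n}\big)^a_{\rho'}.
\)
Substituting yields
\[
  \Theta^\rho\,\partial_{\rho'}\big(E^{(n)}\big)^a_\rho
  \;=\;
  \tfrac{1}{n!}\,
  \Theta^\rho\,\Theta^{\rho_2}\cdots\Theta^{\rho_n}\,
  \big(E^{(n)}_{\rho\rho_2\cdots\rho_n}\big)^a_{\rho'}
  \;=\;
  \big(E^{(n)}\big)^a_{\rho'},
\]
after relabeling the dummy index $\rho\mapsto\rho_1$.

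The potentially subtle point is the combinatorics in the last paragraph: one must keep track of the factor of $n$ produced by the $\Theta$-derivative, which is exactly what cancels the prefactor in \eqref{OneMoreIndexSkewSymmetrizing} to yield the required $1/n!$. The arguments for $\Psi^\alpha$ and $\Omega^{ab}$ repeat verbatim, with the only modification being the inhomogeneous contribution of $(\Psi^{(0)})^\alpha_\rho=\delta^\alpha_\rho$ to the first identity on $\Psi^\alpha$, accounting for the right-hand side $\Theta^\alpha$.
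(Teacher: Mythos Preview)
Your proof is correct and follows essentially the same approach as the paper's: both expand in the polynomial basis \eqref{CoordinateComponentsOfCoframeField}, use the total antisymmetry of the $\Theta$-monomials, and invoke \eqref{OneMoreIndexSkewSymmetrizing} together with the gauge condition \eqref{GaugeConditions} to obtain the right column. The paper calls the left column ``immediate'' and only spells out the right column (with index $n+1$ rather than your $n$), so your write-up is slightly more explicit but not substantively different.
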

\begin{proof}
The implications on the left of 
\eqref{GaugeConditionImplication}
are immediate (cf. \cite[(43-44)]{Tsimpis04}).
To see the equations on the right of \eqref{GaugeConditionImplication} we may proceed as follows:

\vspace{-1mm}
\begin{equation}
  \def\arraystretch{2}
  \begin{array}{lll}
    \Theta^\rho
    \,
    \partial_{\rho'}
    \big(
      E^{(n+1)}
    \big)^a_\rho
    &
    \;=\;
    \tfrac{1}{n!}
    \,
    \Theta^{\rho}
    \,
    \Theta^{\rho_2}
    \cdots
    \Theta^{\rho_{n+1}}
    \big(
      E
        ^{(n+1)}
        _{\rho' \, [\rho_2 \cdots \rho_{n+1} }
    \big)
      ^a_{\rho]}
    &
    \proofstep{
      by 
      \eqref{CoordinateComponentsOfCoframeField}
    }
    \\
  &  \;=\;
    \tfrac{1}{(n+1)!}
    \,
    \Theta^{\rho}
    \,
    \Theta^{\rho_2}
    \cdots
    \Theta^{\rho_{n+1}}
    \big(
      E
        ^{(n+1)}
        _{\rho \, \rho_2 \cdots \rho_{n+1} }
    \big)
      ^a_{\rho'}    
    &
    \proofstep{
      by \eqref{OneMoreIndexSkewSymmetrizing}
      \&
      \eqref{GaugeConditions}
    }
    \\
   & \;=\;
    \big(
      E^{(n+1)}
      _{\rho\, \rho_2 \cdots \rho_{n+1}}
    \big)^a_{\rho'}
    &
    \proofstep{
      by
      \eqref{CoordinateComponentsOfCoframeField}
      ,
    }
  \end{array}
\end{equation}
and verbatim so also for $E$ replaced by $\Psi$ or $\Omega$.
\end{proof}
\begin{remark}[\bf Fermionic normal coordinates and Rheonomy]
The WZT gauge of Def. \ref{WZTGauge} may be understood as a fermionic form of Riemann normal coordinates \cite[(A.3-4)]{McArthur84}\cite[(17-18)]{AtickDhar87}. In particular, the implication $\Theta^\rho \,\Omega_\rho{}^{ab} \,=\, 0$  \eqref{GaugeConditionImplication} has the further consequence that for translations along the odd coordinate direction (``rheonomy'' \cite[\S III.3.3]{CDF91}) the covariant derivative reduces to the plain coordinate derivative:
\begin{equation}
  \label{OddCovariantTranslationInWZTGauge}
  \Theta^\rho \, \nabla_\rho
  \;=\;
  \Theta^\rho \, \partial_\rho
  \,.
\end{equation}
\end{remark}

\smallskip

\noindent
{\bf Gravitino-flat supergravity solutions on super-space.}
For our purpose here, we focus on solutions to 11d supergravity, for which
the ordinary component of the gravitino field strength 
\eqref{CoFrameComponentsOfSuperFieldStrengths}
vanishes,
\begin{equation}
  \label{VanishingGravitinoFieldStrength}
  \rho_{a b} 
  \,\defneq\,
  0
\end{equation}
(which is the case for essentially all supergravity solutions of interest, cf. \cite[\S 12.6]{FreedmanVanProeyen12}).

With $\rho_{a b}$ also the super-curvature component $J^{a_1 a_2}{}_b$ vanishes  (cf. \cite[(161)]{GSS24-SuGra}), so that on gravitino-flat solutions the super-field strengths \eqref{CoFrameComponentsOfSuperFieldStrengths} have the form
\begin{equation}
  \label{CoFrameComponentsOfSuperFieldStrengthsOnGravitinoFlat}
  \def\arraystretch{1.3}
  \begin{array}{lcl}
    T^a &=& 0
    \\
    \rho &=& 
    H_a \Psi\, E^a
    \\
    R^{a_1 a_2}
    &=&
    \tfrac{1}{2}
    R^{a_1 a_2}{}_{b_1 b_2}
    \,
    E^{a_1}\, E^{a_2}
    \,+\,
    \big(\hspace{.8pt}
      \overline{\Psi}
      \,K^{a_1 a_2}\,
      \Psi
    \big)
    \,.
  \end{array}
\end{equation}

 \newpage 

\begin{lemma}[\bf $\Theta$-independence of field components]
  For gravitino-flat \eqref{VanishingGravitinoFieldStrength}
  super-space solutions of 11d SuGra in WZT gauge {\rm (Def. \ref{WZTGauge})} the following super-field strength components \eqref{CoFrameComponentsOfSuperFieldStrengthsOnGravitinoFlat} are all independent of the odd coordinates $\Theta^\rho$:
  \vspace{-2mm} 
\begin{equation}
  \label{ThetaIndependenceOfFieldComponents}
  \def\arraystretch{1.8}
  \begin{array}{ll}
  \scalebox{.7}{
    \bf \color{darkblue}
    \def\arraystretch{.9}
    \begin{tabular}{l}
    The flux densities
    \end{tabular}
  }
  &
  \partial_{\rho}
  \big(
    (G_4)_{a_1 \cdots a_4}
  \big)
  \;=\;
  0
  \,,\;\;\;\;\;\;\;\;
  \partial_{\rho}
  \big(
    (G_7)_{a_1 \cdots a_7}
  \big)
  \;=\;
  0
  \,,
  \\
  \scalebox{.7}{
    \color{darkblue}
    \bf
    \def\arraystretch{.9}
    \begin{tabular}{l}
      Odd co-frame component of
      \\
      the gravitino field strength
    \end{tabular}
  }
  &
  \partial_\rho
  \big(
    H_a
  \big)
  \;=\;
  0
  \,,
  \\
  \scalebox{.7}{
    \color{darkblue}
    \bf
    \def\arraystretch{.9}
    \begin{tabular}{l}
      Odd co-frame components
      \\
      of the super-curvature
    \end{tabular}
  }
  &
  \partial_\rho
  \big(
    K^{a_1 a_2}
  \big)
  \;=\;
  0
  \,.
\end{array}
\end{equation}
\end{lemma}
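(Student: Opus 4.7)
The key observation driving the plan is that the odd-coordinate Euler operator $\Theta^\rho \partial_\rho$ acts as multiplication by $n$ on the degree-$n$ component of the Taylor expansion in $\Theta$ recorded in \eqref{CoordinateComponentsOfCoframeField}; hence a quantity $f$ is $\Theta$-independent if and only if $\Theta^\rho \partial_\rho f \,=\, 0$. The plan is therefore to verify this Euler-type equation for each of the four families of components listed in \eqref{ThetaIndependenceOfFieldComponents}.

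The first step is to replace $\Theta^\rho \partial_\rho$ by the covariant $\Theta^\rho \nabla_\rho$ via \eqref{OddCovariantTranslationInWZTGauge}, which is permissible in WZT gauge and converts each statement into one about the odd-frame covariant derivative of a tensorial quantity. Next, for each object I would invoke the appropriate super-Bianchi identity and contract it with one odd and the appropriate number of even frame vectors: namely $\mathrm{d} G_4 \,=\, 0$ for the four-flux density, $\mathrm{d} G_7 \,=\, -\tfrac{1}{2} G_4 \wedge G_4$ for the seven-flux density, the gravitino super-Bianchi (i.e. the super-curvature identity for $\rho$) for $H_a$, and the second super-Bianchi $\nabla R^{a_1 a_2} \,=\, 0$ for $K^{a_1 a_2}$. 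Each such identity re-expresses the odd-frame covariant derivative of the all-even-frame component as a combination of mixed-frame components of the super-field strengths together with super-torsion contributions.

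The crucial simplification comes from the gravitino-flat hypothesis \eqref{VanishingGravitinoFieldStrength} and its direct consequences summarized in \eqref{CoFrameComponentsOfSuperFieldStrengthsOnGravitinoFlat}: the offending mixed-frame terms either vanish outright (since $\rho_{ab} \,=\, 0$ and the associated super-curvature component $J^{a_1 a_2}{}_b \,=\, 0$) or reduce to algebraically fixed invariant bispinors independent of the dynamical fields, so that upon contraction with $\Theta^\rho$ they disappear in WZT gauge via \eqref{GaugeConditionImplication}. The principal obstacle I anticipate is the bookkeeping: each super-Bianchi identity has many index permutations and torsion contributions whose prefactors must combine to give zero on the right-hand side, the kind of delicate combinatorics that forced \cite{GSS24-SuGra} to rely on computer algebra in a closely related setting. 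A defensive strategy is to handle the four cases separately, expanding the Bianchi identity in the chosen index structure, substituting the gravitino-flat expressions \eqref{CoFrameComponentsOfSuperFieldStrengthsOnGravitinoFlat}, and only then contracting with $\Theta^\rho$ to read off the desired vanishing before invoking the Euler-operator observation to conclude.
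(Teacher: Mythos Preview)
Your approach is correct in principle and agrees with the paper for $(G_4)_{a_1\cdots a_4}$: both use the Euler-operator observation, replace $\Theta^\rho\partial_\rho$ by $\Theta^\rho\nabla_\rho$ via \eqref{OddCovariantTranslationInWZTGauge}, and invoke the known expression for $\nabla_\alpha(G_4)_{a_1\cdots a_4}$ (which is the content of the $G_4$-Bianchi identity at this frame structure), finding it proportional to $\rho_{ab}$ and hence zero.

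Where you diverge is in the treatment of $G_7$, $H_a$, and $K^{a_1 a_2}$. You plan to run a separate Bianchi-identity argument for each, which would work but incurs exactly the combinatorial bookkeeping you worry about. The paper bypasses this entirely by observing that these three objects are \emph{algebraic} functions of $(G_4)_{a_1\cdots a_4}$ with constant Clifford-valued coefficients: $H_a$ and $K^{a_1 a_2}$ are given by the explicit linear formulas \eqref{HK}, and $(G_7)_{a_1\cdots a_7}$ is the Hodge dual of $G_4$. Hence once $(G_4)_{a_1\cdots a_4}$ is $\Theta$-independent, the rest follow immediately with no further Bianchi analysis. Your route would re-derive, case by case, consequences of relations that are already available in closed form; the paper's route trades four computations for one plus three one-line observations.
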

\begin{proof}
This follows by use of the well-known super-space constraints, which we quote from \cite{GSS24-SuGra} (where full derivation and referencing are given). First, the $\Theta$-independence of $G_4$ follows by 
$$
  \def\arraystretch{1.6}
  \begin{array}{lll}
  \Theta^\rho
  \,
  \partial_\rho
  \big(
    (G_4)_{a_1 \cdots a_4}
  \big)
  &
  \;=\;
  \Theta^\rho
  \,
  \nabla_\rho
  \big(
    (G_4)_{a_1 \cdots a_4}
  \big)
  &
  \proofstep{
    by
    \eqref{OddCovariantTranslationInWZTGauge}
  }
  \\
  &\;=\;
  12 
  \big(
    \overline{\Theta}
    \,\Gamma_{[a_1 a_2}\,
    \rho_{a_2 a_3]}
  \big)
  &
  \proofstep{
    by
    \cite[(136)]{GSS24-SuGra}
  }
  \\
  &\;=\;
  0
  &
  \proofstep{
    by 
    \eqref{VanishingGravitinoFieldStrength}.
  }
\end{array}
$$
But the remaining components in \eqref{ThetaIndependenceOfFieldComponents} are linear functions of $(G_4)_{a_1 \cdots a_4}$: 
\begin{equation}
  \label{HK}
  \def\arraystretch{1.6}
  \begin{array}{ccll}
    H_a
    &=&
    \tfrac{1}{6}
    \tfrac{1}{3!}
    (G_4)_{a \, b_1 b_2 b_3}
    \,
    \Gamma^{b_1 b_2 b_3}
    \,-\,
    \tfrac{1}{12}
    \tfrac{1}{4!}
    (G_4)^{b_1 \cdots b_4}
    \,
    \Gamma_{a\, b_1 \cdots b_4}
    &
    \proofstep{
      \cite[(135)]{GSS24-SuGra}
    }
    \\
    &=&
    \tfrac{1}{6}
    \tfrac{1}{3!}
    (G_4)_{a\, b_1 b_2 b_3}
    \,
    \Gamma^{b_1 b_2 b_3}
    \,+\,
    \tfrac{1}{12}
    \tfrac{1}{6!}
    (G_7)_{a\, c_1 \cdots c_6}
    \,
    \Gamma^{c_1 \cdots c_6}
    &
    \proofstep{
      \cite[(148)]{GSS24-SuGra}
    }
    \\[10pt]
    K^{a_1 a_2}
    & = &
    -
    \tfrac{1}{6}
    \Big(
      (G_4)^{a_1 a_2 \, b_1 b_2}
      \Gamma_{b_1 b_2}
      \;+\;
      \tfrac{1}{4!}
      (G_4)_{b_1 \cdots b_4}
      \Gamma
        ^{a_1 a_2 \, b_1 \cdots b_4}
    \Big)
    &
    \proofstep{
      \cite[(162)]{GSS24-SuGra}
    }
    \\[4pt]
    & = &
    -
    \tfrac{1}{6}
    \Big(
      (G_4)^{a_1 a_2 \, b_1 b_2}
      \Gamma_{b_1 b_2}
      \;+\;
      \tfrac{1}{5!}
      (G_7)
        ^{a_1 a_2 \, b_1 \cdots b_5}
      \Gamma
        _{b_1 \cdots b_5}
    \Big)
  \end{array}
\end{equation}
and hence their $\Theta$-dependence vanishes with that of $G_4$ and $G_7$.
\end{proof}


\medskip

\noindent
{\bf Supergravity field extension to super-space.} We now consider solutions to the rheonomy equations for extending on-shell 11d supergravity fields to superspace, cast into recursion relations in the polynomial order of their odd coordinate field dependence as in \cite{Tsimpis04} (similar to \cite[(3.9)]{dWPPS98}), but specialized to the case of gravitino-flat spacetimes \eqref{VanishingGravitinoFieldStrength}.

\begin{lemma}[\bf Rheonomy for the graviton]
  \label{RheonomyForBosonicCOFrameComponents}
  In WZT gauge 
  \eqref{GaugeConditions}
  the following recursion relations hold for the bosonic coframe field components \eqref{CoordinateComponentsOfCoframeField},
  recursing in their odd coordinate degree $n + 1 \,\in\, \{1, \cdots, 32\}$:
  \begin{equation}
    \label{RecursionRelationForGraviton}
    \def\arraystretch{1.8}
    \begin{array}{ccl}
    (
      E^{(n+1)}
    )^a_\rho
    &=&
    \tfrac{2}{n+2}
    \big(\,
      \overline{\Theta}
      \,\Gamma^a\,
      \Psi^{(n)}_{\rho}
    \big)
    \,,
    \\
    (
      E^{(n+1)}
    )^a_r
    &=&
    \tfrac{2}{n+1}
    \big(\,
      \overline{\Theta}
      \,\Gamma^a\,
      \Psi^{(n)}_r
    \big)
    \end{array}
  \end{equation}
\end{lemma}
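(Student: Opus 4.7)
The plan is to extract both recursions from the super-torsion constraint $T^a = 0$ of \eqref{CoFrameComponentsOfSuperFieldStrengthsOnGravitinoFlat}. Expanding $T^a = \mathrm{d}E^a + \Omega^a{}_b \wedge E^b - \overline{\Psi}\Gamma^a\Psi$ in the super-coordinate co-basis $(\mathrm{d}X^r,\mathrm{d}\Theta^\rho)$, I focus on the $\mathrm{d}\Theta^{\rho'}\mathrm{d}\Theta^\rho$ coefficient (graded-symmetric in $\rho\leftrightarrow\rho'$) and on the $\mathrm{d}X^r\mathrm{d}\Theta^\rho$ coefficient. Setting each to zero yields coordinate equations expressing $\partial_{\rho'}E^a_\rho$ and $\partial_\rho E^a_r$ in terms of spin-connection and gravitino-bilinear data.

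Next I contract each equation with $\Theta^\rho$ and apply the WZT-gauge consequences \eqref{GaugeConditionImplication}. All spin-connection terms drop out via $\Theta^\rho \Omega^{ab}_\rho = 0$ together with $\Theta^\rho E^b_\rho = 0$; the mixed derivative vanishes, $\Theta^\rho \partial_r E^a_\rho = \partial_r(\Theta^\rho E^a_\rho) = 0$; and the gravitino factor simplifies via $\Theta^\rho \overline{\Psi}_\rho = \overline{\Theta}$, courtesy of $\Theta^\rho \Psi^\alpha_\rho = \Theta^\alpha$. The remaining derivative terms are organized by two facts from \eqref{GaugeConditionImplication}: first, $\Theta^\rho \partial_{\rho'} E^a_\rho = E^a_{\rho'}$; second, $\Theta^\rho \partial_\rho$ acts as the Euler operator on the odd coordinates, multiplying each homogeneous degree-$n$ piece by $n$. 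Applying these simplifications, the two contracted equations become polynomial identities in $\Theta$; matching homogeneous degree $n+1$ on both sides --- using that $\overline{\Theta}\,\Gamma^a\,\Psi^{(n)}$ has degree $n+1$ --- produces the stated recursions.

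The distinct denominators $n+2$ and $n+1$ reflect a structural difference between the two equations: for the $\mathrm{d}\Theta^{\rho'}\mathrm{d}\Theta^\rho$ coefficient the symmetrization in $\rho \leftrightarrow \rho'$ makes the $\Theta^\rho$-contraction pick up both an Euler term $\sum_n n\,(E^{(n)})^a_{\rho'}$ and an additional WZT contribution $E^a_{\rho'}$, summing to the coefficient $n+2$ at order $n+1$; for the $\mathrm{d}X^r\mathrm{d}\Theta^\rho$ coefficient only the Euler term survives, giving $n+1$. The principal hurdle will be careful bookkeeping of the graded signs --- the symmetrization of the gravitino bilinear on two odd indices, together with the contrasting commutation properties of $\Psi^\alpha_r$ (bidegree $(0,1)$) versus $\Psi^\alpha_\rho$ (bidegree $(0,0)$) from \eqref{BiDegreeOfComponentFunctions} --- which governs the precise prefactor of $2$ on each right-hand side.
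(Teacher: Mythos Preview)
Your proposal is correct and follows essentially the same approach as the paper: extract the $\mathrm{d}\Theta^{\rho'}\mathrm{d}\Theta^{\rho}$ and $\mathrm{d}X^r\mathrm{d}\Theta^\rho$ components of the super-torsion constraint, contract with $\Theta^\rho$, use the WZT-gauge identities \eqref{GaugeConditionImplication} to kill the spin-connection terms and reduce the gravitino bilinear via $\Theta^\rho\Psi^\alpha_\rho=\Theta^\alpha$, then read off the recursion at each homogeneous $\Theta$-degree. Your explanation of the distinct denominators $n+2$ versus $n+1$ (Euler term plus the extra WZT contribution in the symmetrized case) matches the paper's underbrace computations exactly, and your explicit remark that $\Theta^\rho\partial_r E^a_\rho = \partial_r(\Theta^\rho E^a_\rho)=0$ is a detail the paper leaves implicit.
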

\noindent
(cf. \cite[(58, 59)]{Tsimpis04}.\footnote{
\label{RelativeFactorsComparedToTsimpis}
  The factor of 
  ``$\mathrm{i}/2$'' by which our 
  \eqref{RecursionRelationForGraviton} differs from \cite[(58, 59)]{Tsimpis04} 
  is absorbed by our convention for the spacetime signature, the Clifford algebra and the Majorana spinor: Our $\Gamma$-matrices are $\mathrm{i}$ times the Gamma matrices there (which makes all expressions in Majorana spinors manifestly real, cf. \cite[Rem. 1.7]{GSS24-SuGra})), and we do not include a factor of $\sfrac{1}{2}$ multiplying the $(\Psi^2)$-term in the definition of the super-torsion \eqref{GravitationalFieldStrengths}.
  }
\begin{proof}
The $\mathrm{d}\Theta^\rho$-component 
of \eqref{RecursionRelationForGraviton}
follows as:
$$
  \def\arraystretch{1.7}
  \begin{array}{cccll}
    &
    \mathrm{d}
    \, 
    E^a
    &=&
    \Omega^a{}_b \, E^b
    \,+\,
    \big(
     \overline{\Psi}
     \,\Gamma^a\,
     \Psi
    \big)
    &
    \proofstep{
      from
      \eqref{GravitationalFieldStrengths}
    }
    \\
    \Rightarrow
    &
    \Theta^\rho
    \,
    \partial{\mathclap{\phantom{)}}}_{(\rho} 
    \,
    E^a_{\rho')}
    &=&
    \Theta^\rho
    \,
    \big(
      \Omega^a{}_b
    \big)_{\!(\rho} E^b_{\rho')}
    \,+\,
    \Theta^\rho
    \,
    \Psi^{\alpha}_{(\rho}
    \,
    \Psi^{\alpha'}_{\rho')}
    \Gamma^a_{\alpha \alpha'}
    &
    \proofstep{
      by
      \eqref{CoordinateComponentsOfCoframeField}
    }
    \\
    \Leftrightarrow
    &
    \Theta^\rho
    \partial{\mathclap{\phantom{)}}}_{(\rho} 
    \,
    E^a_{\rho')}
    &=&
    \Theta^\rho
    \delta^{\alpha}_{\rho}
    \,
    \Psi^{\alpha'}_{\rho'}
    \Gamma^a_{\alpha \alpha'}
    &
    \proofstep{
      by
      \eqref{GaugeConditionImplication}
      \& 
      \eqref{BiDegreeOfComponentFunctions}
    }
    \\
    \Rightarrow
    &
    \underbrace{
      \Theta^\rho
      \partial{\mathclap{\phantom{)}}}_{(\rho} 
      \,
      \big(
        E^{(n+1)}
      \big)^a_{\rho')}
    }_{\color{gray}
      \scalebox{.7}{$
        \tfrac{(n+2)}{2}
        \big(
          E^{(n+1)}
        \big)^a_{\rho'}
      $}
    }
    &=&
    \underbrace{
      \Theta^\alpha
      (
        \Psi^{(n)}
      )^{\alpha'}
        _{\rho'}
      \Gamma^a_{\alpha \alpha'}    
    }_{\color{gray}
      \scalebox{.7}{$
      \big(
        \overline{\Theta}
        \,\Gamma^a\,
        \Psi^{(n)}
      \big)
      $}
    }
    &
    \proofstep{
      by
      \eqref{CoordinateComponentsOfCoframeField}
      \&
      \eqref{GaugeConditionImplication},
    }
  \end{array}
$$
and the $\mathrm{d}X^r$-component as:
$$
  \def\arraystretch{1.7}
  \begin{array}{cccll}
    &
    \mathrm{d}
    \, 
    E^a
    &=&
    \Omega^a{}_b \, E^b
    \,+\,
    \big(\,
     \overline{\Psi}
     \,\Gamma^a\,
     \Psi
    \big)
    &
    \proofstep{
      from
      \eqref{GravitationalFieldStrengths}
    }
    \\
    \Rightarrow
    &
    \Theta^\rho
    \,
    \partial_\rho E^a_r
    &=&
    \Theta^\rho
    \big(\Omega^a{}_b\big)\mathclap{\phantom{)}}_{\!\rho}
    \,
    E^b_r
    \,-\,
    \Theta^\rho
    \big(\Omega^a{}_b\big)_{\!r}
    \,
    E^b_{\rho}
    \,+\,
    2
    \,
    \Theta^\rho
    \,
    \Psi^\alpha_{\rho}
    \,
    \Psi^{\alpha'}_{r}
    \Gamma^a_{\alpha \alpha'}
    &
    \proofstep{
      by
      \eqref{CoordinateComponentsOfCoframeField}
    }
    \\
    \Leftrightarrow
    &
    \Theta^\rho
    \partial_\rho E^a_r
    &=&
    2
    \,
    \Theta^\rho
    \delta^\alpha_\rho
    \,
    \Psi^{\alpha'}_{r}
    \Gamma^a_{\alpha \alpha'}
    &
    \proofstep{
      by
      \eqref{GaugeConditionImplication}
    }
    \\
    \Rightarrow
    &
    \underbrace{
    \Theta^\rho
    \partial_{\rho}
    \big(
      E^{(n+1)}
    \big)^a_r
    }_{\color{gray}
      \scalebox{.7}{$
      (n+1)\big(
        E^{(n+1)}
      \big)^a_r
      $}
    }
    &=&
    2
    \,
    \big(\,
      \overline{\Theta}
      \,\Gamma^a\,
      \Psi^{(n)}_r
    \big)
    &
    \proofstep{
      by 
      \eqref{CoordinateComponentsOfCoframeField}
      \&
      \eqref{GaugeConditionImplication}.
    }
  \end{array}
$$

\vspace{-4mm} 
\end{proof}

\vspace{1mm} 
\begin{lemma}[\bf Rheonomy for the spin-connection]
  \label{RheonomyForSpinConnection}
  On gravitino-flat \eqref{VanishingGravitinoFieldStrength} super-spacetimes in WZT gauge 
  \eqref{GaugeConditions} we have
  the following recursion relations for the 
  spin connection \eqref{CoordinateComponentsOfCoframeField},
  recursing in the odd coordinate degree $n + 1 \,\in\, \{1, \cdots, 32\}$:
  \begin{equation}
    \label{RecursionForSpinConnection}
    \def\arraystretch{1.8}
    \begin{array}{ccl}
      \big(
        \Omega^{(n+1)}
      \big)^{a_1 a_2}_\rho
      &=&
      \tfrac{2}{n+2}
      \big(\,
        \overline{\Theta}
        \,K^{a_1 a_2}\,
        \Psi^{(n)}_\rho
      \big)
      \\
      \big(
        \Omega^{(n+1)}
      \big)^{a_1 a_2}_r
    &=&
    \tfrac{2}{n+1}
    \big(\,
      \overline{\Theta}
      \,K^{a_1 a_2}\,
      \Psi^{(n)}_r
    \big)
    \end{array}
  \end{equation}
\end{lemma}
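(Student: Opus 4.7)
The strategy mirrors the proof of Lemma \ref{RheonomyForBosonicCOFrameComponents} verbatim, but starts from the Cartan structure equation for the spin-connection curvature $R^{a_1 a_2} = \mathrm{d}\Omega^{a_1 a_2} + \Omega^{a_1}{}_b \wedge \Omega^{b a_2}$ rather than the one for the coframe. Substituting the gravitino-flat form \eqref{CoFrameComponentsOfSuperFieldStrengthsOnGravitinoFlat} for $R^{a_1 a_2}$, I would expand both sides in coordinate components via \eqref{CoordinateComponentsOfCoframeField}, extract the symmetrized $(\rho, \rho')$-component and the mixed $(\rho, r)$-component, and contract each with $\Theta^\rho$.

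The WZT-gauge identities \eqref{GaugeConditionImplication} then do almost all of the work. On the left-hand side, the quadratic $\Omega \wedge \Omega$ contribution always carries a factor $\Omega^{ab}_\rho$, which contracted with $\Theta^\rho$ vanishes. On the right-hand side, the bosonic Riemann piece $\tfrac{1}{2} R^{a_1 a_2}{}_{b_1 b_2} E^{b_1} E^{b_2}$ yields components containing at least one factor $E^b_\rho$, killed by $\Theta^\rho E^b_\rho = 0$. Only the fermionic bilinear $(\overline{\Psi}\, K^{a_1 a_2}\, \Psi)$ survives, and its odd-coordinate factor $\Psi^\alpha_\rho$ is converted by $\Theta^\rho \Psi^\alpha_\rho = \Theta^\alpha$ into a raw $\overline{\Theta}$ sitting next to the $\Psi^{(n)}_{\rho'}$ or $\Psi^{(n)}_r$ factor of the other leg.

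What remains on the left is $\Theta^\rho$ acting through $\partial_\rho$ on the $(n{+}1)$-th polynomial order of $\Omega^{ab}$, which by the combinatorial identity \eqref{GaugeConditionImplication} (together with \eqref{OneMoreIndexSkewSymmetrizing} for the symmetrized odd component) produces either $(n{+}1)(\Omega^{(n+1)})^{ab}_r$ in the mixed case or $\tfrac{n+2}{2}(\Omega^{(n+1)})^{ab}_{\rho'}$ in the symmetrized-odd case. Matching against the surviving $(\overline{\Theta}\, K^{ab}\, \Psi^{(n)})$ terms on the right then yields \eqref{RecursionForSpinConnection} after solving for $(\Omega^{(n+1)})^{ab}$.

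The main subtlety I would expect is bookkeeping: tracking the symmetrization prefactors from \eqref{OneMoreIndexSkewSymmetrizing} and the super-commutation signs that arise when $\Theta^\rho$ is moved past polynomial coefficients and past the fermionic $\Psi$-factors. Conceptually, the proof rests entirely on the gravitino-flat assumption \eqref{VanishingGravitinoFieldStrength}: without it, $R^{a_1 a_2}$ would acquire an additional $J^{a_1 a_2}{}_b\,\rho\,E^b$-type cross-term (which is exactly what is killed in passing to \eqref{CoFrameComponentsOfSuperFieldStrengthsOnGravitinoFlat}), and the recursion would fail to close cleanly at each polynomial order in $\Theta$.
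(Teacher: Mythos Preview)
Your proposal is correct and follows essentially the same route as the paper's proof: both extract the $(\rho,\rho')$- and $(\rho,r)$-components of the structure equation $\mathrm{d}\Omega^{a_1 a_2} = \Omega^{a_1}{}_b\,\Omega^{b a_2} + R^{a_1 a_2}$, contract with $\Theta^\rho$, and use the WZT identities \eqref{GaugeConditionImplication} to kill the $\Omega\wedge\Omega$ and bosonic-Riemann contributions, leaving only the $K$-bilinear. The one ingredient you leave implicit is the $\Theta$-independence of $K^{a_1 a_2}$ from \eqref{ThetaIndependenceOfFieldComponents}, which the paper invokes explicitly to ensure that the surviving term decomposes cleanly order-by-order in $\Theta$.
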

\noindent
(cf. \cite[(61, 64)]{Tsimpis04}\footnote{
  As in footnote \ref{RelativeFactorsComparedToTsimpis},
  the difference of
  \cite[(61, 64)]{Tsimpis04}
  from \eqref{RecursionForSpinConnection} 
  by a factor of $\mathrm{i}/2$ is due to our spinor convention.
} noticing our \eqref{ThetaIndependenceOfFieldComponents}).
\begin{proof}
In \eqref{RecursionForSpinConnection} the $\mathrm{d}\Theta^\rho$-component follows by:
$$
  \def\arraystretch{1.8}
  \begin{array}{lccll}
    &
    \mathrm{d}
    \, 
    \Omega^{a_1 a_2}
    &=&
    \Omega^{a_1}{}_b
    \,
    \Omega^{b a_2}
    +
    R^{a_1 a_2}
    &
    \proofstep{
      from
      \eqref{GravitationalFieldStrengths}
    }
    \\
    \Rightarrow
    &
    \Theta^{\rho'}
    \,
    \partial_{(\rho'}
    \,
    (\Omega^{a_1 a_2})_{\rho)}
    &=&
    \Theta^{\rho'}
    \,
    \delta^{\alpha'}_{\rho'}
    \,
    \Psi^{\alpha}_{\rho}
    \,
    K^{a_1 a_2}_{\alpha' \alpha}
    &
    \proofstep{
      by
      \eqref{CoFrameComponentsOfSuperFieldStrengthsOnGravitinoFlat},
      \eqref{CoordinateComponentsOfCoframeField}
      \&
      \eqref{GaugeConditionImplication}
    }
    \\
    \Rightarrow
    &
    \underbrace{
      \Theta^{\rho'}
      \,
      \partial_{(\rho'}
      \,
      \big(
        \Omega^{(n+1)}
      \big)^{a_1 a_2}_{\rho)}
    }_{\color{gray}
      \scalebox{.7}{$
      \tfrac{(n+2)}{2}
      \big(
        \Omega^{(n+1)}
      \big)^{a_1 a_2}_{\rho}
      $}
    }
    &=&
    \big(\,
      \overline{\Theta}
      \,K^{a_1 a_2}\,
      \Psi^{(n)}_{\rho}
    \big)
    &
    \proofstep{
      by 
      \eqref{CoordinateComponentsOfCoframeField},
      \eqref{GaugeConditionImplication}
      \&
      \eqref{ThetaIndependenceOfFieldComponents},
    }
  \end{array}
$$
and the $\mathrm{d}X^r$-component by:
$$
  \def\arraystretch{1.8}
  \begin{array}{lccll}
    &
    \mathrm{d}
    \, \Omega^{a_1 a_2}
    &=&
    \Omega^{a_1}{}_b
    \,
    \Omega^{b a_2}
    +
    R^{a_1 a_2}
    &
    \proofstep{
      from
      \eqref{GravitationalFieldStrengths}
    }
    \\
    \Rightarrow
    &
    \Theta^\rho
    \,
    \partial_\rho
    (\Omega^{a_1 a_2})_r
    &=&
    2\,
    \Theta^\rho
    \,
    \Psi^{\alpha}_\rho
    \,
    \Psi^{\alpha'}_r
    \,
    K^{a_1 a_2}_{\alpha \alpha'}
    &
    \proofstep{
      by
      \eqref{CoFrameComponentsOfSuperFieldStrengthsOnGravitinoFlat}, \eqref{CoordinateComponentsOfCoframeField},
            \&
      \eqref{GaugeConditionImplication}
    }
    \\
    \Rightarrow
    &
    \underbrace{
    \Theta^\rho
    \,
    \partial_\rho
    \big(\Omega^{(n+1)}\big)^{a_1 a_2}_r
    }_{\color{gray}
      \scalebox{.7}{$
      (n+1)
      \big(
        \Omega^{(n+1)}
      \big)^{a_1 a_2}_r
      $}
    }
    &=&
    2\,
    \big(\,
      \overline{\Theta}
      \,K^{a_1 a_2}\,
      \Psi^{(n)}
    \big)
    &
    \proofstep{
      by
      \eqref{CoordinateComponentsOfCoframeField},
      \eqref{GaugeConditionImplication}
      \&
      \eqref{ThetaIndependenceOfFieldComponents}.
    }
  \end{array}
$$

\vspace{-3mm} 
\end{proof}

\begin{lemma}[\bf Rheonomy for the gravitino]
  On gravitino-flat \eqref{VanishingGravitinoFieldStrength}
  super-spacetimes
  in WZT gauge 
  \eqref{GaugeConditions}
  the following recursion relations 
  hold for the odd coordinate dependence of the 
  gravitino field \eqref{CoordinateComponentsOfCoframeField}:
  \def\arraystretch{1.9}
  \begin{equation}
    \label{RecursionRelationForGravitino}
    \def\arraystretch{1.9}
    \begin{array}{lcl}
      \big(
        \Psi^{(n+1)}
      \big)^\alpha_{\rho}
      &=&
    +
    \tfrac{1}{n+2}
    \tfrac{1}{4}
    \big(
      \Gamma_{a b} 
      \Theta
    \big)^\alpha
    \big(
      \Omega^{(n)}
    \big)^{a b}_{\rho}
    \,+\,
    \tfrac{1}{n+2}
    \,
    (H_a \Theta)^\alpha
    \big(
      E^{(n)}
    \big)^a_{\rho}
    \\
    \big(
      \Psi^{(n+1)}
    \big)^\alpha_r
    &=&
    -
    \tfrac{1}{n+1}
    \tfrac{1}{4}
    (\Gamma_{a b} \Theta)^\alpha
    \big(
      \Omega^{(n)}
    \big)^{a b}_r
    \;+\,
    \tfrac{1}{n+1}
    (H_a \Theta)^\alpha
    \big(
      E^{(n)}
    \big)^a_r
    \mathrlap{\,.}
    \end{array}
  \end{equation}
  $\phantom{AA}$
\end{lemma}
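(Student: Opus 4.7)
The plan is to mimic the proofs of Lemmas \ref{RheonomyForBosonicCOFrameComponents} and \ref{RheonomyForSpinConnection}. The starting point is the structural equation $\mathrm{d}\Psi + \tfrac{1}{4}\Omega^{ab}\Gamma_{ab}\Psi = \rho$ (with the precise sign dictated by the convention used in the previous proofs), which on gravitino-flat backgrounds reads
\begin{equation*}
  \mathrm{d}\Psi^\alpha
  \;=\;
  -\tfrac{1}{4}\,\Omega^{ab}\,(\Gamma_{ab}\Psi)^\alpha
  \;+\;
  (H_a\Psi)^\alpha\,E^a
\end{equation*}
by \eqref{CoFrameComponentsOfSuperFieldStrengthsOnGravitinoFlat}. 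Both sides will be expanded into their $\mathrm{d}\Theta^{\rho'}\mathrm{d}\Theta^\rho$- and $\mathrm{d}X^r\mathrm{d}\Theta^\rho$-coefficients via \eqref{CoordinateComponentsOfCoframeField}, after which the recursions will fall out of the same two manipulations as in the previous lemmas: contraction with $\Theta$ and application of the WZT-gauge identities \eqref{GaugeConditionImplication}, together with the $\Theta$-independence of $H_a$ recorded in \eqref{ThetaIndependenceOfFieldComponents}.

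For the $\mathrm{d}\Theta^{\rho'}\mathrm{d}\Theta^\rho$-coefficient one obtains, after (super-)symmetrisation in the two odd indices, $\partial_{(\rho'}\Psi^\alpha_{\rho)}$ on the left and two symmetrised products on the right. Contracting with $\Theta^{\rho'}$, the WZT-gauge conditions $\Theta^{\rho'}\Omega^{ab}_{\rho'}=0$ and $\Theta^{\rho'}E^a_{\rho'}=0$ each kill one out of the two symmetrised pieces, while $\Theta^{\rho'}\Psi^\beta_{\rho'}=\Theta^\beta$ converts the surviving $\Psi$-factor into $\Theta$. The LHS is handled by the same Euler-type identity used before, giving $\Theta^{\rho'}\partial_{(\rho'}(\Psi^{(n+1)})^\alpha_{\rho)}=\tfrac{n+2}{2}\,(\Psi^{(n+1)})^\alpha_\rho$, so matching polynomial order $n+1$ on both sides (and using the $\Theta$-independence of $H_a$) isolates the first recursion. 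The $\mathrm{d}X^r\mathrm{d}\Theta^\rho$-coefficient is treated identically, except that no symmetrisation of odd indices is required; the analogous Euler identity now gives the factor $(n+1)$ in place of $\tfrac{n+2}{2}$, which accounts for the $\tfrac{1}{n+1}$ versus $\tfrac{1}{n+2}$ split between the two lines of \eqref{RecursionRelationForGravitino}.

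The main obstacle is the sign bookkeeping that produces the relative ``$-$'' in the spin-connection piece of the even-component recursion versus the ``$+$'' in the odd-component one. This is the same Koszul-sign phenomenon present in the previous two rheonomy lemmas: commuting $\Theta^{\rho'}$ through an odd-coordinate component of $\Omega^{ab}$ (of bi-degree $(0,0)$ by \eqref{BiDegreeOfComponentFunctions}) picks up a different sign from commuting it through an even-coordinate component, and the absence of symmetrisation in the $\mathrm{d}X^r\mathrm{d}\Theta^\rho$-case removes the ``swap'' that was available in the odd-odd case. Once these signs are settled exactly as in Lemmas \ref{RheonomyForBosonicCOFrameComponents} and \ref{RheonomyForSpinConnection}, no further difficulty arises and the stated recursions \eqref{RecursionRelationForGravitino} follow.
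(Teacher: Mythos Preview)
Your strategy is exactly the paper's: extract the $\mathrm{d}\Theta^{\rho'}\mathrm{d}\Theta^\rho$- and $\mathrm{d}X^r\mathrm{d}\Theta^\rho$-components of the structural equation, contract with $\Theta$, apply the WZT identities \eqref{GaugeConditionImplication} and the $\Theta$-independence of $H_a$, and read off the recursions via the Euler identity. Two sign points need correcting, though. First, by \eqref{GravitationalFieldStrengths} one has $\mathrm{d}\Psi^\alpha = +\tfrac{1}{4}\,\Omega^{ab}(\Gamma_{ab}\Psi)^\alpha + \rho^\alpha$, with a \emph{plus} sign; your $-\tfrac{1}{4}$ would give the wrong sign in the $\rho$-recursion. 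Second, your diagnosis of the relative minus in the $r$-recursion is off: $\Omega^{ab}_\rho$ has bi-degree $(0,1)$, not $(0,0)$, and the sign does not arise from commuting $\Theta$ through it, nor is it present in the final results of Lemmas~\ref{RheonomyForBosonicCOFrameComponents} or~\ref{RheonomyForSpinConnection} (the analogous terms vanish there by WZT gauge). The actual source, as the paper notes explicitly, is the wedge ordering: in $\tfrac{1}{4}\Omega^{ab}(\Gamma_{ab}\Psi)$ the piece that survives $\Theta$-contraction has $\Omega^{ab}_r\,\mathrm{d}X^r$ to the \emph{left} of $\Psi_\rho\,\mathrm{d}\Theta^\rho$, so matching against the LHS written in $\mathrm{d}\Theta^\rho\mathrm{d}X^r$ order costs one sign via $\mathrm{d}X^r\mathrm{d}\Theta^\rho = -\mathrm{d}\Theta^\rho\mathrm{d}X^r$; by contrast in $(H_a\Psi)^\alpha E^a$ the $\Psi_\rho\,\mathrm{d}\Theta^\rho$ factor is already on the left of $E^a_r\,\mathrm{d}X^r$, so no flip occurs there.
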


\begin{proof}
In \eqref{RecursionRelationForGravitino} the $\mathrm{d}\Theta^\rho$-component follows by:
$$
  \def\arraystretch{1.9}
  \begin{array}{cccll}
    &
    \mathrm{d}
    \, 
    \Psi^\alpha
    &=&
    \tfrac{1}{4}\Omega^{ab} 
    \,
    (\Gamma_{a b} \Psi)^\alpha
    \,+\,
    \rho^\alpha
    &
    \proofstep{
      from
      \eqref{GravitationalFieldStrengths}
    }
    \\
    \Rightarrow
    &
    \Theta^{\rho'} 
    \,
    \partial\mathclap{\phantom{)}}_{(\rho'}
    \, 
    \Psi^\alpha_{\rho)}
    &=&
    \tfrac{1}{4}
    \Theta^{\rho'}
    (\Omega^{ab})_{(\rho'} 
    (\Gamma_{a b} \Psi_{\rho)})^\alpha
    \,+\,
    \Theta^{\rho'}
    (H_a \Psi_{(\rho'})^\alpha \, E^a_{\rho)}
    &
    \proofstep{
      by
      \eqref{CoFrameComponentsOfSuperFieldStrengthsOnGravitinoFlat},
     \eqref{CoordinateComponentsOfCoframeField},
      \&
      \eqref{GaugeConditionImplication}
    }
    \\
    \Rightarrow
    &
    \underbrace{
      \Theta^{\rho'}
      \,
      \partial\mathclap{\phantom{)}}_{(\rho'}
      \big(
        \Psi^{(n+1)}
      \big){}^\alpha_{\rho)}
    }_{\color{gray}
      \scalebox{.9}{$
      \tfrac{n+2}{2}
      \big(
        \Psi^{(n+1)}
      \big)^\alpha_{\rho}
      $}
    }
    &=&
    \tfrac{1}{2}
    \tfrac{1}{4}
    \big(
      \Gamma_{a b} 
      \Theta
    \big)^\alpha
    \big(
      \Omega^{(n)}
    \big)^{a b}_{\rho}
    \,+\,
    \tfrac{1}{2}
    \,
    (H_a \Theta)^\alpha
    \big(
      E^{(n)}
    \big)^a_{\rho}
   &
   \proofstep{
     by
     \eqref{CoordinateComponentsOfCoframeField},
      \eqref{GaugeConditionImplication}
      \&
      \eqref{ThetaIndependenceOfFieldComponents},
   }
\end{array}
$$

\vspace{2mm} 
\noindent and the $\mathrm{d}X^a$-component by:
$$
  \def\arraystretch{1.9}
  \begin{array}{cccll}
    &
    \mathrm{d}
    \, 
    \Psi^\alpha
    &=&
    \tfrac{1}{4}\Omega^{ab} 
    (\Gamma_{a b} \Psi)^\alpha
    \,+\,
    \rho^\alpha
    &
    \proofstep{
      from
      \eqref{GravitationalFieldStrengths}
    }
    \\
    \Rightarrow
    &
    \Theta^\rho
    \,
    \partial_\rho
    \Psi^\alpha_r
    &=&
    -
    \Theta^\rho
    \,
    \tfrac{1}{4}
    \Omega^{a b}_r
    (\Gamma_{a b} \Psi_\rho)
    \,+\,
    \Theta^\rho
    (H_a \Psi_{\rho})^\alpha
    \,
    E^a_r
    &
    \proofstep{
      by
      \eqref{CoFrameComponentsOfSuperFieldStrengthsOnGravitinoFlat},
     \eqref{CoordinateComponentsOfCoframeField},
      \&
      \eqref{GaugeConditionImplication}
    }
    \\
    \Rightarrow
    &
    \underbrace{
      \Theta^\rho
      \partial_\rho
      \big(
        \Psi^{(n+1)}
      \big)^\alpha_r
    }_{\color{gray}
      (n+1)
      \big(
        \Psi^{(n+1)}
      \big)^\alpha_r
    }
    &=&
    -\tfrac{1}{4}
    (\Gamma_{a b}\Theta)
    \big(\Omega^{(n)}\big)^{a b}_r
    \,+\,
    (H_a \Theta)^\alpha
    \,
    \big(
      E^{(n)}
    \big)^a_r
    &
    \proofstep{
      by
     \eqref{CoordinateComponentsOfCoframeField},
      \eqref{GaugeConditionImplication}
      \&
      \eqref{ThetaIndependenceOfFieldComponents}.
    }
\end{array}
$$
Notice here how the sign in the second line appears since only the coefficient of $\mathrm{d}X^r\, \mathrm{d}\Theta^\rho$ contributes in the first term, which picks up a sign $\mathrm{d}X^r\, \mathrm{d}\Theta^\rho \,=\, -\, \mathrm{d}\Theta^\rho\, \mathrm{d}X^r$ in comparison to the
left hand side.
\end{proof}

By inserting these recursion relations into each other, we may decouple them (resulting in a formulation similar to \cite[(3.9)]{dWPPS98}):
\begin{lemma}[\bf Decoupled rheonomy recursion relations]
  \label{DecoupledRheonomyRecursionRelations}
  On gravitino-flat \eqref{VanishingGravitinoFieldStrength} super-spacetimes in WZT gauge 
  \eqref{GaugeConditions}
  the following decoupled recursion relations hold for the odd coordinate dependence of the super-fields:
  \begin{equation}
    \label{DecoupledRecursionForGravitino}
    \hspace{-5mm} 
    \def\arraystretch{2}
    \begin{array}{ll}
    \begin{array}{ccl}
      \big(
        \Psi^{(n+2)}
      \big)^\alpha_\rho
      &=&
      +
      \tfrac{1}{n+4}
      \tfrac{2}{n+3}
      \tfrac{1}{4}
      \big(
        \Gamma_{a_1 a_2}
        \Theta
      \big)^\alpha
      \Big(
        \overline{\Theta}
        \,K^{a_1 a_2}\,
        \Psi^{(n)}_\rho
      \Big)
      \,+\,
      \tfrac{1}{n+4}
      \tfrac{2}{n+3}
      (H_a \Theta)^\alpha
      \Big(
        \overline{\Theta}
        \,\Gamma^a\,
        \Psi^{(n)}_\rho
      \Big)
      \\
      \big(
        \Psi^{(n+2)}
      \big)^\alpha_r
      &=&
      -\tfrac{1}{n+2}
      \tfrac{1}{n+1}
      \tfrac{1}{4}
      (\Gamma_{a_1 a_2}\Theta)^\alpha
      \Big(
        \overline{\Theta}
        \,K^{a_1 a_2}\,
        \Psi^{(n)}_r
      \Big)
      \,+\,
      \tfrac{1}{n+2}
      \tfrac{1}{n+1}
      (H_a \Theta)^\alpha
      \Big(
        \overline{\Theta}
        \,\Gamma^a\,
        \Psi^{(n)}_r
      \Big)
    \end{array}
    &
      \proofstep{
        \rm
        \def\arraystretch{.95}
        \begin{tabular}{l}
        by
        inserting
        \\
        \eqref{RecursionForSpinConnection}
        \&
        \eqref{RecursionRelationForGraviton}
        \\
        into 
        \eqref{RecursionRelationForGravitino}\,.
        \end{tabular}
      }
    \end{array}
  \end{equation}
\end{lemma}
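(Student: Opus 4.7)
The plan is to execute precisely the substitution indicated in the margin: iterate the gravitino rheonomy relation \eqref{RecursionRelationForGravitino} once at the next odd-degree, and then eliminate the spin-connection and co-frame components appearing at order $n+1$ by inserting the rheonomy relations \eqref{RecursionForSpinConnection} for the spin-connection and \eqref{RecursionRelationForGraviton} for the graviton.

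Concretely, I would first shift $n \mapsto n+1$ in both lines of \eqref{RecursionRelationForGravitino}, so as to express $(\Psi^{(n+2)})^\alpha_\rho$ and $(\Psi^{(n+2)})^\alpha_r$ as a linear combination of $(\Omega^{(n+1)})^{a_1 a_2}$ and $(E^{(n+1)})^{a}$, contracted against the $\Theta$-independent tensors $\Gamma_{a_1 a_2}\Theta$ and $H_a \Theta$. This is a purely algebraic relabelling, carried out separately for the $\mathrm{d}\Theta^\rho$- and $\mathrm{d}X^r$-components, and it carries along unchanged the sign discrepancy between those two lines. Second, I would substitute directly for $(\Omega^{(n+1)})^{a_1 a_2}_{\rho/r}$ and $(E^{(n+1)})^{a}_{\rho/r}$ using the four formulas of Lemmas \ref{RheonomyForSpinConnection} and \ref{RheonomyForBosonicCOFrameComponents}, each of which produces exactly the bilinear $\overline{\Theta}\,(\cdots)\,\Psi^{(n)}_{\rho/r}$ appearing on the right of \eqref{DecoupledRecursionForGravitino}. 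Collecting the resulting scalar prefactors then yields the two claimed identities.

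No genuine obstacle appears in this argument: because the $\Psi^{(n)}$ that features as spinor factor in \eqref{RecursionForSpinConnection} and \eqref{RecursionRelationForGraviton} is already the same spinor factor appearing in the target formula, no Fierz re-arrangement, symmetrization of odd coordinate indices, or use of the torsion/curvature Bianchi identities is required beyond what was already invoked in proving those lemmas. The $\Theta$-independence of $H_a$ and $K^{a_1 a_2}$ from \eqref{ThetaIndependenceOfFieldComponents} is what licenses pulling these tensors through the explicit factors of $\Theta$ on the right and turns the decoupling into a mechanical composition. The only step warranting care is bookkeeping of the minus sign in the $\mathrm{d}X^r$-line, inherited from \eqref{RecursionRelationForGravitino} and ultimately from the anti-commutation $\mathrm{d}X^r\,\mathrm{d}\Theta^\rho = -\,\mathrm{d}\Theta^\rho\,\mathrm{d}X^r$ used in its derivation.
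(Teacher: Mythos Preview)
Your proposal is correct and follows exactly the paper's own argument, which consists solely of the marginal note ``by inserting \eqref{RecursionForSpinConnection} \& \eqref{RecursionRelationForGraviton} into \eqref{RecursionRelationForGravitino}''. The shift $n \mapsto n+1$ in \eqref{RecursionRelationForGravitino} followed by direct substitution of the $(n{+}1)$-order spin-connection and co-frame components is precisely what the paper intends, and your observations about the absence of Fierz rearrangements and the role of the $\Theta$-independence \eqref{ThetaIndependenceOfFieldComponents} are accurate elaborations of why this composition is purely mechanical.
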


\section{Holographic M5-Branes}
\label{HolographicM5Branes}

\subsection{Spinors on M5-branes}
\label{SpinorsOnM5Branes}

We briefly recall and record some properties of spinors in 6d among spinors in 11d, following \cite[\S 3.2]{GSS24-FluxOnM5}, which we will need below. In particular, we establish a Fierz identity (in Lem. \ref{FierzIdentityIn6d} below), which is crucial in the proof of the M5-immersion in \S\ref{TheHolographicM5Immersion} below.
In contrast to existing literature on super-embeddings, we do not use a matrix representation of the 6d Clifford algebra but instead use projection operators \eqref{ProjectorOn6dSpinRep}
to algebraically carve it out of the 11d Clifford algebra (as indicated in \cite[\S A]{LambertPapageorgakis10}). We find that this helps considerably with providing
the proofs in the following sections.

\medskip 
\noindent
{\bf Spinors in 6d form 11d.}
Following \cite[\S 3.2]{GSS24-FluxOnM5} we conveniently identify the chiral $\mathrm{Spin}(1,5)$-representations $2 \cdot \mathbf{8}_{\pm} \in \mathrm{Rep}_{\mathbb{R}}\big(\mathrm{Spin}(1,5)\big)$ with the linear subspaces of the $\mathrm{Spin}(1,10)$-representation $\mathbf{32}$ \eqref{The11dMajoranaRepresentation} which are the images of the projection operators
(\cite[(92)]{GSS24-FluxOnM5})
\begin{equation}
  \label{ProjectorOn6dSpinRep}
  \def\arraystretch{1.4}
  \begin{array}{l}
    P
    \,:=\,
    \tfrac{1}{2}
    \big(
      1 
      +
      \Gamma_{5'6789}
    \big)
    \\
    \overline{P}
    \,:=\,
    \tfrac{1}{2}
    \big(
      1 
      -
      \Gamma_{5'6789}
    \big)
  \end{array}
  \;:\;
  \mathbf{32}
  \xrightarrow{\;\;}
  \mathbf{32}
  \,,
\end{equation}
respectively, satisfying the following evident but consequential relations (cf. \cite[(89)]{GSS24-FluxOnM5}):
\begin{equation}
  \label{PropertiesOfProjectionOperator}
  \begin{array}{l}
    P \,P \,=\, P
    \\
    \overline{P} \, \overline{P}
    \,=\, \overline{P}
    \\
    \overline{P} \, P \,=\, 0
    \\
    P \, \overline{P} \,=\, 0
  \end{array}
  \hspace{.9cm}
  \begin{array}{ll}
    \begin{array}{l}
    \Gamma^a \, P 
    \;=\;
    \overline{P}\, \Gamma^a
    \\
    \Gamma^a \, \overline{P} 
    \;=\;
    P\, \Gamma^a
    \end{array}
    &
    a \in \{0,1,2,3,4,5\}
    \\
    \begin{array}{l}
      \Gamma^{5'} P 
      \,=\,
      P \, \Gamma^{5'}
    \end{array}
    \\
    \begin{array}{l}
    \Gamma^{i} \, P 
    \,=\,
    P \, \Gamma^{i}
    \end{array}
    &
    i \in \{6,7,8,9\}
  \end{array}
  \hspace{.9cm}
  \def\arraystretch{1.4}
  \begin{array}{l}
    \Gamma_{5'6789}
    \,
    P
    \;=\;
    + P
    \\
    \Gamma_{5'6789}
    \,
    \overline{P}
    \;=\;
    - \overline{P}
    \\[+5pt]
    \Gamma_{6789} P
    \;=\;
    \Gamma_{5'} P
    \,,
  \end{array}
\end{equation}
where we suggestively denote the 11d Clifford generators as follows:
\begin{equation}
  \label{TheCliffordGenerators}
  \hspace{-6.7cm}
  \begin{tikzcd}[
    column sep=3pt,
    row sep=0pt
  ]
    &&
    \mathclap{
      \hspace{20pt}
      \overbrace{
      \phantom{-------------}
      }^{
        \scalebox{.7}{
          \color{darkblue}
          \bf tangential
        }
      }
    }
    &&&&
    \overbrace{
      \phantom{--}
    }^{
        \scalebox{.7}{
          \color{darkblue}
          \bf radial
        }
    }
    &&
    \mathclap{
      \hspace{15pt}
      \overbrace{
      \phantom{\phantom{--------}}
      }^{
        \scalebox{.7}{
          \color{darkblue}
          \bf transversal
        }
      }
    }
    \\[-10pt]
    \Gamma_0
    \ar[
      d,
      |->,
      start anchor={[yshift=+1pt]},
      end anchor={[yshift=-1pt]},
      bend right=90,
      shift right=0pt,
      "{\color{darkgreen}
        \def\arraystretch{.9}
        \begin{array}{l}
          \phantom{+}\overline{P}(-)P
          \\
          +
          P(-)\overline{P}
        \end{array}
      }"{swap, pos=.5}
    ]
    &
    \Gamma_1
    &
    \Gamma_2
    &
    \Gamma_3
    &
    \Gamma_4
    &
    \Gamma_5
    &
    \Gamma_{5'}
    &
    \Gamma_{6}
    &
    \Gamma_{7}
    &
    \Gamma_{8}
    &
    \Gamma_{9}
    &
    \mathrlap{
       \in 
      \mathrm{Pin}^+(1,10)
      \,\subset\,
      \mathrm{End}_{\mathbb{R}}(\mathbf{32})
    }
    \\
    \gamma_0
    &
    \gamma_1
    &
    \gamma_2
    &
    \gamma_3
    &
    \gamma_4
    &
    \gamma_5
    &
    &&&&&
    \mathrlap{
    \in 
    \mathrm{Pin}^+(1,5)
    \,\subset\,
    \mathrm{End}_{\mathbb{R}}
    \big(
      2\cdot \mathbf{8}_+
      \oplus
      2 \cdot \mathbf{8}_-
    \big)
    \,,
    }
  \end{tikzcd}
\end{equation}
in that under the corresponding inclusion 
$$
  \mathrm{Spin}(1,5) 
  \xhookrightarrow{\;\;}
  \mathrm{Spin}(1,10)
$$
there are isomorphisms \cite[(86-90)]{GSS24-FluxOnM5}
\vspace{2mm} 
\begin{equation}
  \label{6dSpinRepAsImageOfProjector}
  \begin{tikzcd}[sep=0pt]
    \mathllap{
      2 \cdot \mathbf{8}
      \,:=\;\;
    }
    2 \cdot \mathbf{8}_+
    \;\simeq\;
    P(\mathbf{32})
    \\[-1pt]
    2 \cdot \mathbf{8}_-
    \;\simeq\;
    \overline{P}(\mathbf{32})
    \mathrlap{\,.}
  \end{tikzcd}
\end{equation}

Combined with the vector representation of $\mathrm{Spin}(1,10)$ and $\mathrm{Spin}(1,5)$ on $\mathbb{R}^{1,10}$ and $\mathbb{R}^{1,5}$, respectively, we may regard $P$
\eqref{ProjectorOn6dSpinRep}
as a projector of super-vector spaces
\begin{equation}
  \label{SuperProjectionOperator}
  \adjustbox{
    raise=-6pt
  }{
   \begin{tikzcd}
    \mathbb{R}
      ^{
      1,10 \,\vert\, \mathbf{32}
      }
    \ar[
      r,
      ->>
    ]
    \ar[
      rr,
      rounded corners,
      to path={
           ([yshift=+00pt]\tikztostart.north)  
        -- ([yshift=+08pt]\tikztostart.north)
        -- node[]{
          \scalebox{.8}{
            \colorbox{white}{
              $P$
            }
          }
        }
           ([yshift=+08pt]\tikztotarget.north)
        -- ([yshift=-00pt]\tikztotarget.north)
      }
    ]
    &
    \mathbb{R}
      ^{1,5 \,\vert\, 2 \cdot \mathbf{8}}
    \ar[
      r,
      hook
    ]
    &
    \mathbb{R}^{
      1,10 \,\vert\, \mathbf{32}
    }
  \end{tikzcd}
  }
  \hspace{.4cm}
  \def\arraystretch{1.3}
  \begin{array}{l}
     P 
       \,:=\, 
    \tfrac{1}{2}
    \big(
      1 + \Gamma_{5'6789}
    \big) 
    \\
    \overline{P}
    \,:=\,
    \tfrac{1}{2}
    \big(
      1 - \Gamma_{5'6789}
    \big) 
    \,,
  \end{array}
\end{equation}
which is convenient for unifying the conditions on tangential and transversal super-coframe components in a $\sfrac{1}{2}$BPS super-immersion (Def. \ref{HalfBPSSuperImmersion} below).

\smallskip

\begin{lemma}[\bf A Fierz identity in 6d]
  \label{FierzIdentityIn6d}
  Elements $\theta \in (2 \cdot \mathbf{8}_+)_{\mathrm{odd}}$ satisfy 
  \begin{equation}
    \label{6dFierzIdentity}
    \gamma_a\theta
    \,
    \overline{\theta}\gamma^a
    \;=\;
    0
    \,,
  \end{equation}
  where $\gamma_a$ are the Clifford generators in 6D according to \eqref{TheCliffordGenerators}.
\end{lemma}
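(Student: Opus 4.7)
The strategy is to reduce the 6d identity \eqref{6dFierzIdentity} to a computation in the ambient 11d Clifford algebra, exploiting the projector embedding and the chirality constraint it enforces. First, identify $\theta$ with its image $P\theta \in \mathbf{32}$ via \eqref{6dSpinRepAsImageOfProjector}, so that $\Gamma_{5'6789}\theta = \theta$, and use \eqref{PropertiesOfProjectionOperator} to rewrite $\gamma_a\theta = \Gamma_a\theta$ and $\overline{\theta}\gamma^a = \overline{\theta}\Gamma^a$ for tangential $a \in \{0,\ldots,5\}$. The identity to prove then becomes
\begin{equation*}
  \sum_{a=0}^{5}\Gamma_a\,\theta\,\overline{\theta}\,\Gamma^a \;=\; 0
\end{equation*}
as a matrix in $\mathrm{End}_{\mathbb{R}}(\mathbf{32})$, now amenable to manipulation in 11d.

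The next step is to apply the standard 11d Fierz expansion of the symmetric Majorana bispinor,
\begin{equation*}
  32\,\theta\,\overline{\theta}
  \;=\;
  \big(\overline{\theta}\Gamma_b\theta\big)\Gamma^b
  \,+\, \tfrac{1}{2}\big(\overline{\theta}\Gamma_{b_1 b_2}\theta\big)\Gamma^{b_1 b_2}
  \,+\, \tfrac{1}{5!}\big(\overline{\theta}\Gamma_{b_1\cdots b_5}\theta\big)\Gamma^{b_1\cdots b_5},
\end{equation*}
in which only the three ranks for which $C\Gamma^{(r)}$ is symmetric appear. The chirality $\Gamma_{5'6789}\theta = \theta$ then forces $\overline{\theta}\Gamma^{(r)}\theta$ to vanish whenever $\Gamma^{(r)}$ anti-commutes with $\Gamma_{5'6789}$, dramatically thinning out the sum; and the duality $\Gamma_{6789}\theta = \Gamma_{5'}\theta$ from \eqref{PropertiesOfProjectionOperator} relates the remaining bilinears in pairs, further reducing the number of independent contributions.

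Finally, substitute back and compute, for each surviving Clifford element, the tangential contraction $\sum_{a=0}^{5}\Gamma_a\Gamma^{(r)}\Gamma^a$ by splitting the indices of $\Gamma^{(r)}$ into tangential, radial, and transversal blocks, then collect terms and verify cancellation. The main obstacle lies precisely here: because the sum runs over only six directions rather than all eleven, the standard contraction identity $\Gamma^c\Gamma_{(r)}\Gamma_c = (-1)^r(d-2r)\Gamma_{(r)}$ does not apply, so each rank must be handled block by block, and the miraculous cancellation between the vector, 2-form and (dualized) 5-form contributions that produces zero is the characteristic feature of 6d as a division-algebra dimension. Keeping signs and combinatorial prefactors consistent throughout this index-splitting is where most of the genuine work lies.
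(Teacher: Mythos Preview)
Your overall strategy — embed $\theta$ as $P\theta \in \mathbf{32}$, Fierz-expand $\theta\,\overline{\theta}$ in the 11d Clifford basis, use the chirality constraint to thin the sum, then contract with $\sum_{a=0}^{5}\Gamma_a(\cdot)\Gamma^a$ — is exactly the paper's route. But your Fierz expansion is set up incorrectly, and this is not a cosmetic slip.

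The lemma concerns $\theta \in (2\cdot\mathbf{8}_+)_{\mathrm{odd}}$, so $\theta$ is Grassmann-odd. Consequently $\overline{\theta}\,\Gamma_{a_1\cdots a_p}\,\theta$ is nonzero only when the bilinear form $(\,\overline{(-)}\,\Gamma_{a_1\cdots a_p}\,(-)\,)$ is \emph{skew}-symmetric, which by \eqref{SkewSpinorPairings} happens for $p \in \{0,3,4\}$, not for $p \in \{1,2,5\}$ as you wrote. Your displayed expansion thus keeps precisely the wrong ranks; if you carry that forward, the block-by-block contraction you describe will not produce the cancellation you are after. The ``miracle'' you anticipate between vector, 2-form and 5-form pieces is not what actually happens.

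Once the correct ranks $0,3,4$ are in place, the paper's argument is also considerably cleaner than the index-splitting you plan. Sandwiching in $\overline{P}(\cdot)P$ (equivalent to your chirality filter) leaves only Clifford elements with an odd number of tangential indices; from ranks $0,3,4$ this means $\gamma_{a_1a_2a_3}$, $\gamma_{a_1a_2a_3}\Gamma_i$, $\gamma_a\Gamma_{i_1i_2}$, and $\gamma_a\Gamma_{i_1i_2i_3}$. Transverse Hodge duality \eqref{TransverseHodgeDuality} then identifies the last two, and they cancel against each other, leaving only the pure 3-index tangential terms. The entire contraction then collapses to the single 6d identity
\[
  \gamma_b\,\gamma_{a_1a_2a_3}\,\gamma^b \;=\; 0
  \qquad (a_i,b \in \{0,\ldots,5\}),
\]
which is elementary (split into $b \in \{a_1,a_2,a_3\}$ versus $b \notin \{a_1,a_2,a_3\}$ and count). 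No laborious block-by-block bookkeeping is needed.
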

\begin{proof}
  Recall from \eqref{6dSpinRepAsImageOfProjector} that we may and do regard $\theta = P \theta \in 2 \cdot \mathbf{8}_+ \subset \mathbf{32}$ as an 11d spinor but constrained to be in the image of the projector $P := \tfrac{1}{2}\big(1 + \Gamma_{5'6789}\big)$, see \eqref{ProjectorOn6dSpinRep}. With this, we may use the formula for Clifford expansion 
  \eqref{CliffordExpansionOfEndomorphismOf32}
  of general endomorphisms $\phi \in \mathrm{End}_{\mathbb{R}}(\mathbf{32})$
  in the case where 
  $$
    \begin{tikzcd}[
      sep=0pt
    ]
    \phi 
      \;\defneq\;
    \theta \, \overline{\theta}
    \ar[
      r,
      phantom,
      "{ : }"
    ]
    &[+1pt]
    \mathbf{32}
    \ar[
      rr
    ]
    &&
    \mathbf{32}
    \\[-2pt]
    &
    \Phi 
      &\longmapsto&
    \theta\big(\hspace{.8pt}
      \overline{\theta}
      \,
      \Phi
    \big)
    \,,
    \end{tikzcd}
  $$
  with the spinor pairing \eqref{TheSpinorPairing} on the right.
  
  But since $\theta$ (as opposed to $\mathrm{d}\theta$, cf. \cite[Rem. 2.62]{GSS24-SuGra}) is a skew-commuting variable, it is only the skew-symmetric Clifford basis elements among $\Gamma_{a_1 \cdots a_p}$ ($p \leq 5$) which are non-vanishing when evaluated in $\big(\hspace{1pt}\overline{\theta}(-)\theta\big)$, and these are precisely those with 0, 3 or 4 indices \eqref{SkewSpinorPairings}. Hence \eqref{CliffordExpansionOfEndomorphismOf32} specializes to:
  $$
    \theta \, \overline{\theta}
    \;=\;
    -
    \tfrac{1}{32}
    \Big(
      \big(\,
        \overline{\theta}
        \,
        \theta
      \big)
      \,-\,
      \tfrac{1}{3!}
      \big(\hspace{.8pt}
        \overline{\theta}
        \,
        \Gamma_{a_1 a_2 a_3}
        \,
        \theta
      \big)
      \,
      \Gamma^{a_1 a_2 a_3}
    \Big)
    \,+\,
      \tfrac{1}{4!}
      \big(\hspace{.8pt}
        \overline{\theta}
        \,
        \Gamma_{a_1 \cdots a_4}
        \,
        \theta
      \big)
      \,
      \Gamma^{a_1 \cdots a_4}
    \Big)
    \,,
    \hspace{.6cm}
    a_i \in \{0,\!\cdots\!,\!5',6,7,8,9\}
    \,.
  $$
  Moreover, since the only Clifford elements which remain non-vanishing when sandwiched in $\overline{P}(-)P$ are those carrying an odd number of tangential (6d) indices, by \eqref{PropertiesOfProjectionOperator}, this reduces further to  
  \begin{equation}
    \label{IntermediateFormOfThetaBarTheta}
    \def\arraystretch{1.8}
    \begin{array}{ccl}
    \theta \, \overline{\theta}
    &=&
    \tfrac{1}{32}
    \Big(
      \tfrac{1}{3!}
      \big(\hspace{.8pt}
        \overline{\theta}
        \,
        \gamma_{a_1 a_2 a_3}
        \,
        \theta
      \big)
      \,
      \gamma^{a_1 a_2 a_3}
      \,-\,
      \tfrac{1}{3!}
      \big(\hspace{.8pt}
        \overline{\theta}
        \,
        \gamma_{a_1 a_2 a_3} 
        \Gamma_{i}
        \,
        \theta
      \big)
      \,
      \gamma^{a_1 a_2 a_3}
      \Gamma^{i}
    \\
    &&
    \;\;\;\;\;
    \,+\,
      \tfrac{1}{2}
      \big(\hspace{.8pt}
        \overline{\theta}
        \,
        \gamma_{a}
        \Gamma_{i_1 i_2}
        \,
        \theta
      \big)
      \,
      \gamma^{a}
      \Gamma^{i_1 i_2}
      \,-\,
      \tfrac{1}{3!}
      \big(\hspace{.8pt}
        \overline{\theta}
        \,
        \gamma_{a}
        \Gamma_{i_1 i_2 i_3}
        \,
        \theta
      \big)
      \,
      \gamma^{a}
      \Gamma^{i_1 i_2 i_3}
    \Big)
    \end{array}
    \;\;\;
    \def\arraystretch{1.2}
    \begin{array}{l}
      a_i \in \{0,\!\cdots\!,\!5\}
      \\
      i_i \in \{5',6,7,8,9\}
      \mathrlap{\,.}
    \end{array}
  \end{equation}

  But finally, by Hodge duality in the transverse directions
  \vspace{1mm} 
  \begin{equation}
    \label{TransverseHodgeDuality}
    \Gamma_{i_1 i_2 i_3} P
    \underset{
      \scalebox{.7}{
        \eqref{PropertiesOfProjectionOperator}
      }
    }{\;=\;}
    \Gamma_{i_1 i_2 i_3} 
    \Gamma_{5'6789}
    P
    \;=\;
    \pm 
    \tfrac{1}{2}
    \epsilon_{i_1 i_2 i_3 \, i_4 i_5}
    \Gamma^{i_4 i_5} P
    \hspace{.6cm}
    i_i
    \in
    \{5,6,7,8,9\}
    \,,
  \end{equation}
  we have for the last summand in \eqref{IntermediateFormOfThetaBarTheta}:
  $$
    \def\arraystretch{1.6}
    \begin{array}{lll}
      \tfrac{1}{3!}
      \big(\hspace{.8pt}
        \overline{\theta}
        \gamma_a
        \Gamma_{i_1 i_2 i_3}
        \theta
      \big)
      \gamma^a 
      \Gamma^{i_1 i_2 i_3}
      &
      \;=\;
      \tfrac{1}{3!}
      \tfrac{1}{2 \cdot 2}
      \epsilon_{i_1 i_2 i_3 \, i_4 i_5}
      \epsilon^{i_1 i_2 i_3\, j_4 j_5}
      \big(\hspace{.8pt}
        \overline{\theta}
        \gamma_a
        \Gamma^{i_4 i_5}
        \theta
      \big)
      \gamma^a 
      \Gamma_{j_4 j_5}
      &
      \proofstep{
        by
        \eqref{TransverseHodgeDuality}
      }
      \\
     & \;=\;
      \tfrac{1}{3!}
      \tfrac
        {3! \cdot 2!}
        {2 \cdot 2}
      \delta
        _{i_4 i_4}
        ^{j_4 j_5}
      \big(\hspace{.8pt}
        \overline{\theta}
        \gamma_a
        \Gamma^{i_4 i_5}
        \theta
      \big)
      \gamma^a 
      \Gamma_{j_4 j_5}
      &
      \proofstep{
        by 
        \eqref{ContractingKroneckerWithSkewSymmetricTensor}
      }
      \\
      &\;=\;
      \tfrac{1}{2}
      \big(\hspace{.8pt}
        \overline{\theta}
        \gamma_a
        \Gamma^{i_4 i_5}
        \theta
      \big)
      \gamma^a 
      \Gamma_{i_4 i_5}
      &
      \proofstep{
        by
        \eqref{ContractingKroneckerWithSkewSymmetricTensor},
      }
    \end{array}
  $$
  whereby the last two summands in \eqref{IntermediateFormOfThetaBarTheta} cancel each other, and 
  we are left with:
  \begin{equation}
    \label{FinalFormOfThetaBarTheta}
    \def\arraystretch{1.8}
    \begin{array}{ccl}
    \theta \, \overline{\theta}
    &=&
    \tfrac{1}{32}
    \Big(
      \tfrac{1}{3!}
      \big(\hspace{.8pt}
        \overline{\theta}
        \,
        \gamma_{a_1 a_2 a_3}
        \,
        \theta
      \big)
      \,
      \gamma^{a_1 a_2 a_3}
      \,-\,
      \tfrac{1}{3!}
      \big(\hspace{.8pt}
        \overline{\theta}
        \,
        \gamma_{a_1 a_2 a_3} 
        \Gamma_{i}
        \,
        \theta
      \big)
      \,
      \gamma^{a_1 a_2 a_3}
      \Gamma^{i}
      \Big)
      \,,
    \end{array}
    \;\;\;
    \def\arraystretch{1.2}
    \begin{array}{l}
      a_i \in \{0,\!\cdots\!,\!5\}
      \\
      i_i \in \{5',6,7,8,9\}
      \,.
    \end{array}
 \end{equation}

 Now observing (by decomposing the sum and making a simple case analysis) that
 \begin{equation}
   \label{SummedConjugationOf3IndexClifford}
     \gamma_b
     \,
     \gamma_{a_1 a_2 a_3}
     \gamma^b
     \;=\;
     0
     \,,
     \hspace{.6cm}
     a_i, b \in \{0,\!\cdots\!,\!5\}
     \,,
 \end{equation}
 the claim \eqref{6dFierzIdentity} follows:
 $$
   \def\arraystretch{1.6}
   \begin{array}{lll}
     \gamma_a \theta\, 
     \overline{\theta}\gamma^a
     &
     =\;
    \tfrac{1}{32}
    \Big(
      \tfrac{1}{3!}
      \big(
        \overline{\theta}
        \,
        \gamma_{b_1 b_2 b_3}
        \,
        \theta
      \big)
      \,
      \underbrace{
      \gamma_a
      \gamma^{b_1 b_2 b_3}
      \gamma^a
      }_{ \color{gray} = 0 }
      \,+\,
      \tfrac{1}{3!}
      \big(
        \overline{\theta}
        \,
        \gamma_{b_1 b_2 b_3} 
        \Gamma_{i}
        \,
        \theta
      \big)
      \underbrace{
      \gamma_a
      \gamma^{b_1 b_2 b_3}
      \gamma^a
      }_{\color{gray} = 0 }
      \Gamma^i
      \Big)
      &
      \proofstep{
        by
        \eqref{FinalFormOfThetaBarTheta}
      }
      \\[-3pt]
      & =\;
      0
      &
      \proofstep{
        by \eqref{SummedConjugationOf3IndexClifford}.
      }
   \end{array}
 $$

\vspace{-.4cm}
\end{proof}

 \newpage 
 
\subsection{Super $\mathrm{AdS}_7$-spacetime}
\label{SuperAdS}

With the result of \S\ref{ExplicitRheonomy} in hand, we may give explicit formulas for super $\mathrm{AdS}_7 \!\times\! S^4$-spacetime by first recalling the ordinary bosonic AdS-geometry and then rheonomically extending to super-spacetime.

\medskip

\noindent
{\bf Near-horizon geometry of black M5-branes.}
The bosonic near-horizon geometry of $N$ black M5-brane is (cf. \cite[(6.6)]{CKvP98}\cite[\S 2.1.2]{AFHS00}, following \cite{GibbonsTownsend93}\cite{DuffGibbonsTownsend94}) represented on a chart of the form 
\vspace{.1cm} 
\begin{equation}
  \label{ChartNearM5Singularity}
  \mathbb{R}^{1,10} 
    \setminus 
  \mathbb{R}^{1,5}
  \;\;
    \underset{\color{orangeii} \mathrm{diff}}{\simeq}
  \;\;
  \mathbb{R}^{1,5}
  \times
  \big(
    \mathbb{R}^5 \setminus \{0\}
  \big)
  \;\;
    \underset{\color{orangeii} \mathrm{diff}}{\simeq}
  \;\;
  \mathbb{R}^{1,5}
  \times
  \mathbb{R}_{> 0}
  \times 
  S^4
\end{equation}
with its canonical coordinate functions
\begin{equation}
  \label{CanonicalCoordinatesOnPoincareChart}
  \begin{tikzcd}[
    sep=0pt
  ]
    X^a
    &:&
    \mathbb{R}^{1,5}
    \ar[r]
    &[20pt]
    \mathbb{R}
    &[10pt]
    \mbox{for $a \in \{0,1,\cdots, 5\}$}
    \\[-2pt]
    r
    &:&
    \mathbb{R}_{> 0}
    \ar[
      r,
      hook
    ]
    &
    \mathbb{R}
  \end{tikzcd}
\end{equation}
by the $\mathrm{AdS}_7$-metric in ``Poincar{\'e} coordinates'' (cf. \cite[\S 39.3.7]{Blau22})
plus the metric on the round $S^4$:
\vspace{1mm} 
\begin{equation}
  \label{MetricTensorForBlackM5}
  \begin{array}{ccl}
  \mathrm{d}s^2_{N \mathrm{M5}}
  &=&
  \tfrac{r^2}{N^{2/3}}
  \mathrm{d}s^2_{\mathbb{R}^{1,5}}
  \,+\,
  \tfrac{N^{2/3}}{r^2}
  \mathrm{d}r^2
  \;\;+\;\;
  \tfrac{N^{2/3}}{4}
  \,
  \mathrm{d}s^2_{S^4}
  \end{array}
\end{equation}

\vspace{1mm} 
\noindent (where $R_{N\mathrm{M5}}/2 :=  N^{1/3}/2$ is the radius of the 4-sphere in Planck units $2\pi^{1/3}\, \ell_{\! P}$, cf. \eqref{TheEinsteinEquation} below).
So the singular brane locus\footnote{
\label{OnTheSingularity}
  The locus $r = 0$ is not actually a curvature singularity of the near horizon geometry -- as essentially first highlighted by \cite[(3.12)]{GHT95} and manifest below in \eqref{CurvatureTwoForm} -- just a coordinate singularity of the Poincar{\'e} chart \eqref{MetricTensorForBlackM5} --- but it {\it is} a singularity of the  C-field flux $c\, \mathrm{dvol}_{S^4_{N \mathrm{M5}}}$\eqref{CFieldFluxDensityNearM5Singularity} {\it per} unit metric 4-volume $r^4 \, \mathrm{dvol}_{S^4_{N \mathrm{M5}}}$, witnessing $r = 0$ as the necessarily singular source of this flux.
}
 $\simeq \mathbb{R}^{1,5}$ is (or would be) at $r = 0$.
The C-field flux density $G_4$ supporting this is a multiple of the volume form on the $S^4$-factor pulled back to the chart along the projection map:
\begin{equation}
  \label{CFieldFluxDensityNearM5Singularity}
  G_4
  \;:=\;
  c 
  \,
  \mathrm{dvol}_{S^4_{N \mathrm{M5}}}
  \,\in\,
  \Omega^4_{\mathrm{dR}}
  \big(
    S^4
  \big)
  \xhookrightarrow{\phantom{--}}
  \Omega^4_{\mathrm{dR}}
  \big(
    \mathbb{R}^{1,5}
    \times 
    \mathbb{R}_{> 0}
    \times
    S^4
  \big)  
  \,,
\end{equation}

\vspace{2mm} 
\noindent for some prefactor $c$ which is determined, up to its sign, by the Einstein equations, see \eqref{ScaleFactorOfCFieldFlux} below, and determined including its sign by the existence of $\sfrac{1}{2}$BPS M5-immersions, see \eqref{ConditionForPositiveCharge} below.

\medskip

\noindent
For the near-horizon geometry \eqref{CanonicalCoordinatesOnPoincareChart} one says that:
\begin{itemize}[
  leftmargin=.7cm,
  topsep=2pt,
  itemsep=2pt
]
\item $r \to 0$ is the {\it horizon}, cf. footnote \ref{OnTheSingularity};

\item  $r \to \infty$ is the {\it conformal boundary} (e.g. \cite[p. 904]{Blau22}), 

at which 
$
  \underset{r \to \infty}{\mathrm{lim}}
  \big(
    \tfrac{1}{r^2}
    \mathrm{d}s^2_{N \mathrm{M5}}
  \big)
  \;=\;
  \mathrm{d}s^2_{\mathbb{R}^{1,5}}
$
is the Minkowski metric on $\mathbb{R}^{1,5}$ (and zero on $\mathbb{R}_{>0} \times S^4$).
\end{itemize} 
This makes it natural to identify the $\mathbb{R}^{1,5}$-factor at finite $r$ with the worldvolume of a probe M5-brane, to be called a {\it holographic M5-brane} (cf. the terminology of \cite{Gupta21}\cite{Gupta24}):

\smallskip

\noindent
{\bf Chart around a holographic M5-brane embedding.}
We pick a point
$
  s_{\mathrm{prb}} \,\in\, S^4
  \subset \mathbb{R}^5 \setminus \{0\}
$
to designate the direction in which we wish to consider a probe M5-brane worldvolume immersed into this background, at some coordinate distance $r_{\mathrm{prb}}$ from the M5 horizon (cf. \cite[(5.22)]{CKvP98}\cite[\S 8]{GM00} and 
\hyperlink{FigureBraneConfiguration}{Figure B}):
\begin{equation}
  \label{OrdinaryHolographicM5Embedding}
  \begin{tikzcd}[
    column sep=25pt,
    row sep = -2pt
  ]
    \mathllap{
    \scalebox{.7}{
      \color{darkblue}
      \bf
      \def\arraystretch{.9}
      \begin{tabular}{c}
        probe M5
        \\
        worldvolume
      \end{tabular}
    }
    }
    \mathbb{R}^{1,5}
    \ar[
      rr,
      "{ \phi }",
      "{
        \scalebox{.7}{
          \color{darkgreen}
          \bf
          embedding
        }
      }"{swap, yshift=-2pt}
    ]
    &&
    \mathbb{R}^{1,5}
    \times 
    \mathbb{R}_{> 0}
    \times
    S^4
    \\
    x
    &\longmapsto&
    \big(
      x,
      r_{\mathrm{prb}}
      ,
      s_{\mathrm{prb}}
    \big).
  \end{tikzcd}
\end{equation}
Around this point, we may pick a coordinate chart for $S^4$
$$
  \begin{tikzcd}[row sep=small, column sep=large]
    \{0\}
    \ar[r, "\sim"]
    \ar[d, hook]
    &
    \{s_{\mathrm{prb}}\}
    \ar[d,hook]
    \\
    \mathbb{D}^4
    \ar[
      r,
      hook,
      "{ \iota }"
    ]
    &
    S^4
  \end{tikzcd}
$$
on which we find globally defined co-frame forms $(E^i)_{i = 1}^4$ which are orthonormal for the round metric $\mathrm{d}s^2_{S^4}$ on $S^4$
and torsion-free with respect to the corresponding Levi-Civita connection:
\begin{equation}
  \label{CoframeOnChartOf4Sphere}
  \big(
  E^i \in 
  \Omega^1_{\mathrm{dR}}(\mathbb{D}^4)
  \big)_{i = 1}^4
  \,,
  \;\;
  \;\;\;
  \mathrm{such\; that}
  \;\;\;
  \mathrm{d}E^i 
  =
  \big(
    \iota^\ast \Omega_{S^4}^{i j} 
  \big)
  E_j
  \;\;\;
  \mbox{and}
  \;\;\;
  \iota^\ast 
  \mathrm{d}s^2_{S^4}
  \;=\;
  \sum_{i=6}^9
  E^i \otimes E^i
  \,,
\end{equation}
and such that
\begin{equation}
  \label{FourSphereVolumeFormOnChart}
  \iota^\ast 
  \mathrm{dvol}_{S^4_{N \mathrm{M5}}}
  \;=\;
  \tfrac{1}{4!}
  \,
  \epsilon_{i_1 \cdots i_4}
  \,
  E^{i_1} \cdots E^{i_4}
  \,.
\end{equation}
Using this, we obtain a contractible coordinate chart of the near horizon geometry \eqref{ChartNearM5Singularity}:
\begin{equation}
  \label{NeighbourhoodChartOfWorldvolume}
  \begin{tikzcd}[column sep=large]
    \mathbb{R}^{1,5}
    \times
    \mathbb{R}_{> 0 }
    \times
    \mathbb{D}^4
    \ar[
      r,
      hook,
      "{
        \mathrm{id}
        \times
        \iota
      }"
    ]
    &
    \mathbb{R}^{1,5}
    \times
    \mathbb{R}_{> 0 }
    \times
    S^4
    \,.
  \end{tikzcd}
\end{equation}
Since this is a neighborhood of the worldvolume submanifold \eqref{OrdinaryHolographicM5Embedding}, for the purpose of establishing its super-embedding it is sufficient to consider this chart.

\medskip

\noindent
{\bf Cartan geometry around the holographic M5.} On the chart \eqref{NeighbourhoodChartOfWorldvolume},
we evidently have the following coframe forms 
\begin{equation}
  \label{CoFrameAroundHolographicM5}
  \def\arraystretch{1.7}
  \begin{array}{clclcl}
    &
    E^a 
      &:=&
    \tfrac{r}{N^{1/3}}
    \;
    \mathrm{d}X^a
    &
    \scalebox{.8}{tangential}
    &
    a \in \{0,1,2,3,4,5\}
    \\[-11pt]
    \scalebox{.7}{
      \color{darkblue}
      \bf
      AdS
    }
    \\[-11pt]
    &
    E^{5'} 
    &:=&
    \tfrac{N^{1/3}}{r}
    \;
    \mathrm{d} r
    &
    \scalebox{.8}{radial}
    &
    a \in \{5'\}
    \\
    \scalebox{.7}{
      \color{darkblue}
      \bf
      S
    }
    &
    E^a 
    &=&
    \tfrac{N^{1/3}}{2}
    \delta^a_i E^i
    &
    \scalebox{.8}{transversal}
    &
    a \in \{6,7,8,9\}
    \;
    \mbox{via \eqref{CoframeOnChartOf4Sphere}}
    \,,
  \end{array}
\end{equation}
which are orthonormal for the metric \eqref{MetricTensorForBlackM5} in that
$
  \mathrm{d}s^2_{N \mathrm{M5}}
  \;=\;
  \eta_{a b}
  \,
  E^a \otimes E^b
  \,,
$
on this chart
and make the C-field flux density \eqref{CFieldFluxDensityNearM5Singularity}
appear as
\begin{equation}
  \label{FluxDensityOnChart}
  G_4 
  \;=\;
  \tfrac{c}{4!}
  \,
  \epsilon_{i_1 \cdots i_4}
  E^{i_1} \cdots E^{i_4}
  \,,
\end{equation}
for some constant $c$, determined in \eqref{ScaleFactorOfCFieldFlux} below.

\smallskip 
For the following formulas, we may focus on the AdS-factor in \eqref{CoFrameAroundHolographicM5}. Hence we let the indices $a_i, b_i$ run only through $\{0,1,2,3,4,5\}$, to be called the {\it tangential} index values -- namely tangential to the worldvolume \eqref{OrdinaryHolographicM5Embedding} -- with the further \textit{radial} index $5'$ carried along separately.

\smallskip

The torsion-free {\bf spin connection} on the AdS-factor of \eqref{CoFrameAroundHolographicM5},
characterized by
$$
    \mathrm{d}E^a 
    \,=\,
    \Omega^a{}_b\, E^b
    + 
    \Omega^a{}_{5'}\, E^{5'},
    \qquad
    \mathrm{d}E^{5'}
    \,=\,
    \Omega^{5'}{}_a \, E^a
    \,,
$$
is readily seen to have the following as only non-vanishing components:
\vspace{1mm} 
\begin{equation}
  \label{SpinConnection}
  \Omega^{a 5'}
  \,=\,
  -\Omega^{5' a}
  \;=\;
    -
    \tfrac{r}{N^{2/3}}
    \,
    \mathrm{d}X^a
    \;\;
    \scalebox{.8}{
      tangential $a$.
    }
\end{equation}
Therefore its {\bf curvature} 2-form has non-vanishing components
\begin{equation}
  \label{CurvatureTwoForm}
  \def\arraystretch{1.5}
  \begin{array}{ccl}
  R^{a 5'}
  &=&
  \mathrm{d}
  \Omega^{a 5'}
  \;=\;
  -
  \tfrac{1}{N^{2/3}}
  \,
  \mathrm{d}r
  \,
  \mathrm{d}X^a
  \;=\;
  -
  \tfrac{1}{N^{2/3}}
  \,
  E^{5'}\, E^a
  \\
  &=&
  +
  \tfrac{1}{N^{2/3}}
  E^{a}\, E^{5'}
  \\
  R^{a_1 a_2}
  &=&
  -
  \Omega^{a_1}{}_{5'}
  \,
  \Omega^{5' a_2}
  \;=\;
  +
  \tfrac{r^2}{N^{4/3}}
  \mathrm{d}X^{a_1}
  \,
  \mathrm{d}X^{a_2}
  \\
  &=&
  +
  \tfrac{1}{N^{2/3}}
  \,
  E^{a_1} \, E^{a_1}
  \,.
  \end{array}
\end{equation}
Hence the {\bf Riemann tensor}
has non-vanishing components (cf. our normalization of $\delta$ in \ref{KroneckerSymbol})
\begin{equation}
  \label{RiemannTensor}
  \def\arraystretch{1.6}
  \begin{array}{ccl}
  R^{a 5'}{}_{b 5'}
  &=&
  +\tfrac{1}{N^{2/3}}
  \,
  \delta^{a}_b
  \\
  R^{a_1 a_2}{}_{b_1 b_2}
  &=&
  +
  \tfrac{2}{N^{2/3}}
  \,
  \delta^{a_1 a_2}{}_{b_1 b_2}
  \,,
  \end{array}
\end{equation}
and the {\bf Ricci tensor} is proportional to the metric tensor, as befits an Einstein manifold:
\begin{equation}
  \label{RicciTensor}
   \def\arraystretch{1.4}
  \begin{array}{ccl}
  \mathrm{Ric}_{a_1 a_2}
  &=&
  R_{a_1}{}^b{}_{b a_2}
  +
  R_{a_1}{}^{5'}{}_{5' a_2}
  \\
  &=&
  -
  \tfrac{(6-1)}{N^{2/3}}
  \,
  \eta_{a_1 a_2}
  -
  \tfrac{1}{N^{2/3}} 
  \,
  \eta_{a_1 a_2}
  \\
  &=&
  -
  \tfrac{6}{N^{2/3}}
  \,
  \eta_{a_1 a_2}
  \\[+6pt]
  \mathrm{Ric}_{5' 5'}
  &=&
  R^{5'}{}^a{}_{a 5'}
  \\
  &=&
  -
  \tfrac{6}{N^{2/3}}
  \,,
  \end{array}
\end{equation}
similar to the Ricci tensor of the 4-sphere factor (e.g. \cite[Cor. 11.20]{Lee18}):
\begin{equation}
  \label{RicciTensorOfFourSphere}
  \mathrm{Ric}_{i_1 i_2}
  \;=\;
  +
  \tfrac{3}{N^{2/3}/4}
  \, 
  \delta_{i_1 i_2}
  \,.
\end{equation}

Therefore the  {\bf Einstein equation} with source the C-field flux density \eqref{FluxDensityOnChart} has non-vanishing components (cf. \cite[(174-5)]{GSS24-SuGra})
\newpage 
\begin{equation}
  \label{TheEinsteinEquation}
  \def\arraystretch{1.4}
  \begin{array}{cccl}
    &
    \mathrm{Ric}_{a_1 a_2}
    &=&
    -
    \tfrac{1}{12}
    \tfrac{1}{12}
    (G_4)_{i_1 \cdots i_4}
    (G_4)^{i_1 \cdots i_4}
    \,
    \eta_{a_1 a_2}
    \\
    \Leftrightarrow
    &
    -
    \tfrac{6}{N^{2/3}}
    \,
    \eta_{a_1 a_2}
    &=&
    -
    \tfrac{1}{6}
    c^2
    \,
    \eta_{a_1 a_2}
    \\[+9pt]
    &
    \mathrm{Ric}_{5' 5'}
    &=&
    -
    \tfrac{1}{12}
    \tfrac{1}{12}
    (G_4)_{i_1 \cdots i_4}
    (G_4)^{i_1 \cdots i_4}
    \,
    \eta_{5' 5'}
    \\
    \Leftrightarrow
    &
    -
    \tfrac{6}{N^{2/3}}
    &=&
    -
    \tfrac{1}{6}
    c^2
    \\[+9pt]
    &
    \mathrm{Ric}_{i_1 i_2}
    &=&
    \tfrac{1}{12}
    (G_4)_{i_1 \, j_1 j_2 j_3}
    (G_4)_{i_2}{}^{j_1 j_2 j_3}
    \,-\,
    \tfrac{1}{12}
    \tfrac{1}{12}
    (G_4)_{j_1 \cdots j_4}
    (G_4)^{j_1 \cdots j_4}
    \,
    \delta_{i_1 i_2}    
    \\
    \Leftrightarrow
    &
    +
    \tfrac{3}{
      N^{2/3}
      \color{purple}{/4}
    } \, \delta_{i_1 i_2} 
    &=&
    \tfrac{1}{2}
    c^2\, \delta_{i_1 i_2}
    \,-\,
    \tfrac{1}{6}
    c^2 \, \delta_{i_1 i_2}
    \\
    &
    &=&
    +
    \tfrac{1}{3}
    c^2
    \, 
    \delta_{i_1 i_2}
  \end{array}
\end{equation}
thus is solved \footnote{NB: The last line in \eqref{TheEinsteinEquation} is the reason that the radius of $S^4$ has to be {\it half} that of $\mathrm{AdS}_7$ in \eqref{MetricTensorForBlackM5}.}  by
\vspace{2mm} 
\begin{equation}
  \label{ScaleFactorOfCFieldFlux}
  c 
  \;=\; 
  \pm
  \frac{6}
  {N^{1/3}
  }
  \,,
  \hspace{.6cm}
  \underset{
    \scalebox{.7}{
      \eqref{CFieldFluxDensityNearM5Singularity}
    }
  }{\mbox{hence}} 
  \hspace{.5cm}
  G_4
  \;=\;
  \pm 
  \tfrac{6}{N^{1/3}}
  \mathrm{dvol}_{S^4_{N \mathrm{M5}}}
  \,.
\end{equation}
At this point, both of the signs in \eqref{ScaleFactorOfCFieldFlux} are equally admissible, but we see below in Rem. \ref{CriticalDistanceForHolographicM5} that the + sign is singled out by the existence of holographic M5-brane embedding.

\medskip

\noindent
{\bf Super-Cartan geometry near M5 horizons.} In now passing to the super-spacetime enhancement of $\mathrm{AdS}_7 \times S^4$, we use the notation and conventions for 6d spinors among 11d spinors form 
\cite[\S 3.2]{GSS24-FluxOnM5}, recalled in \S\ref{SpinorsOnM5Branes}.
In particular, we denote the Minkowski frame of Clifford generators adapted to the 1+5+1+4 dimensional split of the tangent space to $\mathrm{AdS}_7 \times S^4$ in Poincar{\'e} coordinates \eqref{ChartNearM5Singularity} by
\cite[(85)]{GSS24-FluxOnM5}
\begin{equation}
  \label{AdaptedCliffordGenerators}
  \hspace{-6.8cm}
  \begin{tikzcd}[
    column sep=3pt,
    row sep=0pt
  ]
    &&
    \mathclap{
      \hspace{20pt}
      \overbrace{
      \phantom{-------------}
      }^{
        \mathbb{R}^{1,5}
      }
    }
    &&&&
    \overbrace{
      \phantom{--}
    }^{\mathbb{R}_{>0}}
    &&
    \mathclap{
      \hspace{15pt}
      \overbrace{
      \phantom{\phantom{--------}}
      }^{
        \mathbb{D}^{4}
      }
    }
    \\[-10pt]
    \Gamma_0
    \ar[
      d,
      |->,
      start anchor={[yshift=+1pt]},
      end anchor={[yshift=-1pt]},
      bend right=90,
      shift right=0pt,
      "{\color{darkgreen} \bf 
        \def\arraystretch{.9}
        \begin{array}{l}
          \phantom{+}\overline{P}(-)P
          \\
          +
          P(-)\overline{P}
        \end{array}
      }"{swap, pos=.5}
    ]
    &
    \Gamma_1
    &
    \Gamma_2
    &
    \Gamma_3
    &
    \Gamma_4
    &
    \Gamma_5
    &
    \Gamma_{5'}
    &
    \Gamma_{6}
    &
    \Gamma_{7}
    &
    \Gamma_{8}
    &
    \Gamma_{9}
    &
    \mathrlap{
       \in 
      \mathrm{Pin}^+(1,0)
      \,\subset\,
      \mathrm{End}_{\mathbb{R}}(\mathbf{32})
    }
    \\
    \gamma_0
    &
    \gamma_1
    &
    \gamma_2
    &
    \gamma_3
    &
    \gamma_4
    &
    \gamma_5
    &
    &&&&&
    \mathrlap{
    \in 
    \mathrm{Pin}^+(1,5)
    \,\subset\,
    \mathrm{End}_{\mathbb{R}}
    \big(
      2\cdot \mathbf{8}_+
      \oplus
      2 \cdot \mathbf{8}_-
    \big)
    \,,
    }
  \end{tikzcd}
\end{equation}
where \cite[(86-90)]{GSS24-FluxOnM5}
\begin{equation}
  \label{TheHalfProjectionOperator}
  \def\arraystretch{1.3}
  \begin{array}{l}
  P \Psi
  \;:=\;
  \tfrac{1}{2}
  \big(
    1 + 
    \Gamma_{5'6789}
  \big)
  \Psi
  \\
  \overline{P} \Psi
  \;:=\;
  \tfrac{1}{2}
  \big(
    1 -
    \Gamma_{5'6789}
  \big)
  \Psi
  \,,
  \end{array}
  \hspace{.7cm}
  \def\arraystretch{1.3}
  \begin{array}{l}
    P(\mathbf{32})
    \;\simeq\;
    2\cdot \mathbf{8}_+
    \,\in\,
    \mathrm{Rep}_{\mathbb{R}}\big(
      \mathrm{Spin}(1,5)
    \big)
    \\
    \overline{P}(\mathbf{32})
    \;\simeq\;
    2\cdot \mathbf{8}_-
    \,\in\,
    \mathrm{Rep}_{\mathbb{R}}\big(
      \mathrm{Spin}(1,5)
    \big)
    \,.
  \end{array}
\end{equation}

\medskip

\noindent
{\bf Super-Cartan geometry around holographic M5s.}
We now obtain the super-extension of the above Cartan geometry \eqref{CoFrameAroundHolographicM5}.
Inserting the bosonic AdS Cartan geometry \eqref{CoFrameAroundHolographicM5} \eqref{SpinConnection} into the initial conditions for WZT gauge \eqref{GaugeConditions}
means that
\begin{equation}
  \label{InitialValuesInAdS}
  \def\arraystretch{1.9}
  \begin{array}{lcrcll}
    \big(
      E^{(0)}
    \big)^a
    &=&
    \tfrac{r}{N^{1/3}}
    \,
    \mathrm{d}X^a
    &
      \;\;\;\;\;
      \Leftrightarrow
      \;\;\;\;\;
    &
    \Big(
      \big(
        E^{(0)}
      \big)^a_r \,=\,
      \tfrac{r}{N^{1/3}}
      ,
      &
      \big(
        E^{(0)}
      \big)^a_\rho \,=\,
      0
      \;    
    \Big)
    \\
    \big(
      E^{(0)}
    \big)^{5'}
    &=&
    \tfrac{N^{2/3}}{r}
    \,
    \mathrm{d}X^{5'}
    &
      \;\;\;\;\;
      \Leftrightarrow
      \;\;\;\;\;
    &
    \Big(
      \big(
        E^{(0)}
      \big)^{5'}_r \,=\,
      \tfrac{N^{2/3}}{r}
      ,
      &
      \big(
        E^{(0)}
      \big)^{5'}_\rho \,=\,
      0
      \;
    \Big)
    \\
    \big(
      \Psi^{(0)}
    \big)^\alpha
    &=&
    \mathrm{d}\Theta^\alpha
    &
      \;\;\;\;\;
      \Leftrightarrow
      \;\;\;\;\;
    &
    \Big(
      \big(
        \Psi^{(0)}
      \big)^\alpha_r
      \,=\,
      0
      ,
      &
      \big(
        \Psi^{(0)}
      \big)^\alpha_\rho
      \,=\,
      \delta^\alpha_\rho
    \Big)
    \\
    \big(
      \Omega^{(0)}
    \big)^{5' a}
    &=&
    \tfrac{r}{N^{2/3}} 
    \, 
    \mathrm{d}X^a
    &
      \;\;\;\;\;
      \Leftrightarrow
      \;\;\;\;\;
    &
    \Big(
      \big(
        \Omega^{(0)}
      \big)^{5' a}_r
      \,=\,
      \tfrac{r}{N^{2/3}}
      ,
      &
      \big(
        \Omega^{(0)}
      \big)^{5' a}_\rho
      \,=\,
      0\;
      \Big)
      \,.
  \end{array}
\end{equation}
Moreover, inserting the flux density \eqref{FluxDensityOnChart} into the super-field strength components \eqref{HK} yields
\begin{equation}
  \label{OddComponentsOfFieldStrengthsOnAdS}
  \def\arraystretch{1.4}
  \begin{array}{ccl}
    H_a
    &=&
    -
    \tfrac{c}{12}
    \Gamma_a
    \,
    \Gamma_{6789}
    \\
    H_{5'}
    &=&
    -
    \tfrac{c}{12}
    \,
    \Gamma_{5'6789}
    \\
    H_i 
    &=&
    \tfrac{c}{6}
    \tfrac{1}{3!}
    \,
    \epsilon_{i \, i_1 i_2 i_3}
    \Gamma^{i_1 i_2 i_3}
    \\[6pt]
    K^{a_1 a_2}
    &=&
    -
    \tfrac{c}{6}
    \Gamma^{a_1 a_2}
    \Gamma_{6789}
    \\
    K^{5' a}
    &=&
    +
    \tfrac{c}{6}
    \,
    \Gamma^{a}
    \Gamma_{5'6789}
    \\
    K^{i_1 i_2}
    &=&
    -\tfrac{c}{6}
    \epsilon^{ i_1 i_2\, i_3 i_4 }
    \Gamma_{i_3 i_4}
    \\
    K^{i a}
    &=&
    0
    \\
    K^{5' i}
    &=&
    0\,.
  \end{array}
  \hspace{1.5cm}
  \mbox{for}
  \hspace{1cm}
  \def\arraystretch{1.2}
  \begin{array}{l}
    a_i \in \{0,1,2,3,4,5\}
    \\
    \,i_i \in \{6,7,8,9\}
  \end{array}
\end{equation}
From this, we now obtain the super-field extension of the supergravity fields on $\mathrm{AdS}_7 \times S^4$. 


\begin{example}[\bf $\mathrm{AdS}_7 \times S^4$ super-fields to first $\Theta$-order]
\label{SuperFieldsToFirstThetaOrder}
Based on the 0th-order expressions \eqref{InitialValuesInAdS},
we obtain to first order in $\Theta$ (similar to \cite[(3.11)]{dWPPS98}):
\begin{equation}
  \label{SuperfieldsToFirstOrder}
  \hspace{-5mm} 
  \def\arraystretch{1.8}
  \begin{array}{lcrcrcrlll}
  E^a 
  &=&
  \tfrac{r}{N^{1/3}}
  \,
  \mathrm{d}X^a
  &+&
  \big(\,
    \overline{\Theta}
    \,\Gamma^a\,
    \mathrm{d}\Theta
  \big)
  &+&
  \mathcal{O}\big(\Theta^2\big)
  &
  \proofstep{
    by \eqref{RecursionRelationForGraviton}
  }
  \\
  E^{5'}
  &=&
  \tfrac{N^{\fixed{2}/3}}{r}
  \,
  \mathrm{d}X^{5'}
  &+&
  \big(\,
    \overline{\Theta}
    \,\Gamma^{5'}\,
    \mathrm{d}\Theta
  \big)
  &+&
  \mathcal{O}\big(\Theta^2\big)
  &
  \proofstep{
    by \eqref{RecursionRelationForGraviton}
  }
  \\
  \Omega^{5' a}
  &=&
  \tfrac{r}{N^{2/3}}
  \,
  \mathrm{d}X^a
  \,
  &+&
  \tfrac{c}{6}
  \,
  \big(\,
    \overline{\Theta}
    \,
    \Gamma^{a}
    \,
    \Gamma_{5'6789}
    \,
    \mathrm{d}\Theta
  \big)
  &+&
  \mathcal{O}\big(\Theta^2\big)
  &
  \proofstep{
    by
    \eqref{RecursionForSpinConnection},
    \eqref{OddComponentsOfFieldStrengthsOnAdS}
    \&
    \eqref{GaugeConditions}
  }
  \\
  \Psi^\alpha
  &=&
  \mathrm{d}\Theta^\alpha
  &+&
  \big(
    -
    \tfrac{1}{2}
    \tfrac{ r }{ N^{2/3} }
    (
      \Gamma_{5' a}
      \Theta
    )^\alpha
    -
    \tfrac{c}{12}
    \fixed{
      \tfrac{r}{N^{1/3}}
    }
    (\Gamma_a \Gamma_{6789}\Theta)^\alpha
  \big)
  \,
  \mathrm{d}X^a
  \\
  &&
  &+&
  -\tfrac{c}{12}
  \fixed{
    \tfrac
      {N^{2/3}}
      {r}
  }
  (
    \Gamma_{5'6789}
    \Theta
  )^\alpha
  \;
  \mathrm{d}X^{5'}
  \\
  &&
  &+&
  \tfrac{c}{6}
  \tfrac{1}{3!}
  \epsilon_{i \, i_1 i_2 i_3}
  (
    \Gamma^{i_1 i_2 i_3}
    \Theta
  )^\alpha
  \,
  \mathrm{d}X^i
  &+&
  \mathcal{O}\big(
    \Theta^2
  \big)
  &
  \proofstep{
    by
    \eqref{RecursionRelationForGravitino}
    \&
    \eqref{OddComponentsOfFieldStrengthsOnAdS},
  }
  \end{array}
\end{equation}
where $a \in \{0,\!\cdots\!,\!5\}$.
\end{example}


\subsection{Holographic M5 immersion}
\label{TheHolographicM5Immersion}

With the background super-spacetime in hand (\S\ref{SuperAdS}), we are ready to inspect the $\sfrac{1}{2}$BPS super-immersions of holographic M5-branes.

\smallskip

Our main result here is Thm. \ref{ExistenceOfHolographicM5BraneProbes}, which says that the evident super-immersion of an M5-brane worldvolume into the Minkowski-part of the Poincar{\'e} chart of the near-horizon super-geometry of $N$ black M5-branes is $\sfrac{1}{2}$BPS (hence is a ``super-embedding'').

\medskip 
\noindent
{\bf $\sfrac{1}{2}$BPS super-immersions.}
Recall (e.g. \cite[p. 27]{Varadarajan04}, cf. \cite[Rem. 2.10, Def. 2.18]{GSS24-FluxOnM5}) that:
\begin{definition}[\bf Super-immersions]
\label{SuperImmersion}
A map of supermanifolds 
(e.g. \cite[Ex. 2.13]{GSS24-SuGra})
\begin{equation}
  \label{GeneralSuperImmersion}
  \begin{tikzcd}
    \scalebox{.7}{
      \color{darkblue}
      \bf
      \def\arraystretch{.7}
      \begin{tabular}{c}
        super-
        \\
        worldvolume
      \end{tabular}
    }
    \Sigma^{1,p \,\vert\, \mathbf{n}}
    \ar[
      rrr,
      "{
        \phi
      }",
      "{
        \scalebox{.7}{
          \color{darkgreen}
          \bf
          immersion
        }
      }"{swap}
    ]
    &&&
    X^{1,d \,\vert\, \mathbf{N} }
    \scalebox{.7}{
      \color{darkblue}
      \bf
      \def\arraystretch{.7}
      \begin{tabular}{c}
        super-
        \\
        spacetime
      \end{tabular}
    }
  \end{tikzcd}
\end{equation}
is a {\it super-immersion}
if it induces injections on all super-tangent spaces 
$$
  \underset{
    \scalebox{.7}{$
    \sigma \in \bosonic{\Sigma}
    $}
  }{\forall}
  \hspace{.8cm}
  \begin{tikzcd}
    \mathbb{R}^{1,p \vert \mathbf{n}}
    \ar[
      rr,
      hook,
    ]
    \ar[
      d,
       "{ \sim }"{sloped}
    ]
    &&
    \mathbb{R}^{1,d \vert \mathbf{N}}
    \ar[d,"{ \sim }"{sloped}]
    \\
    T_{\sigma}\Sigma^{1,p \,\vert\, \mathbf{n}}
    \ar[
      rr,
      "{
        \mathrm{d}\phi_\sigma
      }"
    ]
    &&
    T_{\phi(\sigma)}
    X^{1,d \,\vert\, \mathbf{N} }
    \,.
  \end{tikzcd}
$$
\end{definition}
We say, following  \cite[\S 2.2]{GSS24-FluxOnM5}, that:
\begin{definition}[\bf $\sfrac{1}{2}$BPS super-immersions]
\label{HalfBPSSuperImmersion}
A super-immersion $\phi$ \eqref{GeneralSuperImmersion} is {\it $\sfrac{1}{2}$BPS} if for a linear projection operator $P$
from the target super-space onto the ``{\it tangential}\,'' worldvolume super-dimensions (with $\overline{P} := 1-P$ the ``{\it transversal}\,'' projection), projecting onto the fixed locus of a $\mathrm{Pin}^{+}(1,d)$-element (a {\it $p$-brane involution} \cite[Def. 4.4]{HSS19})

\vspace{-.3cm}
\begin{equation}
  \label{GeneralSuperProjectionOperator}
  \adjustbox{
    raise=-6pt
  }{
   \begin{tikzcd}
    \mathbb{R}
      ^{
      1,d \,\vert\, \mathbf{N}
      }
    \ar[
      r,
      ->>
    ]
    \ar[
      rr,
      rounded corners,
      to path={
           ([yshift=+00pt]\tikztostart.north)  
        -- ([yshift=+08pt]\tikztostart.north)
        -- node[]{
          \scalebox{.8}{
            \colorbox{white}{
              $P$
            }
          }
        }
           ([yshift=+08pt]\tikztotarget.north)
        -- ([yshift=-00pt]\tikztotarget.north)
      }
    ]
    &
    \mathbb{R}
      ^{1,p \,\vert\, \mathbf{n}}
    \ar[
      r,
      hook
    ]
    &
    \mathbb{R}^{
      1,d \,\vert\, \mathbf{N}
    }
    \,,
  \end{tikzcd}
  }
\end{equation}
there exists an orthonormal local co-frame field $(E,\Psi)$ \eqref{LocalCartanConnection} on $X$ which is 
{\it super-Darboux} with respect to $\phi$ in that:

\begin{itemize}[
  leftmargin=.8cm,
  topsep=1pt,
  itemsep=2pt
]
\item[\bf (i)]
 the tangential coframe pulls back to a local coframe field on $\Sigma$:
 \begin{equation}
   \label{PullbackOfTangentialCoframeBeingCoframe}
   (e,\psi)
   \;:=\;
   \phi^\ast\big(
     PE,\, P\Psi
   \big)
   \mathrlap{
   \;\;\;
   \mbox{
     is a coframe field
   }
   }
 \end{equation}

\item[\bf (ii)] the transversal bosonic coframe field pulls back to zero 
\begin{equation}
  \label{PullbackOfTransversalBosonicCoframeVanishes}
  \phi^\ast \overline{P} E
  \;=\;
  0
\end{equation}

\item[\bf (iii)] the transversal fermionic coframe field pulls back to
\begin{equation}
  \label{ExistenceOfShearMap}
  \phi^\ast 
  \overline{P}\Psi
  \;=\;
  \mathrm{Sh}
  \cdot \psi
\end{equation}
for some {\it fermionic shear} field $\mathrm{Sh}$ on $\Sigma$, i.e. pointwise valued in $\mathrm{Spin}(d-p)$-equivariant linear maps
\vspace{1mm} 
\begin{equation}
  \label{EquivarianceOfShearMap}
  \underset{
    \scalebox{.7}{$
      \sigma
      \in \bosonic{\Sigma}
    $}
  }{\forall}\hspace{.4cm}
  \mathrm{Sh}_\sigma
  \;:\;
  \mathbf{n} \simeq P\mathbf{N}
  \xrightarrow{\quad}
  \overline{P}\mathbf{N}
  \,.
\end{equation}
\end{itemize}
\end{definition}
\begin{example}[\bf M5 super-immersions]
  If the projection operator \eqref{GeneralSuperProjectionOperator} is that from \eqref{SuperProjectionOperator},
  then we have the case of {\it M5-brane super-immersions} (\cite[\S 3]{GSS24-FluxOnM5}, going back to \cite{HoweSezgin97b}).
\end{example}

Remarkably (cf. Rem. \ref{AbsenceOfFluxedHolographicM5branes} below), the shear map \eqref{EquivarianceOfShearMap} turns out to encode the flux density of any higher gauge field on the worldvolume $\Sigma$. If this vanishes (as it does in the example holographic M5-branes presented in a moment) the definition simplifies to:

\begin{definition}[\bf Fluxless $\sfrac{1}{2}$BPS super-immersion]
\label{FluxlessHalfBPSImmersion}
A $\sfrac{1}{2}$BPS super-immersion (Def. \ref{HalfBPSSuperImmersion})
is {\it fluxless} if
its super-Darboux coframes $(E,\Psi)$ are characterized more simply by
\begin{equation}
  \label{ConditionOnFluxlessBPSImmersion}
  \def\arraystretch{1.4}
  \begin{array}{rccl}
  \scalebox{.8}{
    \color{darkblue}
    \bf
    Tangential condition:
  }
  &
  (e,\psi)
  &:=&
  \phi^\ast\big(
    P E
    ,\,
    P\Psi
  \big)
  \;\;
  \mbox{
    is a coframe field
  }
  \\
  \scalebox{.8}{  
    \color{darkblue}
    \bf
    Transversal condition:
  }
  &
  0
  &=&
  \phi^\ast\big(\hspace{.7pt}
    \overline{P}E
    ,\,
    \overline{P}\Psi
  \big)
  \,.
  \end{array}
\end{equation}
\end{definition}
This is manifestly super-analogous to classical Darboux coframe theory (recalled in \cite[\S 2.1]{GSS24-FluxOnM5}) and this is what we establish for holographic M5-branes in Thm. \ref{ExistenceOfHolographicM5BraneProbes} below.

\begin{remark}[\bf Relation to the literature]
\label{RelationOfSuperembeddingToTheLiterature}
$\,$

  \noindent {\bf (i)} The conditions \eqref{PullbackOfTangentialCoframeBeingCoframe} and \eqref{PullbackOfTransversalBosonicCoframeVanishes} on a $\sfrac{1}{2}$BPS super-immersion are (for more details see \cite[Rem. 2.23]{GSS24-FluxOnM5}) a slight strengthening of the ``super-embedding'' condition used by \cite{Sorokin00}, following \cite{BPSTV95}\cite{HoweSezgin97a}\cite{HoweSezgin97b}\cite{HRS98}. 

  \vspace{1mm} 
    \noindent {\bf (ii)} In particular, $(e,\psi)$ being a super-coframe field \eqref{PullbackOfTangentialCoframeBeingCoframe}
  entails that $\phi^\ast P E =: e$ has no component along $\psi$, which is the ``basic super-embedding condition'' of \cite[(6)]{HoweSezgin97a}\cite[(2)]{HRS98}, earlier known as the ``geometrodynamical condition'' \cite[(2.23)]{BPSTV95}.

  \vspace{1mm} 
  \noindent {\bf (iii)}  The difference is that more generally one may allow the pullback of the transversal gravitino to have also a bosonic component $\tau$ (cf. \cite[Rem. 3.13]{GSS24-FluxOnM5}), generalizing \eqref{ExistenceOfShearMap} to
  \vspace{1mm} 
  \begin{equation}
    \label{TheTauComponent}
    \phi^\ast
    \overline{P}\Psi
    \;=\;
    \mathrm{Sh} \cdot \psi
    \,+\,
    \tau_a e^a
    \,.
  \end{equation}
  However, only for $\tau = 0$:
  \begin{itemize}[
    leftmargin=.8cm,
    topsep=1pt,
    itemsep=2pt
  ]
  \item[\bf (a)] does the worldvolume Bianchi identity $\mathrm{d}\, H_3 \,=\, \phi^\ast G_4$
  actually follow without fermionic corrections \cite[Rem. 3.19]{GSS24-FluxOnM5},
  \item[\bf (b)] is the super-embedding really $\sfrac{1}{2}$BPS \cite[Rem. 3.13]{GSS24-FluxOnM5}.
  \end{itemize}
    In any case, for the example of holographic M5-branes obtained in Thm. \ref{ExistenceOfHolographicM5BraneProbes} below we do have $\tau = 0$ (and no other explicit examples seem to have been discussed in the literature before).
\end{remark}

\medskip

\noindent
{\bf The holographic M5 super-immersion.} We may now define and analyze the super-geometric enhancement of the immersion of M5-worldvolumes parallel and near to the horizon of their own black brane incarnation (cf. again \hyperlink{FigureBraneConfiguration}{Figure B}):

\begin{definition}[\bf Holographic super-immersion]
\label{HolograhicSuperImmersion}
We extend the holographic immersion \eqref{OrdinaryHolographicM5Embedding} to a super-immersion (Def. \ref{SuperImmersion}) 
in an evident way:
\begin{equation}
  \label{TheSuperImmersion}
  \begin{tikzcd}[
    column sep=-5pt,
    row sep=-6pt
  ]
    \mathbb{R}^{
      1,5 \,\vert\, 2 \cdot \mathbf{8}
    }
    \ar[
      rr,
      "{\phi}"
    ]
    &\hspace{30pt}&
    \mathbb{R}^{1,5}
    &\times& 
    \mathbb{R}_{> 0}
    &\times&
    \mathbb{D}^4
    &\times& 
    \mathbb{R}^{0 \vert 2 \cdot \mathbf{8}_+}
    &\times& 
    \mathbb{R}^{0 \vert 2 \cdot \mathbf{8}_-}
    &
    \;\;\;\;\;
    \subset
    \; \mathbb{R}^{1,10 \,\vert\, \mathbf{32}}
    \\
    x^a
    &\overset{\phi^\ast}{\longmapsfrom}&
    X^a
    \\
    r_{\mathrm{prb}}
    &\overset{\phi^\ast}{\longmapsfrom}&
    &&
    r
    \\
    s^i_{\mathrm{prb}}
    &\overset{\phi^\ast}{\longmapsfrom}&
    &&
    &&
    X^i
    \\
    \theta^\alpha
    &\overset{\phi^\ast}{\longmapsfrom}&
    &&
    &&
    &&
    \big(P\Theta\big)^\alpha
    \\
    0
    &\overset{\phi^\ast}{\longmapsfrom}&
    &&
    &&
    &&
    &&
    \big(
      \overline{P}\Theta
    \big)^\alpha
    \,,
  \end{tikzcd}
\end{equation}
where $P = \tfrac{1}{2}\big(1 + \Gamma_{5'6789}\big)$, cf. \eqref{ProjectorOn6dSpinRep},
which defines the super-coordinates on the worldvolume to be the projected pullbacks of those of target space,
and where 
\vspace{1mm} 
$$
  r_{\mathrm{prb}}
  ,\,
  s^i_{\mathrm{prb}}
  \;\in\;
  \mathbb{R}
  \xhookrightarrow{\quad}
  C^\infty(\mathbb{R}^{1,5})
  \xhookrightarrow{\quad}
  C^\infty(\mathbb{R}^{1,5 \,\vert\, 2 \cdot \mathbf{8}})
  \hspace{.6cm}
  \mbox{for}
  \hspace{.6cm}
  i \in \{6,7,8,9\}
$$
are the chosen constants parametrizing the transverse position of the immersion (cf. \hyperlink{FigureBraneConfiguration}{Figure B}).
\end{definition}

\newpage 
\begin{lemma}[\bf M5-Worldvolume super-fields to first $\theta$-order]
  \label{WorldvolumeSuperFieldsToFirstOrder}
 $\,$
  
\noindent {\bf (i)} Under the holographic super-immersion \eqref{TheSuperImmersion}, the first-order super-fields \eqref{SuperfieldsToFirstOrder} pull back to
  \begin{equation}
    \label{WorldvolumeFieldsToFirstThetaOrder}
    \hspace{-6mm}
    \def\arraystretch{1.6}
    \begin{array}{rcccrcrcl}
      e^a 
      &:=&
      \phi^\ast E^a
      &=&
      \tfrac
        { r_{\mathrm{prb}} }
        { N^{1/3} }
      \,
      \mathrm{d}x^a
      &+&
      \big(\hspace{.8pt}
        \overline{\theta}
        \,\gamma^a\,
        \mathrm{d}\theta
      \big)
      &+&
      \mathcal{O}\big(
        \theta^2
      \big)
      \\
      &&
      \phi^\ast
      E^{5'}
      &=&
      &&
      &&
      \mathcal{O}\big(\theta^2\big)
      \\
      \psi^\alpha
      &:=&
      \phi^\ast (P \Psi)^\alpha
      &=&
      &&      
      \mathrm{d}\theta^\alpha
      &+&
      \mathcal{O}\big(
        \theta^2
      \big)
      \\
      &&
      \phi^\ast\big(
        \hspace{.7pt}
        \overline{P}\Psi
      \big)^\alpha
      &=&
      \big(
        \tfrac{1}{2}
        \tfrac
          {r_{\mathrm{prb}}}
          { N^{2/3} }
        -
        \tfrac{c}{12}
        \fixed{
          \tfrac
            {r_{\mathrm{prb}}}
            {N^{1/3}}
        }
      \big)
      (
        \Gamma_{a5'}
        \theta
      )^\alpha
      \,
      \mathrm{d}x^a
      &&
      &+&
      \mathcal{O}\big(
        \theta^2
      \big)
      \\
      e^{b_1}
      \SecondFundamentalForm
        ^{ 5' }_{ b_1 b_2 }
      \,+\,
      \psi^\beta
      \SecondFundamentalForm
        ^{ 5' }_{ \beta\,  b_2 }
      &:=&
      \phi^\ast
      \Omega^{5'}{}_{b_2}
      &=&
      \tfrac
        { r_{\mathrm{prb}} }
        { N^{2/3} }
      \delta_{b_1 b_2}
      \,
      \mathrm{d}x^{b_1} 
      &+&
      \tfrac{1}{6}\big(\hspace{.8pt}
        \overline{\theta}
        \,\gamma^a\,
        \mathrm{d}\theta
      \big)
      &+&
      \mathcal{O}\big(\theta^2\big)
      \mathrlap{\,.}
    \end{array}
  \end{equation}
 \noindent {\bf (ii)} The 2nd fundamental super-form
  $\SecondFundamentalForm^{5'}$
  {\rm (cf. \cite[(67)]{GSS24-FluxOnM5})}
  has the following components:
  \begin{equation}
    \label{ScndFundamentalFormToFirstOrder}
    \def\arraystretch{1.5}
    \begin{array}{cclcl}
      \SecondFundamentalForm
        ^{5'}
        _{b_1 b_2}
      &=&
      \tfrac{1}{N^{1/3}}
      \delta_{b_1 b_2}
      &+&
      \mathcal{O}
      \big(
        \theta^2
      \big)
      \\
      \SecondFundamentalForm
        ^{5'}
        _{\beta \, b_2}
      &=&
      \big(
        \tfrac{1}{N^{1/3}}
        \,-\,
        \tfrac{c}{6}
      \big)
      \big(\hspace{.8pt}
        \overline{\theta}\,\gamma_{b_1}
      \big)_\beta
      &+&
      \mathcal{O}
      \big(
        \theta^2
      \big)
      \,.
    \end{array}
  \end{equation}
\end{lemma}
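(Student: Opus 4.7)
My plan is to obtain part (i) by direct substitution of the immersion $\phi$ from Def.~\ref{HolograhicSuperImmersion} into the first-order super-field expansions \eqref{SuperfieldsToFirstOrder} and then simplifying with the projector identities \eqref{PropertiesOfProjectionOperator}. Along $\phi$, the transverse bosonic coordinates $r, X^i$ are constants so that $\phi^*\mathrm{d}r = 0 = \phi^*\mathrm{d}X^i$, while the fermionic coordinate satisfies $\phi^*\Theta = P\theta$ and hence $\phi^*\mathrm{d}\Theta = P\mathrm{d}\theta$. Inserting this into \eqref{SuperfieldsToFirstOrder} immediately drops every $\mathrm{d}X^{5'}$- and $\mathrm{d}X^i$-summand and makes every Clifford bilinear evaluate on $P$-projected spinors. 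The single algebraic tool I rely on is the Majorana identity $P^T C = C\overline{P}$ (derivable from $C\Gamma_a C^{-1} = -\Gamma_a^T$ and the form of $\Gamma_{5'6789}$): it implies that a bilinear $(\overline{\theta}\,\Phi\,\chi)$ with $\theta,\chi \in P\mathbf{32}$ vanishes whenever $\overline{P}\Phi P = 0$ --- which, by \eqref{PropertiesOfProjectionOperator}, is exactly when $\Phi$ commutes with $P$.

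With this in hand the bosonic frames follow immediately: $\Gamma^{5'}$ commutes with $P$, so the candidate order-$\theta$ term in $\phi^*E^{5'}$ vanishes, giving $\phi^*E^{5'} = \mathcal{O}(\theta^2)$; whereas $\Gamma^a$ for tangential $a$ swaps $P$ and $\overline{P}$ (via $\Gamma^a P = \overline{P}\Gamma^a$), so $(\overline{\theta}\,\Gamma^a\,\mathrm{d}\theta)$ survives and is the 6d bilinear $(\overline{\theta}\,\gamma^a\,\mathrm{d}\theta)$, giving the stated $\phi^*E^a$. For the spin connection $\phi^*\Omega^{5'a}$, the eigenvalue relation $\Gamma_{5'6789}\mathrm{d}\theta = \mathrm{d}\theta$ (since $\mathrm{d}\theta \in P\mathbf{32}$ is a $+1$-eigenvector of $\Gamma_{5'6789}$) collapses $\Gamma^a\Gamma_{5'6789}\mathrm{d}\theta$ to $\Gamma^a\mathrm{d}\theta$, giving the single fermionic term displayed.

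The gravitino $\phi^*\Psi$ I split into $P$- and $\overline{P}$-parts. The $\mathrm{d}\Theta^\alpha$ summand lies in $P\mathbf{32}$ so contributes $\mathrm{d}\theta^\alpha$ to $\psi$ and nothing to $\overline{P}\phi^*\Psi$. For the remaining corrections I use $\Gamma_{6789}\theta = \Gamma_{5'}\theta$ (direct from \eqref{PropertiesOfProjectionOperator}) together with $\Gamma_a\Gamma_{5'} = -\Gamma_{5'a}$ to rewrite $\Gamma_a\Gamma_{6789}\theta = -\Gamma_{5'a}\theta$; since $\Gamma_{5'a}P = \overline{P}\Gamma_{5'a}$, both corrections lie in $\overline{P}\mathbf{32}$, so they vanish in $\psi = P\phi^*\Psi$ and combine (after absorbing the sign of $\Gamma_{a5'} = -\Gamma_{5'a}$) into the single transverse expression with the stated coefficient $\tfrac{1}{2}r_{\mathrm{prb}}/N^{2/3} - \tfrac{c}{12}\,r_{\mathrm{prb}}/N^{1/3}$.

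Finally, for part (ii) I read off the components of $\SecondFundamentalForm^{5'}$ from the decomposition $\phi^*\Omega^{5'}{}_{b_2} = e^{b_1}\SecondFundamentalForm^{5'}_{b_1 b_2} + \psi^\beta\SecondFundamentalForm^{5'}_{\beta b_2}$. Matching the bosonic coefficient $(r_{\mathrm{prb}}/N^{2/3})\mathrm{d}x^{b_2}$ against the bosonic part of $e^{b_1}\SecondFundamentalForm^{5'}_{b_1 b_2}$ forces $\SecondFundamentalForm^{5'}_{b_1 b_2} = N^{-1/3}\delta_{b_1 b_2} + \mathcal{O}(\theta^2)$, and then subtracting the induced fermionic contribution from the fermionic term of $\phi^*\Omega^{5'}{}_{b_2}$ and matching the residue against $\psi^\beta\SecondFundamentalForm^{5'}_{\beta b_2} = \mathrm{d}\theta^\beta\SecondFundamentalForm^{5'}_{\beta b_2} + \mathcal{O}(\theta^2)$ yields $\SecondFundamentalForm^{5'}_{\beta b_2}$ as claimed. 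The only part I expect to be fiddly is tracking signs and index positions across the several projector identities; no essentially new ingredient is required beyond the Clifford calculus of \S\ref{SpinorsOnM5Branes} and the rheonomic recursions of \S\ref{ExplicitRheonomy}.
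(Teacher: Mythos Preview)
Your proposal is correct and follows essentially the same approach as the paper: direct substitution of the immersion into the first-order expansions \eqref{SuperfieldsToFirstOrder}, followed by simplification via the projector identities \eqref{PropertiesOfProjectionOperator}. The paper's proof is organized line-by-line through \eqref{WorldvolumeFieldsToFirstThetaOrder} using the same parity arguments you outline (e.g.\ $P\Gamma_{5'a}\theta = \Gamma_{5'a}\overline{P}P\theta = 0$ for the $\psi$-line, and the complementary $\overline{P}\Gamma_{5'a}\theta = \Gamma_{5'a}\theta$ for the $\overline{P}\Psi$-line), and then verifies part (ii) by expanding out the right-hand side with the claimed components and checking it reproduces $\phi^\ast\Omega^{5'}{}_{b_2}$ --- which is logically equivalent to your coefficient-matching.
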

\begin{proof}
  The first line in \eqref{WorldvolumeFieldsToFirstThetaOrder} is evident. For the second line just note that $\big(\overline{\theta}\,\Gamma^{5'}\,\mathrm{d}\theta\big) = 0$ by \eqref{PropertiesOfProjectionOperator}.
  For the third line, notice similarly that
  $$
    \def\arraystretch{1.6}
    \begin{array}{l}
      \phi^\ast (P \Psi)^\alpha
      \;=\;
      \mathrm{d}\theta^\alpha
      \,+\,
      \Big(
        -\tfrac{1}{2}
        \tfrac
          { r_{\mathrm{prb}} }
          { N^{2/3} }
        \underbrace{
          (P\,\Gamma_{5' a} \theta)
            ^{\mathrlap{\alpha}}
        }_{\color{gray}  = 0}
        \;\;
        -
        \tfrac{c}{12}
        \fixed{
          \tfrac{
            r_{\mathrm{prb}}
          }{N^{1/3}}
        }
        \underbrace{
          (
            P\,
            \Gamma_a
            \Gamma_{6789}
            \theta
          )^{\mathrlap{\alpha}}
        }_{\color{gray} = 0 }
        \;\;
      \Big)
      \,
      \mathrm{d}x^a
      \;\;
      +
      \mathcal{O}(\theta^2)
      \mathrlap{\,,}
    \end{array}
  $$

  \vspace{1mm}
  \noindent
  where the terms over the braces vanish by \eqref{PropertiesOfProjectionOperator}: 
  $$
    P \Gamma_{5' a} \theta
    \;=\;
    \Gamma_{5' a}
    \overline{P} \theta
    \;=\;
    \;=\;
    \Gamma_{5' a}
    \overline{P} P \theta
    \;=\;
    0
    \hspace{.6cm}
    \mbox{and}
    \hspace{.6cm}
    P \Gamma_a \Gamma_{6789} \theta
    \;=\;
    \Gamma_a \Gamma_{6789} 
    \overline{P} \theta
    \;=\;
    \Gamma_a \Gamma_{6789} 
    \overline{P} P\theta
    \;=\;
    0
    \,.
  $$
  The fourth line works analogously but complementarily:
  \begin{equation}
    \label{VanishingTransveralPsiToFirstOrder}
    \def\arraystretch{1.5}
    \begin{array}{ccl}
      \phi^\ast 
      \big(\hspace{.7pt}
        \overline{P} \Psi
      \big)^\alpha
      &=&
      \Big(\!
        -\tfrac{1}{2}
        \tfrac
          { r_{\mathrm{prb}} }
          { N^{2/3} }
        \underbrace{
          \big(\hspace{.7pt}
            \overline{P} \, \Gamma_{5' a} \theta
          \big)
            ^{\mathrlap{\alpha}}
        }_{\color{gray} 
          - \Gamma_{a5'}\theta
        }
        \;\;
        -
        \tfrac{c}{12}
        \fixed{
          \tfrac{
            r_{\mathrm{prb}}
          }{N^{1/3}}
        }
        \underbrace{
          \big(\hspace{.7pt}
            \overline{P}\, 
            \Gamma_a
            \Gamma_{6789}
            \theta
          \big)^{\mathrlap{\alpha}}
        }_{\color{gray} 
          \Gamma_{a 5'}\theta
        }
        \;\;
      \Big)
      \,
      \mathrm{d}x^a
      \;\;
      +
      \mathcal{O}(\theta^2)
      \,,
    \end{array}
  \end{equation}

  \vspace{.1cm}
  \noindent
  where under the braces we again used \eqref{PropertiesOfProjectionOperator}:
  \vspace{1mm} 
  $$
    \overline{P}\,\Gamma_{5'a}\theta
    \;=\;
    \Gamma_{5'a}P\theta
    \;=\;
    \Gamma_{5' a}\theta
    \;=\;
    -\Gamma_{a 5'}\theta
    \hspace{.7cm}
    \mbox{and}
    \hspace{.7cm}
    \overline{P}\,\Gamma_a \Gamma_{6789}
    \theta
    \;=\;
    \Gamma_a \Gamma_{6789}
    P\theta
    \;=\;
    \Gamma_{a 5'} \theta
    \,.
  $$

  \vspace{1mm} 
\noindent  Finally, the fifth line follows again similarly, now using that $\Gamma_{5'6789} P = P$, again by \eqref{PropertiesOfProjectionOperator}.
  From this, 
  the last statement \eqref{ScndFundamentalFormToFirstOrder} is checked by expanding out:
  $$
  \hspace{-2mm} 
    \def\arraystretch{1.6}
    \begin{array}{llll}
      e^{b_1}
      \, 
      \SecondFundamentalForm
        ^{5'}
        _{ b_1 b_2 }
      \,+\,
      \psi^\beta
      \,
      \SecondFundamentalForm
        ^{5'}
        _{ \beta b_2 }
      &
      \;=\;
      \Big(
        \tfrac{r_{\mathrm{prb}}}{N^{1/3}}
        \mathrm{d}x^{b_1}
        \,+\,
        \big(\hspace{.8pt}
          \overline{\theta}
          \,\gamma^{b_1}\,
          \mathrm{d}\theta
        \big)
      \Big)
      \tfrac{1}{N^{1/3}}
      \delta_{b_1 b_2}
      \,+\,
      \mathrm{d}\theta^\beta
      \Big(
        \tfrac{1}{N^{1/3}}
        -
        \tfrac{c}{6}
      \Big)
      \big(\hspace{.8pt}
        \overline{\theta}\,\gamma_{b_2}
      \big)_\beta
      & +\;
      \mathcal{O}\big(
        \theta^2
      \big)
      &
      \proofstep{
        by
        \eqref{WorldvolumeFieldsToFirstThetaOrder}
        \&
        \eqref{ScndFundamentalFormToFirstOrder}
      }
      \\
      &\;=\;
      \tfrac{r_{\mathrm{prb}}}{N^{1/3}}
      \delta_{b_1 b_2}
      \mathrm{d}x^{b_2}
      \,+\,
      \tfrac{c}{6}
      \big(\hspace{.8pt}
        \overline{\theta}
        \,\gamma_{b_2}\,
        \mathrm{d}\theta
      \big)
      & +\;
      \mathcal{O}\big(
        \theta^2
      \big)
      \\
     & \;=\;
      \phi^\ast\, 
      \Omega^{5'}{}_{b_2}
      &
      +\, \mathcal{O}\big(\theta^2\big)
      &
      \proofstep{  
        by
        \eqref{WorldvolumeFieldsToFirstThetaOrder}
      }.
    \end{array}
  $$

\vspace{-.3cm}
\end{proof}

\begin{remark}[\bf \fixedtext{Positive charge}  of holographic M5-brane probe]
\label{CriticalDistanceForHolographicM5}
Since $c = \pm 6/N^{1/3}$
\eqref{ScaleFactorOfCFieldFlux},
Lem. \ref{WorldvolumeSuperFieldsToFirstOrder} implies that the holographic super-immersion \eqref{TheSuperImmersion} 
is (fluxless) $\sfrac{1}{2}$BPS (Def. \ref{FluxlessHalfBPSImmersion}) to first order in $\theta$ iff the background C-field flux density 
 \eqref{ScaleFactorOfCFieldFlux} is positive \footnote{
   Since the difference of signs in \eqref{ScaleFactorOfCFieldFlux} signifies the difference between black branes and black {\it anti-}branes that source the C-field flux, and if in the spirit of microscopic $p$-brane holography (p. \pageref{MicroscopicHolography}) we think of the black brane and its holographic probe as two aspects of the same physical system, then the requirement (Rem. \ref{CriticalDistanceForHolographicM5}) of the positive sign for the existence of the holographic probe characterizes this as an actual brane instead of an anti-brane.
 },
 in that, by \eqref{WorldvolumeFieldsToFirstThetaOrder} \eqref{VanishingTransveralPsiToFirstOrder}:
\begin{equation}
  \label{ConditionForPositiveCharge}
  \phi^\ast\big(\hspace{.8pt}
    \overline{P}\Psi
  \big)
  \;=\;
  \mathcal{O}\big(
    \theta^2
  \big)
  \hspace{.6cm}
  \Leftrightarrow
  \hspace{.6cm}
  \left\{\!\!
  \def\arraystretch{1.4}
  \begin{array}{l}
    c > 0,\;\; \mbox{\rm i.e.}\;\;
    G_4 \,=\, + \tfrac{6}{N^{1/3}}
    \mathrm{dvol}_{S^4_{N \mathrm{M5}}}
    \,.
  \end{array}
  \right.
\end{equation}
\end{remark}

\smallskip

Next, from the first-order formulas \eqref{WorldvolumeFieldsToFirstThetaOrder}, we now proceed by induction to the full computation of the worldvolume fields. For this, let now 
\vspace{1mm} 
\begin{equation}
  \label{SuperCoframeFieldOnPoincareSuperChart}
  (E,\Psi)
  \;\in\;
  \Omega^1_{\mathrm{dR}}
  \big(
    \mathbb{R}^{1,5}
    \!\times\!
    \mathbb{R}_{>0}
    \!\times\!
    \mathbb{D}^4
    \!\times\!
    \mathbb{R}^{0\vert\mathbf{32}}
    ;\,
    \mathbb{R}^{1,10\,\vert\, \mathbf{32}}
  \big)
\end{equation}

\vspace{1mm} 
\noindent denote the super coframe fields \eqref{CoFrameAroundHolographicM5}
on the Poincar{\'e}  neighborhood \eqref{NeighbourhoodChartOfWorldvolume}
of $\mathrm{AdS}_7 \!\times\! S^4$ 
uniquely extended to super-space via WZT gauge {\rm (Def. \ref{WZTGauge})}, to all orders in $\Theta$.

\medskip

Now we are ready for the main statement of this section:

\begin{theorem}[\bf Existence of  fluxless
$\sfrac{1}{2}$BPS holographic M5-brane probes]
  \label{ExistenceOfHolographicM5BraneProbes}
  The holographic super-immersion \eqref{TheSuperImmersion}
  of an M5-brane probe near the horizon of $N$ coincident black M5-branes 
  is (fluxless) $\sfrac{1}{2}$BPS
  {\rm (Def. \ref{FluxlessHalfBPSImmersion})}.
\end{theorem}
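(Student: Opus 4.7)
The strategy is to verify the two conditions of Def.~\ref{FluxlessHalfBPSImmersion} to all orders in the odd worldvolume coordinate $\theta$, by induction on the polynomial $\theta$-degree $n$, feeding in the decoupled rheonomy recursion \eqref{DecoupledRecursionForGravitino} (equivalently Lems.~\ref{RheonomyForBosonicCOFrameComponents}--\ref{RheonomyForSpinConnection} together with \eqref{RecursionRelationForGravitino}) specialized to the explicit AdS field-strength components \eqref{OddComponentsOfFieldStrengthsOnAdS}. The base case $n \leq 1$ is supplied by Lem.~\ref{WorldvolumeSuperFieldsToFirstOrder}: the leading bosonic part of $(e,\psi) = \phi^\ast(PE, P\Psi)$ is (up to scale) the flat super-Minkowski coframe, hence invertible in a neighborhood, and the sign of the background flux must be chosen as $c = +6/N^{1/3}$ (Rem.~\ref{CriticalDistanceForHolographicM5}) in order that $\phi^\ast(\overline{P}\Psi)$ vanishes at order $\theta^1$.

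The inductive step relies on two algebraic inputs. First, since $\phi^\ast \Theta = \theta \in P(\mathbf{32})$, every $\overline{\Theta}$ in the recursion becomes an $\overline{\theta}$ paired against a (by IH) $P$-chiral spinor; the standard relation $(\Gamma_{5'6789})^T = -C \, \Gamma_{5'6789}\, C^{-1}$ yields $P^T C P = 0$, so the 11d pairing $(\overline{\theta_1}\,\Gamma^{M_1\cdots M_p}\,\theta_2)$ between two spinors in $P(\mathbf{32})$ vanishes whenever $\Gamma^{M_1\cdots M_p}$ commutes with $P$, i.e.\ whenever it carries an even number of tangential indices. Second, from \eqref{OddComponentsOfFieldStrengthsOnAdS} one reads off the $P/\overline{P}$-parities of the background field strengths: $H_a$ (tangential $a$) and $K^{5'a}$ swap $P \leftrightarrow \overline{P}$, while $H_{5'}, H_i, K^{a_1 a_2}$ (both tangential) and $K^{i_1 i_2}$ preserve $P$, and $K^{ia} = K^{5'i} = 0$.

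Assuming inductively that $\phi^\ast E^{a'} = 0$ for transversal $a' \in \{5',6,7,8,9\}$ and $\phi^\ast\overline{P}\Psi = 0$ up to order $\theta^n$, the graviton recursion \eqref{RecursionRelationForGraviton} gives $\phi^\ast(E^{(n+1)})^{a'} \propto (\overline{\theta}\,\Gamma^{a'}\phi^\ast\Psi^{(n)})$; by IH, $\phi^\ast\Psi^{(n)} \in P$, and $\Gamma^{a'}$ (transversal) preserves $P$, so the first algebraic input kills the pairing. For the gravitino, $\overline{P}$-projecting \eqref{RecursionRelationForGravitino} and splitting the internal sums into tangential vs.\ transversal indices, the second input eliminates every contribution except those pairing one tangential with the radial $5'$ index; these are exactly the cross-terms whose order-$\theta^1$ version enforced $c = +6/N^{1/3}$, and the same cancellation re-occurs by the uniformity of the recursion.

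The main obstacle I anticipate is precisely that cross-cancellation at general order: the term $(H_a\theta)^\alpha\phi^\ast(E^{(n)})^a$ contributes a $\theta\,\overline{\theta}$-bilinear that, before it can be seen to cancel against the $(\Gamma_{5'a}\theta)\phi^\ast\Omega^{5'a}$-term computed from \eqref{RecursionForSpinConnection} via $K^{5'a}$, must be Fierz-rearranged -- and the key identity making this work is exactly the 6d Fierz identity $\gamma^a\theta\,\overline{\theta}\gamma_a = 0$ of Lem.~\ref{FierzIdentityIn6d}, which collapses the obstructing bilinear into a multiple of $\Gamma_{5'a}\theta$ whose coefficient vanishes precisely for $c = +6/N^{1/3}$. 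Once that Fierz step is handled, the induction closes and the holographic super-immersion satisfies \eqref{ConditionOnFluxlessBPSImmersion} to all orders, hence is fluxless $\sfrac{1}{2}$BPS.
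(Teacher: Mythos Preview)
Your plan is correct and follows essentially the paper's own proof: induction on the $\theta$-degree via the decoupled recursion \eqref{DecoupledRecursionForGravitino}, the $P/\Porth$-parity bookkeeping for $H_a$ and $K^{a_1a_2}$ from \eqref{OddComponentsOfFieldStrengthsOnAdS}, and the 6d Fierz identity of Lem.~\ref{FierzIdentityIn6d}.

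One point of mechanism is slightly off in your description, and worth straightening out before you execute: at order $\geq 2$ the condition $c = +6/N^{1/3}$ plays \emph{no further role}. After reducing to the surviving $(5',a)$-cross-terms, the $\mathrm{d}X^r$-component of $\phi^\ast\overline{P}\Psi^{(n+2)}$ vanishes by a direct sign cancellation between the $K^{5'a}$- and $H_a$-contributions (both proportional to $c$, with opposite signs from $\Gamma_{5'a} = -\Gamma_{a5'}$ combined with the $-\tfrac{1}{4}$ in the recursion), valid for \emph{any} $c$. For the $\mathrm{d}\Theta^\rho$-component the relative sign in \eqref{DecoupledRecursionForGravitino} is opposite, so the two terms \emph{add} to $\tfrac{c}{6}\big(\overline{P}\,\Gamma_{5'}\gamma_a\theta\big)\big(\overline{\theta}\,\gamma^a\,P\phi^\ast\Psi^{(n)}_\rho\big)$, and it is the Fierz identity $\gamma_a\theta\,\overline{\theta}\gamma^a = 0$ that annihilates this outright---not a preconditioning for a subsequent $c$-dependent cancellation. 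The specific value of $c$ is needed only at first order (Rem.~\ref{CriticalDistanceForHolographicM5}), where the initial data $\Omega^{(0)}, E^{(0)}$ enter.
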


\begin{proof}
 By Lem. \ref{WorldvolumeSuperFieldsToFirstOrder} 
 with Rem. \ref{CriticalDistanceForHolographicM5},
 the statement holds to first order in the odd worldvolume coordinates. Hence it is sufficient to check that all higher contributions actually vanish.

First, the vanishing of the higher orders of the transversal gravitino,
\vspace{1mm} 
\begin{equation}
  \label{PullbackOfBarPPsiVanishesAtAnyThetaOrder}
  \phi^\ast \overline{P} \Psi
  \;=\;
  0
  \,,
  \;\;\;\;\;
  \mbox{equivalently}
  \;\;\;\;
  \phi^\ast \Psi
  \;=\;
  P \phi^\ast \Psi
\end{equation}
(using throughout that $\phi^\ast \circ P \,=\, P \circ \phi^\ast$  and similarly for $\overline{P}$)
follows via the decoupled recursion relations from Lem. \ref{DecoupledRheonomyRecursionRelations} 
by induction on the $\theta$-order:
\begin{itemize}
\item For the even component by
$$
\hspace{-3mm} 
  \def\arraystretch{1.7}
  \begin{array}{lll}
    \phi^\ast
    \big(\hspace{.8pt}
      \overline{P}
      \,
      \Psi^{(n+2)}_r
    \big)^\alpha
    \cdot
    (n+2)(n+1)
    \\
    \;=\;
    -\tfrac{1}{4}
    \big(\hspace{.8pt}
      \overline{P}
      \Gamma_{a_1 a_2}
      \theta
    \big)^\alpha
    \big(\hspace{.8pt}
      \overline{\theta}
      \,K^{a_1 a_2}\,
      {\color{darkorange}}
      \phi^\ast \Psi^{(n)}_r
    \big)
    \,+\,
    \big(\hspace{.8pt}
      \overline{P}
      H_a \theta
    \big)^\alpha
    \big(\hspace{.8pt}
      \overline{\theta}
      \,\Gamma^a\,
      {\color{darkorange}}
      \phi^\ast \Psi^{(n)}_r
    \big)
    &
    a_i \in \{0,\!\cdots\!,\!5,5',6,\!\cdots\!,\!9\}
    &
    \proofstep{
      by
      \eqref{DecoupledRecursionForGravitino}
      \&
      \eqref{TheSuperImmersion}
    }
    \\
    \;=\;
    -\tfrac{1}{4}
    \big(\hspace{.8pt}
      \overline{P}
      \Gamma_{a_1 a_2}
      \theta
    \big)^\alpha
    \big(\hspace{.8pt}
      \overline{\theta}
      \,K^{a_1 a_2}\,
      {\color{darkorange}P}
      \phi^\ast \Psi^{(n)}_r
    \big)
    \,+\,
    \big(\hspace{.8pt}
      \overline{P}
      H_a \theta
    \big)^\alpha
    \big(\hspace{.8pt}
      \overline{\theta}
      \,\Gamma^a\,
      {\color{darkorange}P}
      \phi^\ast \Psi^{(n)}_r
    \big)
    &
    a_i \in \{0,\!\cdots\!,\!5,5',6,\!\cdots\!,\!9\}
    &
    \proofstep{
      \hspace{-11pt}
      \def\arraystretch{.9}
      \begin{tabular}{l}
        by induction
        \\
        assumption
      \end{tabular}
    }
      \\
      \;=\;
      -
      \tfrac{1}{2}
      \big(\hspace{.8pt}
        \overline{P}
        \,
        \Gamma_{5' a}
        \theta
      \big)^\alpha
      \big(\hspace{.8pt}
        \overline{\theta}
        \,K^{5' a}\,
        P
        \phi^\ast
        \Psi^{(n)}_r
      \big)
      \,+\,
      \big(\hspace{.8pt}
        \overline{P}
        \,
        H_a \theta
      \big)^\alpha
      \big(\hspace{.8pt}
        \overline{\theta}
        \,\Gamma^a\,
        P
        \phi^\ast
        \Psi^{(n)}_r
      \big)
      &
      a \in \{0,\!\cdots\!,\!5\}
      &
      \proofstep{
        by
        \eqref{PropertiesOfProjectionOperator}
      }
      \\
      \;=\;
      -
      \tfrac{1}{2}
      \tfrac{c}{6}
      \big(\hspace{.8pt}
        \overline{P}
        \,
        \underbrace{
          \Gamma_{5' a}
        }_{\color{gray} 
          \mathclap{
            - \Gamma_{a 5'}
          }
        }
        \theta
      \big)^\alpha
      \big(\hspace{.8pt}
        \overline{\theta}
        \,\Gamma^a \,
        P
        \phi^\ast
        \Psi^{(n)}_r
      \big)
      \,-\,
      \tfrac{c}{12}
      \big(\hspace{.8pt}
        \overline{P}
        \,
        \Gamma_{a 5'}
        \theta
      \big)^\alpha
      \big(\hspace{.8pt}
        \overline{\theta}
        \,\Gamma^a\,
        P
        \phi^\ast
        \Psi^{(n)}_r
      \big)
      &
      a \in \{0,\!\cdots\!,\!5\}
      &
      \proofstep{
        by
        \eqref{OddComponentsOfFieldStrengthsOnAdS} \&
    \eqref{PropertiesOfProjectionOperator}
      }
    \\[-5pt]
    \;=\;
    0
    \,.
  \end{array}
$$
\item For the odd component 
by use of the Fierz identity 
from Lem. \ref{FierzIdentityIn6d}:
$$
\hspace{-3mm} 
   \def\arraystretch{1.6}
   \begin{array}{lll}
      \phi^\ast
      \big(\hspace{.8pt}
        \overline{P}
        \,
        \Psi^{(n+2)}_\rho
      \big)^\alpha
      \cdot (n+4)(n+3)\frac{1}{2} 
      \\
      \;=\;
      \tfrac{1}{4}
      \big(\hspace{.8pt}
        \overline{P}
        \,
        \Gamma_{a_1 a_2}
        \theta
      \big)^\alpha
      \big(\hspace{.8pt}
        \overline{\theta}
        \,K^{a_1 a_2}\,
        \phi^\ast
        \Psi^{(n)}_\rho
      \big)
      \,+\,
      \big(\hspace{.8pt}
        \overline{P}
        \,
        H_a \theta
      \big)^\alpha
      \big(\hspace{.8pt}
        \overline{\theta}
        \,\Gamma^a\,
        \phi^\ast
        \Psi^{(n)}_\rho
      \big)
      &
      a_i \in \{0,\!\cdots\!,\!5,5',6,\!\cdots\!,\!9\}
      &
      \proofstep{
        by \eqref{DecoupledRecursionForGravitino}
        \&
        \eqref{TheSuperImmersion}
      }
      \\
      \;=\;
      \tfrac{1}{4}
      \big(\hspace{.8pt}
        \overline{P}
        \,
        \Gamma_{a_1 a_2}
        \theta
      \big)^\alpha
      \big(\hspace{.8pt}
        \overline{\theta}
        \,K^{a_1 a_2}\,
        {\color{darkorange}P}
        \phi^\ast
        \Psi^{(n)}_\rho
      \big)
      \,+\,
      \big(\hspace{.8pt}
        \overline{P}
        \,
        H_a \theta
      \big)^\alpha
      \big(\hspace{.8pt}
        \overline{\theta}
        \,\Gamma^a\,
        {\color{darkorange}P}
        \phi^\ast
        \Psi^{(n)}_\rho
      \big)
      &
      a_i \in \{0,\!\cdots\!,\!5,5',6,\!\cdots\!,\!9\}
      &
      \proofstep{
      \hspace{-12pt}
      \def\arraystretch{.9}
      \begin{tabular}{l}
        by induction
        \\
        assumption
      \end{tabular}
      }
      \\
      \;=\;
      \tfrac{1}{2}
      \big(\hspace{.8pt}
        \overline{P}
        \,
        \Gamma_{5' a}
        \theta
      \big)^\alpha
      \big(\hspace{.8pt}
        \overline{\theta}
        \,K^{5' a}\,
        P
        \phi^\ast
        \Psi^{(n)}_\rho
      \big)
      \,+\,
      \big(\hspace{.8pt}
        \overline{P}
        \,
        H_a \theta
      \big)^\alpha
      \big(\hspace{.8pt}
        \overline{\theta}
        \,\Gamma^a\,
        P
        \phi^\ast
        \Psi^{(n)}_\rho
      \big)
      &
      a \in \{0,\!\cdots\!,\!5\}
      &
      \proofstep{
        by
        \eqref{PropertiesOfProjectionOperator}
      }
      \\
      \;=\;
      \tfrac{1}{2}
      \tfrac{c}{6}
      \big(\hspace{.8pt}
        \overline{P}
        \,
        \Gamma_{5' a}
        \theta
      \big)^\alpha
      \big(\hspace{.8pt}
        \overline{\theta}
        \,\Gamma^a \,
        P
        \phi^\ast
        \Psi^{(n)}_\rho
      \big)
      \,-\,
      \tfrac{c}{12}
      \big(\hspace{.8pt}
        \overline{P}
        \,
        \Gamma_{a 5'}
        \theta
      \big)^\alpha
      \big(\hspace{.8pt}
        \overline{\theta}
        \,\Gamma^a\,
        P
        \phi^\ast
        \Psi^{(n)}_\rho
      \big)
      &
      a \in \{0,\!\cdots\!,\!5\}
      &
      \proofstep{
        by
        \eqref{OddComponentsOfFieldStrengthsOnAdS} \&
    \eqref{PropertiesOfProjectionOperator}
      }
      \\
      \;=\;
      \tfrac{c}{6}
      \big(\hspace{.8pt}
        \overline{P}
        \,
        \Gamma_{5'}
        \underbrace{
          \gamma_{a}
          \theta
          \big)^\alpha
          \big(\hspace{.8pt}
          \overline{\theta}
          \,\gamma^a
        }_{\color{gray}
          = 0
        }
        \,
        P
        \phi^\ast
        \Psi^{(n)}_\rho
      \big)
      \\[-5pt]
      \;=\;
      0
      &&
      \proofstep{
        by
        \eqref{6dFierzIdentity}
      }.
      \end{array}
    $$     
\end{itemize}

\smallskip 
From this, it then follows that:
\begin{itemize}[
  topsep=1pt,
  itemsep=2pt
]
\item
The pullback of the radial \& transversal vielbein vanishes to all orders:
\begin{equation}
  \label{PullbackOfTransversalVielbeinVanishesToAllOrderes}
  \phi^\ast \overline{P} E
  \;=\;
  0
\end{equation}
because we now have for $E^{5'}$ that
$$
  \def\arraystretch{1.6}
  \begin{array}{ll}
    \phi^\ast
    \big(
      E^{(n+1)}
    \big)^{5'}_r
    \\
    \;=\;
    \tfrac{2}{n+1}
    \big(\hspace{.8pt}
      \overline{\theta}
      \,\Gamma^{5'}
      \phi^\ast
      \Psi^{(n)}_r
    \big)
    &
    \proofstep{
      by
      \eqref{RecursionRelationForGraviton}
      \&
      \eqref{TheSuperImmersion}
    }
    \\
    \;=\;
    \tfrac{2}{n+1}
    \big(\hspace{.8pt}
      \overline{\theta}
      \,\Gamma^{5'}
      {\color{darkorange}P}
      \phi^\ast
      \Psi^{(n)}_r
    \big)
    &
    \proofstep{
      by
      \eqref{PullbackOfBarPPsiVanishesAtAnyThetaOrder}
    }
    \\
    \;=\;
    0
    &
    \proofstep{
      by
      \eqref{PropertiesOfProjectionOperator},
    }
  \end{array}
  \hspace{1.7cm}
  \def\arraystretch{1.6}
  \begin{array}{ll}
    \phi^\ast
    \big(
      E^{(n+1)}
    \big)^{5'}_\rho
    \\
    \;=\;
    \tfrac{2}{n+2}
    \big(\hspace{.8pt}
      \overline{\theta}
      \,\Gamma^{5'}
      \phi^\ast
      \Psi^{(n)}_\rho
    \big)
    &
    \proofstep{
      by
      \eqref{RecursionRelationForGraviton}
      \&
      \eqref{TheSuperImmersion}
    }
    \\
    \;=\;
    \tfrac{2}{n+2}
    \big(\hspace{.8pt}
      \overline{\theta}
      \,\Gamma^{5'}
      {\color{darkorange}P}
      \phi^\ast
      \Psi^{(n)}_\rho
    \big)
    &
    \proofstep{
      by
      \eqref{PullbackOfBarPPsiVanishesAtAnyThetaOrder}
    }
    \\
    \;=\;
    0
    &
    \proofstep{
      by
      \eqref{PropertiesOfProjectionOperator},
    }
  \end{array}
$$
and verbatim so for $E^i$.

\item
The fermionic component of the tangential coframe field equals
\begin{equation}
  \label{FermionicTangentialCoframeFieldToAllOrders}
  \psi \,=\, \mathrm{d}\theta
\end{equation}
to all orders in $\theta$, because it does so to first order by \eqref{WorldvolumeFieldsToFirstThetaOrder} and all higher orders vanish (now $a_i \in 
\{0,\!\cdots\!,9\}$):
\begin{equation}
  \label{DeducingVanisingOfPsiR}
  \hspace{-.6cm}
  \def\arraystretch{1.8}
  \begin{array}{lll}
    &
    \big(
      \psi^{(n+2)}
    \big)^\alpha_r
    \;:=\;
    \phi^\ast
    \big(
      P
      \,
      \Psi^{(n+2)}
    \big)^\alpha_r
    \\
    &
    \;=\;
    -\tfrac{1}{n+2}
    \tfrac{1}{n+1}
    \tfrac{1}{4}
    \big(
      P
      \,
      \Gamma_{a_1 a_2}
      \theta
    \big)^\alpha
    \big(\hspace{.8pt}
      \overline{\theta}
      \,K^{a_1 a_2}\,
      \phi^\ast\Psi^{(n)}_r
    \big)
    \,+\,
    \tfrac{1}{n+2}
    \tfrac{1}{n+1}
    \big(
      P
      \,
      H_a \theta
    \big)^\alpha
    \big(\hspace{.8pt}
      \overline{\theta}
      \,\Gamma^a\,
      \phi^\ast \Psi^{(n)}_r
    \big)
    &
    \proofstep{
      by 
      \eqref{DecoupledRecursionForGravitino}
      \&
      \eqref{TheSuperImmersion}
    }
    \\
    &
    \;=\;
    -\tfrac{1}{n+2}
    \tfrac{1}{n+1}
    \tfrac{1}{4}
    \big(
      P
      \,
      \Gamma_{a_1 a_2}
      \theta
    \big)^\alpha
    \big(\hspace{.8pt}
      \overline{\theta}
      \,K^{a_1 a_2}
      {\color{darkorange}P}
      \phi^\ast\Psi^{(n)}_r
    \big)
    \,+\,
    \tfrac{1}{n+2}
    \tfrac{1}{n+1}
    \big(
      P
      \,
      H_a \theta
    \big)^\alpha
    \big(\hspace{.8pt}
      \overline{\theta}
      \,\Gamma^a
      {\color{darkorange}P}
      \phi^\ast \Psi^{(n)}_r
    \big)
    &
    \proofstep{
      by
      \eqref{PullbackOfBarPPsiVanishesAtAnyThetaOrder}
      \footnotemark
    }
  \\
  &
  \;=\;
  0
  &
  \proofstep{
    by
    \eqref{OddComponentsOfFieldStrengthsOnAdS}
    \&
    \eqref{PropertiesOfProjectionOperator}
  }
\end{array}
\end{equation}

\vspace{-2mm} 
and
\vspace{-2mm} 
$$
\hspace{-3mm}
  \def\arraystretch{1.8}
  \begin{array}{lll}
    &
    \big(
      \psi^{(n+2)}
    \big)^\alpha_\rho
    \;:=\;
    \phi^\ast
    \big(
      P
      \,
      \Psi^{(n+2)}
    \big)^\alpha_\rho
    \\
    & 
    \;=\;
    -\tfrac{1}{n+4}
    \tfrac{1}{n+3}
    \tfrac{1}{4}
    \big(
      P
      \,
      \Gamma_{a_1 a_2}
      \theta
    \big)^\alpha
    \big(\hspace{.8pt}
      \overline{\theta}
      \,K^{a_1 a_2}\,
      \phi^\ast\Psi^{(n)}_\rho
    \big)
    \,+\,
    \tfrac{1}{n+4}
    \tfrac{1}{n+3}
    \big(
      P
      \,
      H_a \theta
    \big)^\alpha
    \big(\hspace{.8pt}
      \overline{\theta}
      \,\Gamma^a\,
      \phi^\ast \Psi^{(n)}_\rho
    \big)
    &
    \proofstep{
      by 
      \eqref{DecoupledRecursionForGravitino}
      \&
      \eqref{TheSuperImmersion}
    }
    \\
    & 
    \;=\;
    -\tfrac{1}{n+4}
    \tfrac{1}{n+3}
    \tfrac{1}{4}
    \big(
      P
      \,
      \Gamma_{a_1 a_2}
      \theta
    \big)^\alpha
    \big(\hspace{.8pt}
      \overline{\theta}
      \,K^{a_1 a_2}
      {\color{darkorange}P}
      \phi^\ast\Psi^{(n)}_\rho
    \big)
    \,+\,
    \tfrac{1}{n+4}
    \tfrac{1}{n+3}
    \big(
      P
      \,
      H_a \theta
    \big)^\alpha
    \big(\hspace{.8pt}
      \overline{\theta}
      \,\Gamma^a
      {\color{darkorange}P}
      \phi^\ast \Psi^{(n)}_\rho
    \big)
    &
    \proofstep{
      by
      \eqref{PullbackOfBarPPsiVanishesAtAnyThetaOrder}
    }
  \\
    & 
    \;=\;
    0
  &
  \proofstep{
    by
    \eqref{OddComponentsOfFieldStrengthsOnAdS}
    \&
    \eqref{PropertiesOfProjectionOperator}\,.
  }
\end{array}
$$
Note that in the last step, in both cases, we observe
from \eqref{OddComponentsOfFieldStrengthsOnAdS}
that $K^{a_1 a_1}$ and $H_a$ have for all index values the same parity (with respect to the projectors $P$, $\overline{P}$\,) as $\Gamma_{a_1 a_2}$ and $\Gamma^a$, respectively so that the two terms $P\,  \Gamma_{a_1 a_2} P$ 
and $\overline{P} K^{a_1 a_2} P$ can never both be non-vanishing,
and similarly for $P H_a P$ and $\overline{P}\, \Gamma^a P$.

\smallskip 
\item
The bosonic component of the tangential coframe field equals
\vspace{1mm} 
\begin{equation}
  \label{ExactBosonicCoframeField}
  e^a
  \;=\;
  \mathrm{d}x^a
  \,+\,
  \big(\hspace{.8pt}
    \overline{\theta}
    \,\gamma^a\,
    \mathrm{d}\theta
  \big)
\end{equation}
to all orders in $\theta$,
because it does so to first order by \eqref{WorldvolumeFieldsToFirstThetaOrder}, 
and since all higher orders vanish, as follows:
$$
\hspace{-2mm}
  \def\arraystretch{1.6}
  \begin{array}{lll}
    \big(
      e^{(n+1)}
    \big)^{a}_r
   & :=\;
    \phi^\ast
    \big(
      E^{(n+1)}
    \big)^a_r
    \\
    &=\;
    \tfrac{2}{n+1}
    \big(\hspace{.8pt}
      \overline{\theta}
      \,\gamma_a\,
      \phi^\ast \Psi^{(n)}_r
    \big)
    &
    \proofstep{
      by
      \eqref{RecursionRelationForGraviton}
      \&
      \eqref{TheSuperImmersion}
    }
    \\
    &=\;0
    &
    \proofstep{
      by
      \eqref{PullbackOfBarPPsiVanishesAtAnyThetaOrder}
      \&
      \eqref{WorldvolumeFieldsToFirstThetaOrder}
    },
  \end{array}
  \hspace{1cm}
  \def\arraystretch{1.6}
  \begin{array}{lll}
    \big(
      e^{(n+2)}
    \big)^{a}_\rho
   & :=\;
    \phi^\ast
    \big(
      E^{(n+2)}
    \big)^a_\rho
    \\
    &=\;
    \tfrac{2}{n+3}
    \big(\hspace{.8pt}
      \overline{\theta}
      \,\gamma_a\,
      \phi^\ast \Psi^{(n+1)}_\rho
    \big)
    &
    \proofstep{
      by
      \eqref{RecursionRelationForGraviton}
      \&
      \eqref{TheSuperImmersion}
    }
    \\
    & =\;0
    &
    \proofstep{
      by
      \eqref{PullbackOfBarPPsiVanishesAtAnyThetaOrder}
      \&
      \eqref{WorldvolumeFieldsToFirstThetaOrder}
    }.
  \end{array}
$$
\end{itemize}

\smallskip

To conclude:
\begin{itemize}
\item
the statements \eqref{PullbackOfTransversalVielbeinVanishesToAllOrderes} and \eqref{PullbackOfBarPPsiVanishesAtAnyThetaOrder} establish the transversal condition in \eqref{ConditionOnFluxlessBPSImmersion} that was to be shown, namely that $\phi^\ast \big( \,\overline{P}E,\, \overline{P}\Psi \big) = 0$.

\vspace{1mm} 
\item
The statements \eqref{FermionicTangentialCoframeFieldToAllOrders} and  \eqref{ExactBosonicCoframeField} establish the tangential condition in \eqref{ConditionOnFluxlessBPSImmersion} that was to be shown, namely that $(e,\psi)$ is a coframe field, manifestly so by expanding the coordinate differentials in their $(e,\psi)$-components as
$$
  \def\arraystretch{1.6}
  \begin{array}{ll}
    \mathrm{d}x^a
    \;=\;
    \tfrac{N^{1/3}}{r_{\mathrm{prb}}}
    \, e^a
    \,-\,
    \big(\,
      \overline{\theta}
      \,\gamma^a\,
      \psi
    \big)
    \\
    \mathrm{d}\theta^\alpha
    \;=\;
    \psi^\alpha.
  \end{array}
$$
\end{itemize}
This completes the check that $\phi$ \eqref{TheSuperImmersion} is a (fluxless) $\sfrac{1}{2}$BPS super-immersion (Def. \ref{FluxlessHalfBPSImmersion}), hence that the holographic probe M5-brane really exists.
\end{proof}

\begin{remark}[\bf Bianchi identity and vanishing $H_3$-flux density]
$\,$

\noindent {\bf (i)}  For flux quantization on holographic M-branes in \cite{SS24-Companion}, the key point of establishing the $\sfrac{1}{2}$BPS property of the holographic M5-brane immersion, via Thm. \ref{ExistenceOfHolographicM5BraneProbes}, is that this establishes a solution to the equations of motion of the $H_3$-flux density on the worldvolume (\cite[Prop. 3.17]{GSS24-FluxOnM5}), namely the appropriate self-duality, the Bianchi identity, and rheonomy.

\vspace{1mm} 
\noindent {\bf (ii)}  In the present case of {\it vanishing} flux density, this may look fairly trivial, but it is still crucial to establish it unambiguously as a solution because (only) then is flux quantization guaranteed to produce the exact completed field content which may still be non-trivial (namely torsion-charged), as discussed in 
\cite{SS24-Companion}. 

\vspace{1mm} 
\noindent {\bf (iii)}  In any case, it is immediate to check the conclusions of \cite[Prop. 3.17]{GSS24-FluxOnM5} in the present case: In particular, with \eqref{FluxDensityOnChart} and \eqref{TheSuperImmersion} we have
 \vspace{1mm} 
  \begin{equation}
    \label{4FluxVanishesOnHolographicWorldvolume}
    \phi^\ast G_4 \;=\; 0
  \end{equation}

\vspace{1mm} 
\noindent  
so that the general worldvolume Bianchi identity $\mathrm{d}H_3 \,=\, \phi^\ast G_4$ (cf. \cite[(1)]{GSS24-FluxOnM5}) is un-twisted and becomes 
\vspace{1mm} 
  $$
    \mathrm{d}H_3
    \;=\;
    0
    \,,
  $$
  which is clearly satisfied by $H_3 = 0$.
\end{remark}

\begin{remark}[\bf Absence of fluxed $\sfrac{1}{2}$BPS holographic M5-branes]
\label{AbsenceOfFluxedHolographicM5branes}
The proof of Thm. \ref{ExistenceOfHolographicM5BraneProbes} also readily shows that it is impossible to
have non-vanishing worldvolume flux density $H_3 \neq 0$ on a holographic M5-brane \eqref{TheSuperImmersion}
while keeping its $\sfrac{1}{2}$BPS- (``super-embedding''-) property (at least with respect to the given coframe field \eqref{SuperCoframeFieldOnPoincareSuperChart}).
Namely, by \cite[(40)]{HoweSezgin97b}\cite[(7)]{HoweSezginWest97}\cite[p. 91]{Sorokin00} (re-derived in \cite[(126)]{GSS24-FluxOnM5}) such non-trivial flux corresponds to modifying the super-immersion \eqref{TheSuperImmersion} by a summand $\slashed{\tilde H}_3$
\vspace{1mm} 
$$
  \phi^\ast P \Theta
  \;=\;
  \theta 
    \,+\,
  \slashed{\tilde H}_3
  \theta
  \,,
  \hspace{.5cm}
  \mbox{for}
  \hspace{.4cm}
  \slashed{\tilde H}_3
  \;\defneq\;
  \tfrac{1}{3!}
  (\tilde H_3)_{a_1 a_2 a_3}
  \gamma^{a_1 a_2 a_3}
  \,,
$$

\vspace{1mm}
\noindent which vanishes iff the actual flux density $H_3$ vanishes (cf. \cite[Rem. 3.18]{GSS24-FluxOnM5}) --
but non-vanishing such $\tilde{H}_3$ immediately fails the Darboux condition \eqref{PullbackOfTransversalVielbeinVanishesToAllOrderes}, by the computation shown right below there.
(This is in contrast notably to the case of the rectilinear embedding of the M5-brane into flat Minkowski superspacetime, which allows any constant $H_3$-flux to be switched on, see \cite[Ex. 3.14]{GSS24-FluxOnM5}).
\end{remark}

This phenomenon naturally leads over to the discussion of flux-quantization on holographic M5-branes in \cite{SS24-Companion}.
Namely a constraint of vanishing flux density
$$
  H_3 \;=\; 0
$$
trivializes the higher gauge field on holographic M5-branes only locally, on any (contractible) coordinate chart, while the globally completed higher gauge field, controlled by a flux quantization law, may still attain non-trivial configurations carrying non-trivial {\it torsion charges}. 

\smallskip 
In other words, while flux quantization completes general gauge field configurations by torsion-charged sectors, this is particularly relevant for configurations with vanishing flux, as found here on holographic M5-branes, in which case the non-trivial higher gauge field content is invisible by traditional local field analysis and is {\it all} contained in the subtleties of the flux quantization law. This is what we discuss 
in \cite{SS24-Companion}.

\medskip

\section{Holographic M2-Branes}
\label{HolographicM2Branes}

For comparison, we also make an analogous analysis for the M2-brane case. Hence we determine the holographic M2-brane probes super-embedded into the near-horizon super-geometry of their own black brane incarnation.
This is the case of microscopic $p$-brane holography, which was originally considered in \cite{BDPS87}\cite{DFFFTT99}\cite{PST99}.

\subsection{Spinors on M2-Branes}
\label{SpinorsOnM2Branes}

\noindent
{\bf Spinors in 3d from spinors in 11d.}
We conveniently identify (cf. \cite[Lem. 4.11]{HSS19}) the $\mathrm{Spin}(1,2)$-representation $8 \cdot \mathbf{2} \,\in\, \mathrm{Rep}_{\mathbb{R}}\big(\mathrm{Spin}(1,2)\big)$ with the linear subspace of the $\mathrm{Spin}(1,10)$-representation $\mathbf{32}$ \eqref{The11dMajoranaRepresentation} that is the image of either of the projection operators
\begin{equation}
  \label{ProjectorForM2Brane}
  \def\arraystretch{1.4}
  \begin{array}{l}
    P
    \;:=\;
    \tfrac{1}{2}\big(
      1 + 
      \Gamma_{2'3456789}
    \big)
    \\
    \Porth
    \;:=\;
    \tfrac{1}{2}\big(
      1 
        -
      \Gamma_{2'3456789}
    \big)    
  \end{array}
  \;:\;
  \mathbf{32}
  \xrightarrow{\quad}
  \mathbf{32}
  \,,
\end{equation}
satisfying the following evident but consequential relations:
\begin{equation}
  \label{ProjectionOperatorRelationsOn2Brane}
  \def\arraycolsep{2pt}
  \def\arraystretch{1.1}
  \begin{array}{rcl}
    P P &=& P
    \\
    \Porth \Porth &=& \Porth
    \\
    \Porth P &=& 0
    \\
    P \Porth &=& 0
  \end{array}
  \hspace{1.2cm}
  \begin{array}{l}
    \def\arraystretch{1.1}
    \begin{array}{l}
      \Gamma^a P \;=\;
        P \, \Gamma^a
      \\
      \Gamma^a \Porth \;=\;
        \Porth \, \Gamma^a
    \end{array}
    \;\;\;
    a \in \{0,1,2\}
    \\
    \def\arraystretch{1.1}
    \begin{array}{l}
      \Gamma^{2'} \! P \;=\;
        \Porth \, \Gamma^{2'}
      \\
      \Gamma^{2'} \! \Porth \;=\;
        P \, \Gamma^{2'}
    \end{array}    
    \\
    \def\arraystretch{1.1}
    \begin{array}{l}
      \Gamma^{i}\; P \;=\;
        \Porth \, \Gamma^{i}
      \\
      \Gamma^{i} \; \Porth \;=\;
        P \, \Gamma^{i}
    \end{array}    
    \;\;\;
    i \in \{3,4,5,6,7,8,9\}
  \end{array}
  \hspace{.9cm}
  \def\arraystretch{1.3}
  \begin{array}{l}
    \overline{P} \;=\; P
    \\
    \overline{\Porth} \;=\; \Porth
    \,,
  \end{array}
\end{equation}
where we suggestively denote the 11d Clifford generators as follows:
\begin{equation}
  \hspace{-6cm}
  \begin{tikzcd}[
    column sep=3pt,
    row sep=0pt
  ]
    &
    \mathclap{
      \overbrace{
      \hspace{-3pt}
      \phantom{------\,\,}
      }^{
        \scalebox{.7}{
          \color{darkblue}
          \bf tangential
        }
      }
    }
    &
    &
    \overbrace{
      \phantom{--}
    }^{
        \scalebox{.7}{
          \color{darkblue}
          \bf radial
        }
    }
    &&&
    &
    \mathclap{
      \hspace{0pt}
      \overbrace{
      \hspace{-2pt}
     \phantom{\phantom{---------------}}
      }^{
        \hspace{0pt}
        \scalebox{.7}{
          \color{darkblue}
          \bf transversal
        }
      }
    }
    &
    \\[-14pt]
    \Gamma_0
    \ar[
      d,
      |->,
      start anchor={[yshift=+1pt]},
      end anchor={[yshift=-1pt]},
      bend right=90,
      shift right=0pt,
      "{\color{darkgreen}
        \def\arraystretch{.9}
        \begin{array}{l}
          P(-)P
        \end{array}
      }"{swap, pos=.5, xshift=4pt}
    ]
    &
    \Gamma_1
    &
    \Gamma_2
    &
    \Gamma_{2'}
    &
    \Gamma_3
    &
    \Gamma_4
    &
    \Gamma_{5}
    &
    \Gamma_{6}
    &
    \Gamma_{7}
    &
    \Gamma_{8}
    &
    \Gamma_{9}
    &
    \mathrlap{
       \in 
      \mathrm{Pin}^+(1,10)
      \,\subset\,
      \mathrm{End}_{\mathbb{R}}(\mathbf{32})
    }
    \\
    \gamma_0
    &
    \gamma_1
    &
    \gamma_2
    &
    &
    &
    &
    &&&&&
    \mathrlap{
    \in 
    \mathrm{Pin}^+(1,2)
    \,\subset\,
    \mathrm{End}_{\mathbb{R}}
    (
      8 \cdot \mathbf{2}
    )
    \,,
    }
  \end{tikzcd}
\end{equation}
in that under the corresponding inclusion
$$
  \mathrm{Spin}(1,2)
  \xhookrightarrow{\quad}
  \mathrm{Spin}(1,10)
$$
there are linear isomorphisms 
\vspace{1mm}
$$
  8 \cdot \mathbf{2}
  \;\;
  \underset{
    \mathclap{
      \raisebox{-2pt}{
        \scalebox{.7}{$
          \mathrm{Spin}(1,2)
        $}
      }
    }
  }{
   \;\;\simeq\;\;
  }
  \;
  P(\mathbf{32})
  \;
  \underset{
    \mathclap{
      \raisebox{-2pt}{
        \scalebox{.7}{$
          \mathrm{Spin}(1,2)
        $}
      }
    }
  }{
   \;\;\simeq\;\;
  }
 \; \Porth(\mathbf{32})
  \,.
$$

Combined with the vector representation of $\mathrm{Spin}(1,10)$ and $\mathrm{Spin}(1,2)$ on $\mathbb{R}^{1,10}$ and $\mathbb{R}^{1,2}$, respectively, we may regard $P$
\eqref{ProjectorForM2Brane}
as a projector of super-vector spaces
\begin{equation}
  \label{SuperProjectionOperatorForM2}
  \adjustbox{
    raise=-6pt
  }{
   \begin{tikzcd}
    \mathbb{R}
      ^{
      1,10 \,\vert\, \mathbf{32}
      }
    \ar[
      r,
      ->>
    ]
    \ar[
      rr,
      rounded corners,
      to path={
           ([yshift=+00pt]\tikztostart.north)  
        -- ([yshift=+08pt]\tikztostart.north)
        -- node[]{
          \scalebox{.8}{
            \colorbox{white}{
              $P$
            }
          }
        }
           ([yshift=+08pt]\tikztotarget.north)
        -- ([yshift=-00pt]\tikztotarget.north)
      }
    ]
    &
    \mathbb{R}
      ^{1,2 \,\vert\, 8 \cdot \mathbf{2}}
    \ar[
      r,
      hook
    ]
    &
    \mathbb{R}^{
      1,10 \,\vert\, \mathbf{32}
    }
  \end{tikzcd}
  }
  \hspace{.4cm}
  \def\arraystretch{1.3}
  \begin{array}{l}
     P 
       \,:=\, 
    \tfrac{1}{2}
    \big(
      1 + \Gamma_{2'3456789}
    \big) 
    \,,
  \end{array}
\end{equation}
which is convenient for unifying the conditions on tangential and transversal super-coframe components in a $\sfrac{1}{2}$BPS super-immersion (Def. \ref{HalfBPSSuperImmersion}).

\medskip

\noindent
{\bf Hodge duality on the M2.}
By 11d Hodge duality, we have 
\begin{equation}
  \def\arraystretch{1.4}
  \begin{array}{l}
    \Gamma_{012}
    \underset{
      \scalebox{.7}{
        \eqref{CliffordVolumeFormIn11d}
      }
    }{\;=\;}
    \Gamma_{012}
    \Gamma_{0122'3456789}
    \;=\;
    +
    \Gamma_{2'3456789}
  \end{array}
\end{equation}
so that the projector $P$ 
\eqref{ProjectorForM2Brane} is alternatively expressed as
$$
  P 
  \;=\;
  \tfrac{1}{2}
  \big(
    1 + \Gamma_{012}
  \big)
$$
which makes it manifest that
\begin{equation}
  \label{3HodgeDuality}
  \def\arraystretch{1.4}
  \begin{array}{l}
    \tfrac{1}{2}
    \epsilon_{a \, b_1 b_2}
    \Gamma^{b_1 b_2}
    P
    \;=\;
    \tfrac{1}{2}
    \epsilon_{a \, b_1 b_2}
    \Gamma^{b_1 b_2}
    \,
    \Gamma_{012}
    \,
    P
    \;=\;
    -
    \Gamma_a P\,.
  \end{array}
\end{equation}

\medskip

\subsection{Super $\mathrm{AdS}_4$-spacetime}
\label{SuperAdS4}

With the result of \S\ref{ExplicitRheonomy} in hand, we may give explicit formulas for super $\mathrm{AdS}_4 \!\times\! S^7$-spacetime by first recalling the ordinary bosonic AdS-geometry and then rheonomically extending to super-spacetime.

\medskip

\noindent
{\bf Near-horizon geometry of black M2-branes.}
The bosonic near-horizon geometry of $N$ black M2-brane is (cf. \cite[(5)]{AFHS00}, following \cite{GibbonsTownsend93}\cite{DuffGibbonsTownsend94}\cite[(2.3)]{DFFFTT99}) represented on a chart of the form 
\vspace{.1cm} 
\begin{equation}
  \label{ChartNearM2Singularity}
  \mathbb{R}^{1,10} 
    \setminus 
  \mathbb{R}^{1,2}
  \;\;
    \underset{\color{orangeii} \mathrm{diff}}{\simeq}
  \;\;
  \mathbb{R}^{1,2}
  \times
  \big(
    \mathbb{R}^8 \setminus \{0\}
  \big)
  \;\;
    \underset{\color{orangeii} \mathrm{diff}}{\simeq}
  \;\;
  \mathbb{R}^{1,2}
  \times
  \mathbb{R}_{> 0}
  \times 
  S^7
\end{equation}
with its canonical coordinate functions
\begin{equation}
  \label{CanonicalCoordinatesOnAdS4PoincareChart}
  \begin{tikzcd}[
    sep=0pt
  ]
    X^a
    &:&
    \mathbb{R}^{1,2}
    \ar[r]
    &[20pt]
    \mathbb{R}
    &[10pt]
    \mbox{for $a \in \{0,1,\cdots, 2\}$}
    \\[-2pt]
    r
    &:&
    \mathbb{R}_{> 0}
    \ar[
      r,
      hook
    ]
    &
    \mathbb{R}
  \end{tikzcd}
\end{equation}
by the $\mathrm{AdS}_4$-metric (cf. \cite[\S 39.3.7]{Blau22})
plus the metric on the round $S^7$:
\vspace{1mm} 
\begin{equation}
  \label{MetricTensorForBlackM2}
  \begin{array}{ccl}
  \mathrm{d}s^2_{N \mathrm{M2}}
  &=&
  \tfrac{r^2}{N^{2/6}}
  \mathrm{d}s^2_{\mathbb{R}^{1,2}}
  \,+\,
  \tfrac{N^{2/6}}{r^2}
  \mathrm{d}r^2
  \;\;+\;\;
  4\, N^{2/6}
  \,
  \mathrm{d}s^2_{S^7}
  \end{array}
\end{equation}
(where $2 \, R_{N \mathrm{M2}}\,:=\, 2\, N^{1/6}$ is the radius of the 7-sphere in Planck units $ \tfrac{\pi^{1/3}}{2^{1/6}}\, \ell_P$). The C-field flux density $G_4$ supporting this is a multiple of the volume form on the $\mathrm{AdS}_4$-factor pulled back to the chart along the canonical map
\begin{equation}
  \label{CFieldFluxDensityNearM2Singularity}
  G_4
  \;=\;
  c 
  \,
  \mathrm{dvol}_{\mathrm{AdS}^{N \mathrm{M2}}_4}
  \,\in\,
  \Omega^4_{\mathrm{dR}}
  \big(
    \mathrm{AdS}_4
  \big)
  \xhookrightarrow{\phantom{--}}
  \Omega^4_{\mathrm{dR}}
  \big(
    \mathbb{R}^{1,2}
    \times 
    \mathbb{R}_{> 0}
    \times
    S^4
  \big)  
  \,,
\end{equation}
for some prefactor $c$ which is determined, up to its sign, by the Einstein equations, see \eqref{OnAdS4ScaleFactorOfCFieldFlux} below, and whose sign is determined by the condition that a holographic M2-embedding exists (Rem. \ref{CriticalDistanceForHolographicM2} below).

\medskip

\noindent
{\bf Chart around a holographic M2-brane embedding.} 
We pick a point
$
  s_{\mathrm{prb}} \,\in\, S^7
  \subset \mathbb{R}^8 \setminus \{0\}
$
to designate the direction in which we wish to consider a probe M2-brane worldvolume immersed into this background, at some coordinate distance $r_{\mathrm{prb}}$ from the M2 singularity (cf. \cite[(14)]{BDPS87}\cite[(12)]{PST99} and 
\hyperlink{FigureBraneConfiguration}{Figure B}):
\begin{equation}
  \label{OrdinaryHolographicM2Embedding}
  \begin{tikzcd}[
    column sep=25pt,
    row sep = -2pt
  ]
    \mathllap{
    \scalebox{.7}{
      \color{darkblue}
      \bf
      \def\arraystretch{.9}
      \begin{tabular}{c}
        probe M2
        \\
        worldvolume
      \end{tabular}
    }
    }
    \mathbb{R}^{1,2}
    \ar[
      rr,
      "{ \phi }",
      "{
        \scalebox{.7}{
          \color{darkgreen}
          \bf
          embedding
        }
      }"{swap, yshift=-2pt}
    ]
    &&
    \mathbb{R}^{1,2}
    \times 
    \mathbb{R}_{> 0}
    \times
    S^7
    \\
    x
    &\longmapsto&
    \big(
      x,
      r_{\mathrm{prb}}
      ,
      s_{\mathrm{prb}}
    \big).
  \end{tikzcd}
\end{equation}
Restricting to an open ball around $s_{\mathrm{prb}} \in S^7$, we have a contractible chart around this immersed worldvolume, of the form 
\begin{equation}
  \label{NeighbourhoodChartOfM2Worldvolume}
  \begin{tikzcd}[column sep=large]
    \mathbb{R}^{1,2}
    \times
    \mathbb{R}_{> 0 }
    \times
    \mathbb{D}^7
    \ar[
      r,
      hook,
      "{
        \mathrm{id}
        \times
        \iota
      }"
    ]
    &
    \mathbb{R}^{1,2}
    \times
    \mathbb{R}_{> 0 }
    \times
    S^7
    \,.
  \end{tikzcd}
\end{equation}

\newpage
\noindent
{\bf Cartan geometry around the holographic M2.}
On the chart \eqref{NeighbourhoodChartOfM2Worldvolume},
we evidently have the following coframe forms 
\begin{equation}
  \label{CoFrameAroundHolographicM2}
  \def\arraystretch{1.7}
  \begin{array}{clclcl}
    &
    E^a 
      &:=&
    \tfrac{r}{N^{1/6}}
    \;
    \mathrm{d}X^a
    &
    \scalebox{.8}{tangential}
    &
    a \in \{0,1,2\}
    \\[-11pt]
    \scalebox{.7}{
      \color{darkblue}
      \bf
      AdS
    }
    \\[-11pt]
    &
    E^{2'} 
    &:=&
    \tfrac{N^{1/6}}{r}
    \;
    \mathrm{d} r
    &
    \scalebox{.8}{radial}
    &
    a \in \{2'\}
    \\
    \scalebox{.7}{
      \color{darkblue}
      \bf
      S
    }
    &
    E^a 
    &=&
    \tfrac{N^{1/6}}{1/2}
    \delta^a_i E^i
    &
    \scalebox{.8}{transversal}
    &
    a \in \{3,4,5,6,7,8,9\}
    \,,
  \end{array}
\end{equation}
which are orthonormal for the metric \eqref{MetricTensorForBlackM2}, in that $\mathrm{d}s^2_{N \mathrm{M2}} \,=\, \eta_{a b} \, E^a \otimes E^b$ on this chart, and make the C-field flux density \eqref{CFieldFluxDensityNearM2Singularity} appear as 
\begin{equation}
  \label{CFieldFluxOnAdS4InFrame}
  G_4 \;=\; 
  \pm c \, 
  E^0 E^1 E^2 E^{2'}
  \,.
\end{equation}

For the following formulas, we may focus on the AdS-factor in \eqref{CoFrameAroundHolographicM2}. Hence we let the indices $a_i, b_i$ run only through $\{0,1,2\}$, to be called the {\it tangential} index values -- namely tangential to the worldvolume \eqref{OrdinaryHolographicM2Embedding} -- with the further \textit{radial} index $2'$ carried along separately.

\smallskip

The torsion-free {\bf spin connection} on the AdS-factor of \eqref{CoFrameAroundHolographicM2},
characterized by
$$
      \mathrm{d}E^a 
    \,=\,
    \Omega^a{}_b\, E^b
    + 
    \Omega^a{}_{2'}\, E^{2'},
    \qquad
    \mathrm{d}E^{2'}
    \,=\,
    \Omega^{2'}{}_a \, E^a
    \,,
$$
is readily seen to have as only non-vanishing components:
\begin{equation}
  \label{AdS4SpinConnection}
  \Omega^{a 2'}
  \,=\,
  -\Omega^{2' a}
  \;=\;
    -
    \tfrac{r}{N^{2/6}}
    \,
    \mathrm{d}X^a
    \;\;
    \scalebox{.8}{
      tangential $a$.
    }
\end{equation}

Therefore its {\bf curvature} 2-form has non-vanishing components
\begin{equation}
  \label{AdS4CurvatureTwoForm}
  \def\arraystretch{1.5}
  \begin{array}{ccl}
  R^{a 2'}
  &=&
  \mathrm{d}
  \Omega^{a 2'}
  \;=\;
  -
  \tfrac{1}{N^{2/6}}
  \,
  \mathrm{d}r
  \,
  \mathrm{d}X^a
  \;=\;
  -
  \tfrac{1}{N^{2/6}}
  \,
  E^{2'}\, E^a
  \\
  &=&
  +
  \tfrac{1}{N^{2/6}}
  E^{a}\, E^{2'}
  \\
  R^{a_1 a_2}
  &=&
  -
  \Omega^{a_1}{}_{2'}
  \,
  \Omega^{2' a_2}
  \;=\;
  +
  \tfrac{r^2}{N^{4/6}}
  \mathrm{d}X^{a_1}
  \,
  \mathrm{d}X^{a_2}
  \\
  &=&
  +
  \tfrac{1}{N^{2/6}}
  \,
  E^{a_1} \, E^{a_1}
  \,.
  \end{array}
\end{equation}
Hence the {\bf Riemann tensor}
has non-vanishing components (cf. our normalization of $\delta$ in \eqref{KroneckerSymbol})
\begin{equation}
  \label{AdS4RiemannTensor}
  \def\arraystretch{1.6}
  \begin{array}{ccl}
  R^{a 2'}{}_{b 2'}
  &=&
  +\tfrac{1}{N^{2/6}}
  \,
  \delta^{a}_b
  \\
  R^{a_1 a_2}{}_{b_1 b_2}
  &=&
  +
  \tfrac{2}{N^{2/6}}
  \,
  \delta^{a_1 a_2}{}_{b_1 b_2}
  \,,
  \end{array}
\end{equation}
and the {\bf Ricci tensor} is proportional to the metric tensor, as befits an Einstein manifold:
\begin{equation}
  \label{AdS4RicciTensor}
   \def\arraystretch{1.4}
  \begin{array}{ccl}
  \mathrm{Ric}_{a_1 a_2}
  &=&
  R_{a_1}{}^b{}_{b a_2}
  +
  R_{a_1}{}^{2'}{}_{2' a_2}
  \\
  &=&
  -
  \tfrac{(3-1)}{N^{2/6}}
  \,
  \eta_{a_1 a_2}
  -
  \tfrac{1}{N^{2/6}} 
  \,
  \eta_{a_1 a_2}
  \\
  &=&
  -
  \tfrac{3}{N^{2/6}}
  \,
  \eta_{a_1 a_2}
\\
  \mathrm{Ric}_{2' 2'}
  &=&
  R^{2'}{}^a{}_{a 2'}
  \\
  &=&
  -
  \tfrac{3}{N^{2/6}}
  \,,
  \end{array}
\end{equation}
similar to the Ricci tensor of the 7-sphere factor (e.g. \cite[Cor. 11.20]{Lee18}):
 
\begin{equation}
  \label{RicciTensorOfSevenSphere}
  \mathrm{Ric}_{i_1 i_2}
  \;=\;
  +
  \tfrac{6}{
    4 
    \,
    N^{2/6}
  }
  \, 
  \delta_{i_1 i_2}
  \,.
\end{equation}

\smallskip
Therefore the  {\bf Einstein equation} with source the C-field flux density \eqref{CFieldFluxOnAdS4InFrame} 
$$
  (G_4)_{a_1 \cdots a_4}
  \;=\;
  \left\{\!\!
  \def\arraystretch{1.2}
  \begin{array}{ll}
    c \, \epsilon_{a_1 a_2 a_3 a_4}
    & 
    a_i \in \{0,1,2,2'\}
    \\
    0 & \scalebox{.9}{otherwise}
  \end{array}
  \right.
  \quad 
  \underset{
    \scalebox{.7}{\eqref{ContractingKroneckerWithSkewSymmetricTensor}}
  }{
    \Rightarrow
  }
  \quad 
  \left\{\!
  \def\arraystretch{1.3}
  \def\arraycolsep{2pt}
  \begin{array}{rcl}
    (G_4)_{b_1 b_2 b_3 b_4}
    (G_4)^{b_1 b_2 b_3 b_4}
    &=&
    - 24 \, c^2
    \\
    (G_4)_{a_1 b_1 b_2 b_3}
    (G_4)_{a_2}{}^{b_1 b_2 b_3}
    &=&
    -6 \, c^2 \,  \eta_{a_1 a_2}
  \end{array}
  \right.
$$
has non-vanishing components (cf. \cite[(174-5)]{GSS24-SuGra})
\begin{equation}
  \label{AdS4EinsteinEquation}
  \def\arraystretch{1.4}
  \begin{array}{cccl}
    &
    \mathrm{Ric}_{a_1 a_2}
    &=&
    \tfrac{1}{12}
    (G_4)_{a_1 \, b_1 b_2 b_3}
    (G_4)_{a_2}{}^{b_1 b_2 b_3}
    \,-\,
    \tfrac{1}{12}
    \tfrac{1}{12}
    (G_4)_{b_1 \cdots b_4}
    (G_4)^{b_1 \cdots b_4}
    \,
    \eta_{a_1 a_2}    
    \\
    \Leftrightarrow
    &
    -
    \tfrac{3}{
      N^{2/6}
    } \, \eta_{a_1 a_2} 
    &=&
    -\tfrac{1}{2}
    c^2\, \eta_{a_1 a_2}
    \,+\,
    \tfrac{1}{6}
    c^2 \, \eta_{a_1 a_2}
    \\
    &
    &=&
    -
    \tfrac{1}{3}
    c^2
    \, 
    \eta_{a_1 a_2}
    \\[+9pt]
    &
    \mathrm{Ric}_{i_1 i_2}
    &=&
    -
    \tfrac{1}{12}
    \tfrac{1}{12}
    (G_4)_{a_1 \cdots a_4}
    (G_4)^{a_1 \cdots a_4}
    \,
    \delta_{i_1 i_2}
    \\
    \Leftrightarrow
    &
    +
    \tfrac{6}{
      {\color{purple}4}
      \,
      N^{2/6} 
    }
    \,
    \delta_{i_1 i_2}
    &=&
    +
    \tfrac{1}{6}
    c^2
    \,
    \delta_{i_1 i_2}
  \end{array}
\end{equation}

\vspace{2mm} 
\noindent thus is solved 
\footnote{Note that the last line in \eqref{AdS4EinsteinEquation} is the reason that the radius of $S^7$ has to be {\it twice} that of $\mathrm{AdS}_4$ in \eqref{MetricTensorForBlackM2}. } 
by
\vspace{1mm} 
\begin{equation}
  \label{OnAdS4ScaleFactorOfCFieldFlux}
  c 
  \;=\; 
  \pm
  \frac{3}
  {N^{1/6}
  }
  \,,
  \hspace{.6cm}
  \underset{
    \scalebox{.7}{
      \eqref{CFieldFluxOnAdS4InFrame}
    }
  }{\mbox{hence}} 
  \hspace{.5cm}
  G_4
  \;=\;
  \pm 
  \tfrac{3}{N^{1/6}}
  \, 
  E^0 E^1 E^2 E^{2'}
  .
\end{equation}
At this point, both of the signs in \eqref{OnAdS4ScaleFactorOfCFieldFlux} are equally admissible, but we see below in Rem. \ref{CriticalDistanceForHolographicM5} that the + sign is singled out by the existence of a holographic M2-brane embedding.

 \medskip 

\noindent
{\bf Super-Cartan geometry around holographic M2s.}
We now obtain the super-extension of the above Cartan geometry \eqref{CoFrameAroundHolographicM2}.
Inserting the bosonic AdS Cartan geometry \eqref{CoFrameAroundHolographicM2} \eqref{AdS4SpinConnection} into the initial conditions for WZT gauge \eqref{GaugeConditions}
means that
\begin{equation}
  \label{InitialValuesInAdS4}
  \def\arraystretch{1.9}
  \begin{array}{lcrcll}
    \big(
      E^{(0)}
    \big)^a
    &=&
    \tfrac{r}{N^{1/6}}
    \,
    \mathrm{d}X^a
    &
      \;\;\;\;\;
      \Leftrightarrow
      \;\;\;\;\;
    &
    \Big(
      \big(
        E^{(0)}
      \big)^a_r \,=\,
      \tfrac{r}{N^{1/6}}
      ,
      &
      \big(
        E^{(0)}
      \big)^a_\rho \,=\,
      0
      \;    
    \Big)
    \\
    \big(
      E^{(0)}
    \big)^{2'}
    &=&
    \tfrac{N^{2/6}}{r}
    \,
    \mathrm{d}X^{2'}
    &
      \;\;\;\;\;
      \Leftrightarrow
      \;\;\;\;\;
    &
    \Big(
      \big(
        E^{(0)}
      \big)^{2'}_r \,=\,
      \tfrac{N^{2/6}}{r}
      ,
      &
      \big(
        E^{(0)}
      \big)^{2'}_\rho \,=\,
      0
      \;
    \Big)
    \\
    \big(
      \Psi^{(0)}
    \big)^\alpha
    &=&
    \mathrm{d}\Theta^\alpha
    &
      \;\;\;\;\;
      \Leftrightarrow
      \;\;\;\;\;
    &
    \Big(
      \big(
        \Psi^{(0)}
      \big)^\alpha_r
      \,=\,
      0
      ,
      &
      \big(
        \Psi^{(0)}
      \big)^\alpha_\rho
      \,=\,
      \delta^\alpha_\rho
    \Big)
    \\
    \big(
      \Omega^{(0)}
    \big)^{2' a}
    &=&
    \tfrac{r}{N^{2/6}} 
    \, 
    \mathrm{d}X^a
    &
      \;\;\;\;\;
      \Leftrightarrow
      \;\;\;\;\;
    &
    \Big(
      \big(
        \Omega^{(0)}
      \big)^{2' a}_r
      \,=\,
      \tfrac{r}{N^{2/3}}
      ,
      &
      \big(
        \Omega^{(0)}
      \big)^{2' a}_\rho
      \,=\,
      0\;
      \Big)
      \,.
  \end{array}
\end{equation}

\vspace{2mm} 
\noindent Moreover, inserting the flux density \eqref{OnAdS4ScaleFactorOfCFieldFlux} into the super-field strength components \eqref{HK} yields
\begin{equation}
  \label{OddComponentsOfFieldStrengthsOnAdS4}
  \def\arraystretch{1.4}
  \begin{array}{ccl}
    H_a
    &=&
    +
    \tfrac{c}{6}
    \,
    \tfrac{1}{2}
    \epsilon_{a \, b_1 b_2}
    \,
    \Gamma^{b_1 b_2}
    \Gamma^{2'}
    \\
    H_{2'}
    &=&
    -
    \tfrac{c}{6}
    \,
    \Gamma_{012}
    \\
    H_i 
    &=&
    -\tfrac{c}{12}
    \Gamma_{i} 
    \Gamma_{0 1 2} 
    \Gamma_{2'}
    \\[6pt]
    K^{a_1 a_2}
    &=&
    -\tfrac{c}{3}
    \epsilon^{a_1 a_2 a_3} 
    \Gamma_{a_3} 
    \,
    \Gamma_{2'}
    \\
    K^{2' a}
    &=&
    +\tfrac{c}{6}
    \epsilon^{a \, b_1 b_2}
    \Gamma_{b_1 b_2}
    \\
    K^{i_1 i_2}
    &=&
    +
    \,
    c
    \,
    \Gamma^{i_1 i_2}
    \Gamma^{0 1 2}
    \Gamma^{2'}
    \\
    K^{i a}
    &=&
    0
    \\
    K^{2' i}
    &=&
    0\,.
  \end{array}
  \hspace{1.5cm}
  \mbox{for}
  \hspace{1cm}
  \def\arraystretch{1.2}
  \begin{array}{l}
    a_i \in \{0,1,2\}
    \\
    \,i_i \in \{3,4,5,6,7,8,9\}
  \end{array}
\end{equation}

From this, we now obtain the super-field extension of the supergravity fields on $\mathrm{AdS}_4 \times S^7$:

\begin{example}[\bf $\mathrm{AdS}_4 \times S^7$-super-fields to first $\Theta$-order]
\label{AdS4SuperFieldsToFirstThetaOrder}
Based on the 0th-order expressions \eqref{InitialValuesInAdS4},
we obtain to first order in $\Theta$ (similar to \cite[(3.11)]{dWPPS98}):
\begin{equation}
  \label{AdS4SuperfieldsToFirstOrder}
  \hspace{-4.3mm} 
  \def\arraystretch{1.8}
  \begin{array}{lrcrcrlll}
  E^a
    \, \;=
  &
  \tfrac{r}{N^{1/6}}
  \,
  \mathrm{d}X^a
  &+&
  \big(\,
    \overline{\Theta}
    \,\Gamma^a\,
    \mathrm{d}\Theta
  \big)
  &+&
  \mathcal{O}\big(\Theta^2\big)
  &
  \proofstep{
    by \eqref{RecursionRelationForGraviton}
  }
  \\
  E^{2'}\;=
  &
  \tfrac{N^{\fixed{2}/6}}{r}
  \,
  \mathrm{d}X^{2'}
  &+&
  \big(\,
    \overline{\Theta}
    \,\Gamma^{2'}\,
    \mathrm{d}\Theta
  \big)
  &+&
  \mathcal{O}\big(\Theta^2\big)
  &
  \proofstep{
    by \eqref{RecursionRelationForGraviton}
  }
  \\
  \Omega^{2' a}=
  &
  \tfrac{r}{N^{2/6}}
  \,
  \mathrm{d}X^a
  \,
  &+&
  \tfrac{c}{6}
  \,
  \tfrac{1}{2}
  \epsilon^{a\, b_1 b_2}
  \big(\,
    \overline{\Theta}
    \,
    \Gamma_{b_1 b_2}
    \,
    \mathrm{d}\Theta
  \big)
  &+&
  \mathcal{O}\big(\Theta^2\big)
  &
  \proofstep{
    by
    \eqref{RecursionForSpinConnection},
    \eqref{OddComponentsOfFieldStrengthsOnAdS4}
    \&
    \eqref{GaugeConditions}
  }
  \\
  \Psi^\alpha
  \;\;= 
  &
  \mathrm{d}\Theta^\alpha
  &+&
  \big(
    -
    \tfrac{1}{2}
    \tfrac{r}{N^{2/6}}
    (\Gamma_{2' a}\Theta)^\alpha
    \,+\,
    \tfrac{c}{6}
    \fixed{
      \tfrac
        {r}
        {N^{1/6}}
    }
    \tfrac{1}{2}
    \epsilon_{a\, b_1 b_2}
    (
      \Gamma^{b_1 b_2} \Gamma^{2'}
      \Theta
    )^\alpha
  \big)
  \,
  \mathrm{d}X^a
  \\
  &
  &+&
  \tfrac{c}{6}
  \fixed{
    \tfrac
      {N^{2/6}}
      {r}
  }
  \tfrac{1}{2}
  \epsilon_{a\, b_1 b_2}
  (
    \Gamma^{b_1 b_2}
    \Gamma^{2'}
    \Theta
  )^\alpha
  \;
  \mathrm{d}X^{2'}
  \\
  &
  &+&
  -\tfrac{c}{12}
  (
    \Gamma_{i}
    \Gamma_{012}
    \Gamma_{2'}
    \Theta
  )^\alpha
  \,
  \mathrm{d}X^i
  &+&
  \mathcal{O}\big(
    \Theta^2
  \big)
  &
  \proofstep{
    by
    \eqref{RecursionRelationForGravitino}
    \&
    \eqref{OddComponentsOfFieldStrengthsOnAdS4},
  }
  \end{array}
\end{equation}
where $a \in \{0,1,2\}$.
\end{example}

\medskip

\subsection{Holographic M2 immersion}
\label{HolographicM2Immersions}

With the background super-spacetime in hand (\S\ref{SuperAdS4}), we are ready to inspect the $\sfrac{1}{2}$BPS super-immersions of holographic M2-branes. We shall be content here with showing the analysis to first order in the odd coordinates (cf. the analogous discussion for M5-branes to all order in \S\ref{TheHolographicM5Immersion}).

\begin{definition}[\bf Holographic M2 super-immersion]
\label{HolograhicM2SuperImmersion}
We extend the holographic immersion \eqref{OrdinaryHolographicM2Embedding} to a super-immersion (Def. \ref{SuperImmersion}) 
in an evident way:
\begin{equation}
  \label{TheM2SuperImmersion}
  \begin{tikzcd}[
    column sep=-5pt,
    row sep=-6pt
  ]
    \mathbb{R}^{
      1,2 \,\vert\, 8 \cdot \mathbf{2}
    }
    \ar[
      rr,
      "{\phi}"
    ]
    &\hspace{30pt}&
    \mathbb{R}^{1,2}
    &\times& 
    \mathbb{R}_{> 0}
    &\times&
    \mathbb{D}^7
    &\times& 
    \mathbb{R}^{0 \vert 8 \cdot \mathbf{2}}
    &\times& 
    \mathbb{R}^{0 \vert 8 \cdot \mathbf{2}}
    &
    \;\;\;\;\;
    \subset
    \; \mathbb{R}^{1,10 \,\vert\, \mathbf{32}}
    \\
    x^a
    &\overset{\phi^\ast}{\longmapsfrom}&
    X^a
    \\
    r_{\mathrm{prb}}
    &\overset{\phi^\ast}{\longmapsfrom}&
    &&
    r
    \\
    s^i_{\mathrm{prb}}
    &\overset{\phi^\ast}{\longmapsfrom}&
    &&
    &&
    X^i
    \\
    \theta^\alpha
    &\overset{\phi^\ast}{\longmapsfrom}&
    &&
    &&
    &&
    \big(P\Theta\big)^\alpha
    \\
    0
    &\overset{\phi^\ast}{\longmapsfrom}&
    &&
    &&
    &&
    &&
    \big(
      \Porth\Theta
    \big)^\alpha
    .
  \end{tikzcd}
\end{equation}
\end{definition}

\begin{lemma}[\bf M2-Worldvolume super-fields to first $\theta$-order]
  \label{M2WorldvolumeSuperFieldsToFirstOrder}
 $\,$
  
\noindent {\bf (i)} Under the holographic super-immersion \eqref{HolograhicM2SuperImmersion}, the first-order super-fields \eqref{InitialValuesInAdS4} pull back to

  \begin{equation}
    \label{M2WorldvolumeFieldsToFirstThetaOrder}
    \hspace{-8mm}
    \def\arraystretch{1.6}
    \begin{array}{rcccrcrcl}
      e^a 
      &:=&
      \phi^\ast E^a
      &=&
      \tfrac
        { r_{\mathrm{prb}} }
        { N^{1/6} }
      \,
      \mathrm{d}x^a
      &+&
      \big(\hspace{.8pt}
        \overline{\theta}
        \,\gamma^a\,
        \mathrm{d}\theta
      \big)
      &+&
      \mathcal{O}\big(
        \theta^2
      \big)
      \\
      &&
      \phi^\ast
      E^{2'}
      &=&
      &&
      &&
      \mathcal{O}\big(\theta^2\big)
      \\
      \psi^\alpha
      &:=&
      \phi^\ast (P \Psi)^\alpha
      &=&
      &&      
      \mathrm{d}\theta^\alpha
      &+&
      \mathcal{O}\big(
        \theta^2
      \big)
      \\
      &&
      \phi^\ast\big(
        \hspace{.7pt}
        \Porth\Psi
      \big)^\alpha
      &=&
      \big(
        \tfrac{1}{2}
        \tfrac
          {r_{\mathrm{prb}}}
          { N^{2/6} }
        -
        \tfrac{c}{6}
        \fixed{
          \tfrac
            {r_{\mathrm{prb}}}
            {N^{1/6}}
        }
      \big)
      (
        \Gamma_{a2'}
        \theta
      )^\alpha
      \,
      \mathrm{d}x^a
      &&
      &+&
      \mathcal{O}\big(
        \theta^2
      \big)
      \\
      e^{b_1}
      \SecondFundamentalForm
        ^{ 2' }_{ b_1 b_2 }
      \,+\,
      \psi^\beta
      \SecondFundamentalForm
        ^{ 2' }_{ \beta\,  b_2 }
      &:=&
      \phi^\ast
      \Omega^{2'}{}_{b_2}
      &=&
      \tfrac
        { r_{\mathrm{prb}} }
        { N^{2/6} }
      \delta_{b_1 b_2}
      \,
      \mathrm{d}x^{b_1} 
      &+&
      \tfrac{1}{6}\big(\hspace{.8pt}
        \overline{\theta}
        \,\gamma^a\,
        \mathrm{d}\theta
      \big)
      &+&
      \mathcal{O}\big(\theta^2\big)
      \mathrlap{\,.}
    \end{array}
  \end{equation}
 \noindent {\bf (ii)} The 2nd fundamental super-form
  $\SecondFundamentalForm^{2'}$
  {\rm (cf. \cite[(67)]{GSS24-FluxOnM5})}
  has the following components:
  \begin{equation}
    \label{M2ScndFundamentalFormToFirstOrder}
    \def\arraystretch{1.5}
    \begin{array}{cclcl}
      \SecondFundamentalForm
        ^{2'}
        _{b_1 b_2}
      &=&
      \tfrac{1}{N^{1/6}}
      \delta_{b_1 b_2}
      &+&
      \mathcal{O}
      \big(
        \theta^2
      \big)
      \\
      \SecondFundamentalForm
        ^{2'}
        _{\beta \, b_2}
      &=&
      \big(
        \tfrac{1}{N^{1/6}}
        \,-\,
        \tfrac{c}{3}
      \big)
      \big(\hspace{.8pt}
        \overline{\theta}\,\gamma_{b_1}
      \big)_\beta
      &+&
      \mathcal{O}
      \big(
        \theta^2
      \big)
      \,.
    \end{array}
  \end{equation}
\end{lemma}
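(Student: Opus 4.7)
The plan is to mirror the M5-brane computation from Lemma \ref{WorldvolumeSuperFieldsToFirstOrder}, substituting the explicit first-order $\Theta$-expansion of the $\mathrm{AdS}_4 \times S^7$ super-fields from Example \ref{AdS4SuperFieldsToFirstThetaOrder} directly into the M2 super-immersion \eqref{TheM2SuperImmersion}, and then exploiting the M2 projector identities \eqref{ProjectionOperatorRelationsOn2Brane} together with the 3d Hodge duality \eqref{3HodgeDuality}. Two observations drive all simplifications: (a) the immersion sends $r$ and $X^i$ to constants, so $\phi^{\ast}\mathrm{d}X^{2'} = 0$ and $\phi^{\ast}\mathrm{d}X^i = 0$; and (b) $\phi^{\ast}\Theta = P\Theta = \theta$, so inside any bilinear $(\overline{\theta}\,\cdots\,\mathrm{d}\theta)$ one may freely insert $P$ on either side.

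The five pullback formulas of (i) then follow line by line. For $\phi^{\ast}E^a$ with $a\in\{0,1,2\}$, $\Gamma^a$ commutes with $P$, so $(\overline{\theta}\,\Gamma^a\,\mathrm{d}\theta)$ reduces to $(\overline{\theta}\,\gamma^a\,\mathrm{d}\theta)$. For $\phi^{\ast}E^{2'}$, the bosonic piece dies by (a), and the fermionic piece dies by $\Gamma^{2'}P = \Porth\Gamma^{2'}$ combined with $P\Porth = 0$. For $\phi^{\ast}(P\Psi)$, the two correction terms $\Gamma_{2'a}\theta$ and $\tfrac{1}{2}\epsilon_{a\,b_1 b_2}\Gamma^{b_1 b_2}\Gamma^{2'}\theta$ both lie in $\Porth(\mathbf{32})$ (since $\Gamma_{2'a}P = \Porth\Gamma_{2'a}$, using $\Gamma_{2'}P = \Porth\Gamma_{2'}$ and $\Gamma_a P = P\Gamma_a$, and similarly $\Gamma^{b_1 b_2}\Gamma^{2'}P = \Porth\Gamma^{b_1 b_2}\Gamma^{2'}$), hence are annihilated by $P$, leaving $\psi = \mathrm{d}\theta + \mathcal{O}(\theta^2)$. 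For $\phi^{\ast}(\Porth\Psi)$, both corrections survive: the first gives $\Porth\Gamma_{2'a}\theta = \Gamma_{2'a}\theta = -\Gamma_{a2'}\theta$ directly; for the second we use 3d Hodge duality \eqref{3HodgeDuality} together with $[\Gamma^{2'},\Gamma^{b_1 b_2}] = 0$ (even number of anticommutations) to rewrite
\begin{equation*}
  \tfrac{1}{2}\epsilon_{a\,b_1 b_2}\,\Gamma^{b_1 b_2}\Gamma^{2'}\theta
  \;=\;
  \Gamma^{2'}\,\tfrac{1}{2}\epsilon_{a\,b_1 b_2}\Gamma^{b_1 b_2}P\theta
  \;=\;
  -\Gamma^{2'}\Gamma_a\theta
  \;=\;
  -\Gamma_{a2'}\theta\,,
\end{equation*}
so that the two contributions combine into a single multiple of $\Gamma_{a2'}\theta$ with exactly the prefactor $\bigl(\tfrac{1}{2}\tfrac{r_{\mathrm{prb}}}{N^{2/6}} - \tfrac{c}{6}\tfrac{r_{\mathrm{prb}}}{N^{1/6}}\bigr)$ displayed in \eqref{M2WorldvolumeFieldsToFirstThetaOrder}. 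For $\phi^{\ast}\Omega^{2'}{}_{b_2}$ the same Hodge duality, this time applied inside the fermionic bilinear, converts $\tfrac{1}{2}\epsilon^{a\,b_1 b_2}(\overline{\theta}\,\Gamma_{b_1 b_2}\,\mathrm{d}\theta)$ into a multiple of $(\overline{\theta}\,\gamma^a\,\mathrm{d}\theta)$, yielding the stated expansion.

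For part (ii), the 2nd fundamental form components \eqref{M2ScndFundamentalFormToFirstOrder} are read off by decomposing the just-computed $\phi^{\ast}\Omega^{2'}{}_{b_2}$ against the worldvolume coframe $(e,\psi)$: substituting $e^{b_1} = \tfrac{r_{\mathrm{prb}}}{N^{1/6}}\mathrm{d}x^{b_1} + (\overline{\theta}\,\gamma^{b_1}\mathrm{d}\theta) + \mathcal{O}(\theta^2)$ and $\psi^\beta = \mathrm{d}\theta^\beta + \mathcal{O}(\theta^2)$ into $e^{b_1}\SecondFundamentalForm^{2'}_{b_1 b_2} + \psi^\beta\SecondFundamentalForm^{2'}_{\beta\, b_2}$, the coefficient of $\mathrm{d}x^{b_1}$ fixes $\SecondFundamentalForm^{2'}_{b_1 b_2} = \tfrac{1}{N^{1/6}}\delta_{b_1 b_2} + \mathcal{O}(\theta^2)$, and then the remaining $(\overline{\theta}\,\gamma_{b_2}\mathrm{d}\theta)$ term (after accounting for the bosonic correction to $e^{b_1}$) is absorbed into $\SecondFundamentalForm^{2'}_{\beta\, b_2}$.

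The main obstacle is purely bookkeeping: keeping track of signs and numerical prefactors through the 3d Hodge-duality identity \eqref{3HodgeDuality} (which relies on $\Gamma_{012}^2 = +1$ and hence a definite sign convention) and through the anticommutations needed to slide $\Gamma^{2'}$ past $\Gamma^{b_1 b_2}$. Once the sign conventions in \eqref{OddComponentsOfFieldStrengthsOnAdS4} and \eqref{3HodgeDuality} are fixed, each step is mechanical and parallels the M5 computation essentially verbatim, with the 6d transverse Hodge structure of the M5 case replaced by the 3d tangential Hodge structure of the M2 case.
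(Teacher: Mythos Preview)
Your proposal is correct and follows essentially the same approach as the paper's own proof: pulling back the first-order expansion \eqref{AdS4SuperfieldsToFirstOrder} through \eqref{TheM2SuperImmersion}, using the projector relations \eqref{ProjectionOperatorRelationsOn2Brane} to kill or pass terms, and invoking the 3d Hodge duality \eqref{3HodgeDuality} to collapse the $\epsilon$-term into a single $\Gamma_{a2'}\theta$ contribution. Your treatment is in fact slightly more complete, since you also spell out the fifth line of \eqref{M2WorldvolumeFieldsToFirstThetaOrder} and the extraction of $\SecondFundamentalForm^{2'}$ in part (ii), which the paper's proof leaves implicit.
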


\begin{proof}
  The first line in \eqref{M2WorldvolumeFieldsToFirstThetaOrder} is evident. For the second line just note that $\big(\,\overline{\theta}\,\Gamma^{2'}\mathrm{d}\theta\big) \,=\, \big(\overline{\theta}\, P \Gamma^{2'}\! P\,  \mathrm{d}\theta \big) \,=\, 0$ by \eqref{ProjectionOperatorRelationsOn2Brane}.
  For the third line notice similarly that
  $$
    \def\arraystretch{1.6}
    \begin{array}{l}
      \phi^\ast (P \Psi)^\alpha
      \;=\;
      \mathrm{d}\theta^\alpha
      \,+\,
      \Big(
        -\tfrac{1}{2}
        \tfrac
          { r_{\mathrm{prb}} }
          { N^{2/6} }
        \underbrace{
          (P\,\Gamma_{2' a} \theta)
            ^{\mathrlap{\alpha}}
        }_{\color{gray}  = 0}
        \;\;
        +
        \tfrac{c}{6}
        \,
        \tfrac{1}{2}
        \epsilon_{a\, b_1 b_2}
        \underbrace{
          (
            P \Gamma^{b_1 b_2}
            \Gamma^{2'}\!
            \theta
          )^{\mathrlap{\alpha}}
        }_{\color{gray} = 0 }
        \;\;
      \Big)
      \,
      \mathrm{d}x^a
      \mathrlap{\,,}
    \end{array}
  $$

  \vspace{1mm}
  \noindent
  where the terms over the braces vanish by \eqref{ProjectionOperatorRelationsOn2Brane}: 
  $$
    P \Gamma_{2' a} \theta
    \,=\,
    P \Gamma_{2' a} P \theta
    \,=\,
    \Gamma_{2' a} \Porth P \theta
    \,=\,
    0\,,
    \;\;\;\;
    \mbox{and}
    \;\;\;\;
    P\Gamma^{b_1 b_2}\Gamma^{2'}\! \theta
    \,=\,
    P\Gamma^{b_1 b_2}\Gamma^{2'}\! P\theta
    \,=\,
    \Gamma^{b_1 b_2}\Gamma^{2'}\! 
    \Porth P\theta
    \,.
  $$
The fourth line works analogously but complementarily:
\begin{equation}
  \label{PullbackOfTransvereOddCoframeToHolographicM2}
  \hspace{-5mm}
  \def\arraystretch{1.6}
  \begin{array}{llll}
    \phi^\ast\big(
      \Porth
      \Psi
    \big)^\alpha
    &
    \;=\;
    \big(
      -\tfrac{1}{2}
      \tfrac
        { r_{\mathrm{prb}} }
        {N^{2/6}}
      (
        \Porth 
        \,
        \Gamma_{2' a}\theta
      )^\alpha
      \,+\,
      \tfrac{c}{6}
        \fixed{
          \tfrac
            {r_{\mathrm{prb}}}
            {N^{1/6}}
        }
      \tfrac{1}{2}
      \epsilon_{a \, b_1 b_2}
      (
        \Porth
        \,
        \Gamma^{b_1 b_2}
        \Gamma^{2'}\!\theta
      )^\alpha
    \big)
    \mathrm{d}x^a
    &
    +
    \mathcal{O}(\theta^2)
    &
    \proofstep{
      by 
      \eqref{TheM2SuperImmersion}
      \& \eqref{AdS4SuperfieldsToFirstOrder}
    }
    \\
    &\;=\;
    \big(
      +\tfrac{1}{2}
      \tfrac
        { r_{\mathrm{prb}} }
        {N^{2/6}}
      (
        \Gamma_{a}
        \Gamma_{2'}\theta
      )^\alpha
      \,-\,
      \tfrac{c}{6}
        \fixed{
          \tfrac
            {r_{\mathrm{prb}}}
            {N^{1/6}}
        }
      (
        \Gamma_a
        \Gamma_{2'}\theta
      )^\alpha
    \big)
    \mathrm{d}x^a
    &
    + \mathcal{O}(\theta^2)
    &
    \proofstep{
      by
      \eqref{ProjectionOperatorRelationsOn2Brane}
      \&
      \eqref{3HodgeDuality}
    }.
  \end{array}
\end{equation}
\end{proof}

\vspace{-.2cm}
Hence in analogy with Rem. \ref{CriticalDistanceForHolographicM5} we find:

\begin{remark}[\bf \fixedtext{Posititve charge of} holographic M2-brane probe]
\label{CriticalDistanceForHolographicM2}
The expression \eqref{PullbackOfTransvereOddCoframeToHolographicM2} must vanish for the immersion to be $\sfrac{1}{2}$BPS.
With $c = \pm 3/N^{1/6}$ \eqref{OnAdS4ScaleFactorOfCFieldFlux}
this requires again that we have the positive sign in 
\eqref{OnAdS4ScaleFactorOfCFieldFlux}
\begin{equation}
  \label{PositiveChargeForM2Brane}
  G_4
  \;=\;
  +
  \tfrac{3}{N^{1/6}}
  E^0 E^1 E^2 E^3
\end{equation}
analogous to the case of holographic M5-branes (Rem. \ref{CriticalDistanceForHolographicM5}).
\end{remark}

\section{Conclusion}

Motivated by an old proposal, due to Duff et al., that holography is microscopically realized by fluctuations of branes probing their own near-horizon geometry, we have constructed the $\sfrac{1}{2}$BPS super-embedding of an M5-brane super-worldvolume parallel to the horizon of super-$\mathrm{AdS}_7 \times S^4$-spacetime (which in fact seems to be the first explicit example of any non-trivial M5 super-embedding to be recorded). 

\smallskip 
In doing so, we have set to zero the transverse worldvolume gravitino field, $\tau_a = 0$, highlighting (Rem. \ref{RelationOfSuperembeddingToTheLiterature}) that only then is the Bianchi identity of the self-dual worldvolume flux density as expected (namely without fermionic correction) and is the super-embedding really $\sfrac{1}{2}$BPS in a sense which we made precise in Def. \ref{HalfBPSSuperImmersion}, both following \cite{GSS24-FluxOnM5}.

\smallskip
Under this condition, we have shown in Thm. \ref{ExistenceOfHolographicM5BraneProbes} --- by inductively solving the super-embedding conditions in super-normal coordinates ---
that there is a unique such holographic super-embedding (up to the irrelevant choice of a direction on the surrounding 4-sphere) for any radial distance from the asymptotic boundary. 

\smallskip 
This super-embedding should, therefore, serve as the {\it super-space} background configuration around which to compute, after super-diffeomorphism gauge fixing, the desired super-conformal fluctuations. We hope to discuss this elsewhere.

\medskip

\appendix

\section{Appendix}

\subsection{Tensor conventions and 11d spinors}
\label{TensorConventionsAnd11dSpinors}

\noindent
{\bf Tensor conventions.}
Our tensor conventions are standard, but since the computations below crucially depend on the corresponding prefactors, here to briefly make them explicit:\begin{itemize}[leftmargin=.4cm]
\item
  The Einstein summation convention applies throughout: Given a product of terms indexed by some $i \in I$, with the index of one factor in superscript and the other in subscript, then a sum over $I$ is implied:
  $
    x_i \, y^i
    :=
    \sum_{i \in I} 
    x_i \, y^i
  $.

\item
Our Minkowski metric is the matrix
\begin{equation}
  \label{MinkowskiMetric}
  \big(\eta_{ab}\big)
    _{a,b = 0}
    ^{ d }
  \;\;
    =
  \;\;
  \big(\eta^{ab}\big)
    _{a,b = 0}
    ^{ d }
  \;\;
    :=
  \;\;
  \big(
    \mathrm{diag}
      (-1, +1, +1, \cdots, +1)
  \big)_{a,b = 0}^{d} \;.
\end{equation}
\item
  Shifting position of frame indices always refers to contraction with the  Minkowski metric \eqref{MinkowskiMetric}:
  $$
    V^a 
      \;:=\;
    V_b \, \eta^{a b}
    \,,
    \;\;\;\;
    V_a \;=\; V^b \eta_{a b}
    \,.
  $$
\item Skew-symmetrization of indices is denoted by square brackets ($(-1)^{\vert\sigma\vert}$ is sign of the permutation $\sigma$):
$$
  V_{[a_1 \cdots a_p]}
  \;:=\;
  \tfrac{1}{p!}
  \sum_{
    \sigma \in \mathrm{Sym}(n)
  }
  (-1)^{\vert \sigma \vert}
  V_{ a_{\sigma(1)} \cdots a_{\sigma(p)} }\,.
$$
\item
We normalize the Levi-Civita symbol to \begin{equation}
  \label{transversalizationOfLeviCivitaSymbol}
  \epsilon_{0 1 2 \cdots} 
    \,:=\, 
  +1
  \;\;\;\;\mbox{hence}\;\;\;\;
  \epsilon^{0 1 2 \cdots} 
    \,:=\, 
  -1
  \,.
\end{equation}
\item
We normalize the Kronecker symbol to
\begin{equation}
  \label{KroneckerSymbol}
  \delta
    ^{a_1 \cdots a_p}
    _{b_1 \cdots b_p}
  \;:=\;
  \delta^{[a_1}_{[b_1}
  \cdots
  \delta^{a_p]}_{b_p]}
  \;=\;
  \delta^{a_1}_{[b_1}
  \cdots
  \delta^{a_p}_{b_p]}
  \;=\;
  \delta^{[a_1}_{b_1}
  \cdots
  \delta^{a_p]}_{b_p}
\end{equation}
so that
\begin{equation}
  \label{ContractingKroneckerWithSkewSymmetricTensor}
  V_{
    \color{darkblue}
    a_1 \cdots a_p
  }
  \tensor*
    {\delta}
    {
    ^{ 
       \color{darkblue}
       a_1 \cdots a_p 
    }
    _{b_1 \cdots b_p}
    }
  \;\;
  =
  \;\;
  V_{[b_1 \cdots b_p]}  
  \;\;\;\;
  \mbox{and}
  \;\;\;\;
  \epsilon^{
    {\color{darkblue}  
      c_1 \cdots c_p
    }
    a_1 \cdots a_q
  }
  \,
  \epsilon_{
    {\color{darkblue}
    c_1 \cdots c_p 
    }
    b_1 \cdots b_q
  }
  \;\;
  =
  \;\;
  -
  \,
  p! \cdot q!
  \,
  \delta
    ^{a_1 \cdots a_q}
    _{b_1 \cdots b_q}
  \,.
\end{equation}
\end{itemize}

\medskip

\noindent
{\bf Spinors in 11d.}
We briefly recall the following standard facts (proofs and references are given in \cite[\S 2.2.1]{GSS24-SuGra}):
There exists an $\mathbb{R}$-linear representation $\mathbf{32}$ of $\mathrm{Pin}^+(1,10)$ with generators
\begin{equation}
  \label{The11dMajoranaRepresentation}
  \Gamma_a 
  \;:\;
  \mathbf{32}
  \xrightarrow{\;\;}
  \mathbf{32}
\end{equation}
and equipped with a skew-symmetric bilinear form
\begin{equation}
  \label{TheSpinorPairing}
  \big(\hspace{.8pt}
    \overline{(-)}
    (-)\,
  \big)
  \;:\;
  \mathbf{32}
  \otimes
  \mathbf{32}
  \xrightarrow{\quad}
  \mathbb{R}
\end{equation}
with the following properties, where as 
usual we denote skew-symmetrized product of $k$ Clifford generators by
  \begin{equation}
    \label{CliffordBasisElements}
    \Gamma_{a_1 \cdots a_k}
    \;:=\;
    \tfrac{1}{k!}
    \underset{
      \sigma \in
      \mathrm{Sym}(k)
    }{\sum}
    \mathrm{sgn}(\sigma)
    \,
    \Gamma_{a_{\sigma(1)}}
    \cdot
    \Gamma_{a_{\sigma(2)}}
    \cdots
    \Gamma_{a_{\sigma(n)}}
    :
  \end{equation}

\begin{itemize}[leftmargin=.4cm]
  \item
  The Clifford generators square to plus the Minkowski metric \eqref{MinkowskiMetric}
  \begin{equation}\label{CliffordDefiningRelation}
    \Gamma_a
    \Gamma_b
    +
    \Gamma_b
    \Gamma_a
    \;\;=\;\;
    +2 \, \eta_{a b}
    \,
    \mathrm{id}_{\mathbf{32}}
    \,.
  \end{equation}

  \item The Clifford product is given on the basis elements \eqref{CliffordBasisElements}
  as
\begin{equation}
  \label{GeneralCliffordProduct}
  \Gamma^{a_j \cdots a_1}
  \,
  \Gamma_{b_1 \cdots b_k}
  \;=\;
  \sum_{l = 0}^{
    \mathrm{min}(j,k)
  }
  \pm
  l!
\binom{j}{l}
 \binom{k}{l}
  \,
  \delta
   ^{[a_1 \cdots a_l}
   _{[b_1 \cdots b_l}
  \Gamma^{a_j \cdots a_{l+1}]}
  {}_{b_{l+1} \cdots b_k]}
  \,.
\end{equation}
  
  \item 
  The Clifford volume form equals the Levi-Civita symbol 
  \eqref{transversalizationOfLeviCivitaSymbol}:
  \begin{equation}
    \label{CliffordVolumeFormIn11d}
    \Gamma_{a_1 \cdots a_{11}}
    \;=\;
    \epsilon_{a_1 \cdots a_{11}}
    \mathrm{id}_{\mathbf{32}}
    \,.
  \end{equation}
  \item The Clifford generators are skew self-adjoint with respect to the pairing \eqref{TheSpinorPairing}
  \vspace{1mm} 
  \begin{equation}
    \label{SkewSelfAdjointnessOfCliffordGenerators}
    \overline{\Gamma_a}
    \;=\;
    - \Gamma_a
    \;\;\;\;\;\;
    \mbox{in that}
    \;\;\;\;\;\;
    \underset{
      \phi,\psi \in \mathbf{32}
    }{\forall}
    \;\;
    \big(\,
      \overline{(\Gamma_a \phi)}
      \,
      \psi
    \big)
    \;=\;
    -
    \big(\,
      \overline{\phi}
      \,
      (\Gamma_a \psi)
    \big)
    \,,
  \end{equation}
  so that generally
  \vspace{1mm} 
  \begin{equation}
    \label{AdjointnessOfCliffordBasisElements}
    \overline{\Gamma_{a_1 \cdots a_p}}
    \;=\;
    (-1)^{
      p + p(p-1)/2
    }
    \,
    \Gamma_{a_1 \cdots a_p}
    \,.
  \end{equation}

  \item
  The $\mathbb{R}$-vector space of $\mathbb{R}$-linear  endomorphisms of $\mathbf{32}$ has a linear basis given by the $\leq 5$-index Clifford elements 
  \vspace{1mm} 
  \begin{equation}
    \label{CliffordElementsSpanningLinearMaps}
    \mathrm{End}_{\mathbb{R}}\big(
      \mathbf{32}
    \big)
    \;\;
    =
    \;\;
    \big\langle
      1
      ,\,
      \Gamma_{a_1}
      ,\,
      \Gamma_{a_1 a_2}
      ,\,
      \Gamma_{a_1 a_2 a_3}
      ,\,
      \Gamma_{a_1 \cdots a_4}
      ,\,
      \Gamma_{a_1 \cdots a_5}
    \big\rangle_{
      a_i = 0, 1, \cdots
    }
    \,.
  \end{equation}

  \item
  The $\mathbb{R}$-vector space space of {\it symmetric} bilinear forms on $\mathbf{32}$
  has a linear basis given by the expectation values with respect to \eqref{TheSpinorPairing} of the 1-, 2-, and 5-index Clifford basis elements:
  \begin{equation}
    \label{SymmetricSpinorPairings}
    \mathrm{Hom}_{\mathbb{R}}
    \Big(
    (\mathbf{32}\otimes \mathbf{32})_{\mathrm{sym}}
    ,\,
    \mathbb{R}
    \Big)
    \;\;
    \simeq
    \;\;
    \Big\langle
    \big(
      (\overline{-})
      \Gamma_a
      (-)
    \big)
    \,,\;\;
    \big(
      (\overline{-})
      \Gamma_{a_1 a_2}
      (-)
    \big)
    \,,\;\;
    \big(
      (\overline{-})
      \Gamma_{a_1 \cdots a_5}
      (-)
    \big)
    \Big\rangle_{
      a_i = 0, 1, \cdots
      \,,
    }
  \end{equation}
  while a basis for the skew-symmetric bilinear forms is given by
  \begin{equation}
    \label{SkewSpinorPairings}
    \mathrm{Hom}_{\mathbb{R}}
    \Big(
    (\mathbf{32}\otimes \mathbf{32})_{\mathrm{skew}}
    ,\,
    \mathbb{R}
    \Big)
    \;\;
    \simeq
    \;\;
    \Big\langle
    \big(
      (\overline{-})
      (-)
    \big)
    \,,\;\;
    \big(
      (\overline{-})
      \Gamma_{a_1 a_2 a_3}
      (-)
    \big)
    \,,\;\;
    \big(
      (\overline{-})
      \Gamma_{a_1 \cdots a_4}
      (-)
    \big)
    \Big\rangle_{
      a_i = 0, 1, \cdots
      \,.
    }
  \end{equation}

\item
  Any linear endomorphism $\phi \in \mathrm{End}_{\mathbb{R}}(\mathbf{32})$ is uniquely a linear combination of Clifford elements as:
  \begin{equation}
    \label{CliffordExpansionOfEndomorphismOf32}
    \phi
      \;=\;
    \tfrac{1}{32}
    \sum_{p = 0}^5
    \;
    \tfrac{
      (-1)^{p(p-1)/2}
    }{ p! }
    \mathrm{Tr}\big(
      \phi \circ 
      \Gamma_{a_1 \cdots a_p}
    \big)
    \Gamma^{a_1 \cdots a_p}
    \,,
    \hspace{1cm}
    \scalebox{0.9}{$ a _i \in \{0,\!\cdots\!,5',6,7,8,9\} $}
    \,.
  \end{equation}

\end{itemize}

\medskip

\noindent
{\bf Background formulas for 11d Supergravity.}
Our notation and conventions for super-geometry and for on-shell 11d supergravity on super-space follow \cite[\S 2.2 \& \S 3]{GSS24-SuGra}, to which we refer for further details and exhaustive referencing.

We denote the local data of a super-Cartan connection on (a surjective submersion $\CoverOf{X}$ of) (super-)spacetime $X$, representing a super-gravitational field configuration, as\footnote{
  Our use of different letters for the even and odd components of a super co-frame follows e.g. \cite{CDF91}. Other authors write ``$E^\alpha$'' for what we denote ``$\Psi^\alpha$'', e.g. \cite{BandosSorokin23}. While it is of course part of the magic of supergravity that $E^a$ and $E^\alpha$/$\Psi^\alpha$ are unified into a single super-coframe field $E$, we find that for reading and interpreting formulas it is helpful to use different symbols for its even and odd components.
}
\begin{equation}
  \label{LocalCartanConnection}
  \def\arraystretch{1.7}
  \begin{array}{clcl}
    \scalebox{.7}{
      \color{darkblue}
      \bf
      Graviton
    }
    &
    \big(
      E^a
    \big)_{a=0}^{D-1}
    &\in&
    \Omega^1_{\mathrm{dR}}\big(
      \CoverOf{X}
      ;\,
      \mathbb{R}^{1,D-1}
    \big)
    \\
    \scalebox{.7}{
      \color{darkblue}
      \bf
      Gravitino
    }
    &
    \big(
      \Psi^\alpha
    \big)_{\alpha=1}^{N}
    &\in&
    \Omega^1_{\mathrm{dR}}\big(
      \CoverOf{X}
      ;\,
      \mathbf{N}_{\mathrm{odd}}
    \big)
    \\
    \scalebox{.7}{
      \color{darkblue}
      \bf
      \def\arraystretch{.9}
      \begin{tabular}{c}
        Spin-
        \\
        connection
      \end{tabular}
    }
    &
    \big(
      \Omega^{ab}
      =
      -\Omega^{b a}
    \big)_{a,b = 0}^{D-1}
    &\in&
    \Omega^1_{\mathrm{dR}}\big(
      \CoverOf{X}
      ;\,
      \mathfrak{so}(1,D-1)
    \big)
  \end{array}
\end{equation}

\vspace{1mm} 
\noindent and the corresponding Cartan structural equations (cf. \cite[Def. 2.78]{GSS24-SuGra}) for the supergravity field strengths as
\begin{equation}
  \label{GravitationalFieldStrengths}
  \def\arraystretch{1.7}
  \begin{array}{ccccll}
    \scalebox{.7}{
      \color{darkblue} 
      \bf
      \def\arraystretch{.9}
      \begin{tabular}{c}
        Super-
        \\
        Torsion
      \end{tabular}
    }
    &
    \big(
    \,
    T^a
    &:=&
    \differential
    \,
        E^a
    &
    - \;
    \Omega^{a}{}_b \, E^b
    -
    (\,
      \overline{\Psi}
      \,\Gamma^a\,
      \Psi
    )
    \,
    \big)_{a=0}^{D-1}
    \\
    \scalebox{.7}{ \color{darkblue} \bf 
      \def\arraystretch{.9}
      \def\tabcolsep{-.2cm}
      \begin{tabular}{c}
        Gravitino
        \\
        field strength
      \end{tabular}
    }
    &
    \big(
    \,
    \rho
    &:=&
    \differential 
    \,
        \Psi
    &
    -\;
    \tfrac{1}{4}
    \Omega^{a b} 
     \,
    \Gamma_{a b}
    \psi
    \,
    \big)_{\alpha=1}^{N}
    \\
    \scalebox{.7}{
      \color{darkblue} 
      \bf 
      Curvature
    }
    &
    \big(
    \,
    R^{ab}
    &:=&
    \differential
    \,
      \Omega^{a b}
    &
    -\;
    \Omega^a{}_c 
    \,
    \Omega^{c b}
    \,
    \big)_{a,b = 0}^{D-1}\,.
  \end{array}
\end{equation}
Finally, we denote the corresponding components in the given 
local super-coframe $(E,\Psi)$ by \cite[(127-8)]{GSS24-SuGra}:
\begin{equation}
  \label{CoFrameComponentsOfSuperFieldStrengths}
  \def\arraystretch{1.3}
  \begin{array}{lcl}
    T^a &\defneq& 0
    \\
    \rho &=:& 
    \tfrac{1}{2}
    \rho_{a b}
    \,
    E^a \, E^b
    \,+\,
    H_a \Psi\, E^a
    \\
    R^{a_1 a_2}
    &=:&
    \tfrac{1}{2}
    R^{a_1 a_2}{}_{b_1 b_2}
    \,
    E^{a_1}\, E^{a_2}
    \,+\,
    \big(\hspace{.8pt}
      \overline{J}^{a_1 a_2}{}_b
      \Psi
    \big) E^b
    \,+\,
    \big(\hspace{.8pt}
      \overline{\Psi}
      \,K^{a_1 a_2}\,
      \Psi
    \big)
    \,,
  \end{array}
\end{equation}
where all components not explicitly appearing vanish identically by the superspace torsion constraints \cite[(121), (137)]{GSS24-SuGra}. In addition, in the main text we consider the situation that also $\rho_{a b} = 0$  \eqref{VanishingGravitinoFieldStrength} whence also $J^{a_1 a_2}{}_b = 0$ \eqref{CoFrameComponentsOfSuperFieldStrengthsOnGravitinoFlat}.

\subsection{Holographic M2 Equation of Motion}
\label{M2EquationsOfMotion}

We compare here to existing computations via (just) the equations of motion.

\smallskip 
Holographic embeddings of the M2-brane have originally been discussed in \cite{BDPS87}, where a critical radius $r_{\mathrm{prn}} = \infty$ was found for static embeddings with respect to the ``static coordinate'' chart (reproduced as Ex. \ref{HolographicM2ProbeEOMInStaticChart}).
Then \cite[(2.23) \& \S A]{CKKTvP98} observed that there are different static holographic embeddings compatible with different AdS-coordinate charts, and found no critical radius for static embeddings with respect to the Poincar{\'e} chart (reproduced as Ex. \ref{CriticalM2ProbeDistanceFromEOMInPoincareChart} below).
Note that these existing computation use (only) the equations of motion, not a super-embedding (cf. {\it The need for probe brane super-embeddings} on p. \pageref{NeedForSuperEmbeddings}).

\medskip

\noindent
{\bf The bosonic equation of motion for the M2-brane} (as given e.g. in \cite[(4)]{BDPS87}) is the following (in this subsection we stick to these authors' notation, for ease of comparison \footnote{Except that we change the sign of the third summand in \eqref{BosonicM2EOMByDuffEtAl} compared to \cite[(4)]{BDPS87}, hence equivalently of the normalization of the sign of the flux density, as was also done in \cite[(A.5)]{CKKTvP98}.}):
\vspace{0mm}
\begin{align}
  \label{BosonicM2EOMByDuffEtAl}
  \hspace{-3mm} 
  \partial_i
  \Big(
    \sqrt{-h}
    \,
    h^{i j}
    (\partial_j X^N)
    g_{M N}
  \Big)
  &\,+\,
  \tfrac{1}{2}
  \sqrt{-h}
  \,
  h^{i j}
  (\partial_i X^N)
  (\partial_j X^P)
  (\partial_M g_{N P})
  \\ \nonumber 
  &\, -
  \tfrac{1}{6}
  \epsilon^{i j k}
  (\partial_i X^N)
  (\partial_j X^P)
  (\partial_k X^Q)
  \,
  F_{M N P Q}
  \;=\;
  0\;.
\end{align}

\begin{example}[\bf M2 probe EoM in Poincar{\'e} chart]
  \label{CriticalM2ProbeDistanceFromEOMInPoincareChart}
  Consider the case where $g$ is the $\mathrm{AdS}_4$-metric in Poincar{\'e} coordinates as in \eqref{MetricTensorForBlackM2}
  \vspace{1mm}
  $$
    \mathrm{d}s^2_g
    \;\defneq\;
    \tfrac{r^2}{N^{2/6}}
    \big(
      -
      \mathrm{d}X^0 \otimes \mathrm{d}X^0
      +
      \mathrm{d}X^1 \otimes \mathrm{d}X^1
      +
      \mathrm{d}X^2 \otimes \mathrm{d}X^2
    \big)
    \,+\,
    \tfrac{N^{2/6}}{r^2}
    \,
    \mathrm{d}r \otimes \mathrm{d}r
  $$
  and \eqref{OnAdS4ScaleFactorOfCFieldFlux}
  \begin{equation}
    \label{FluxDensityForM2InStaticCoordinates}
    F_{MNPQ} 
    \;\defneq\;
    \pm
    \tfrac{ 3 }{ N^{1/6} }
    \, 
    \sqrt{-g}
    \;
   \scalebox{1.3}{$\epsilon$}_{M N P Q}
    \,,
    \;\;\;\;\;\;
    \sqrt{-g}
    \;=\;
    \tfrac{r^2}{N^{2/6}}
    \,.
  \end{equation}
 
  \vspace{2mm} 
  \noindent With the holographic embedding  
  \eqref{OrdinaryHolographicM2Embedding}
  \vspace{0mm} 
  \begin{equation}
    \label{OrdinaryHolographicM2EmbeddingRecalled}
    \def\arraystretch{1.2}
    \begin{array}{rcc}
      X^0(x^0, x^1, x^2) 
        &\defneq& 
      x^0
      \\
      X^1(x^0, x^1, x^2) 
        &\defneq& 
      x^1
      \\
      X^2(x^0, x^1, x^2) 
        &\defneq& 
      x^2
      \\
      r\,(x^0, x^1, x^2) 
        &\defneq& 
      r_{\mathrm{prb}}\;,
    \end{array}
  \end{equation}
  the induced metric is
  \begin{equation}
    \mathrm{d}s^2_h
    \;=\;
    \tfrac{r_{\mathrm{prb}}^2}{N^{2/6}}
    \big(
      -
      \mathrm{d}x^0 \otimes \mathrm{d}x^0
      +
      \mathrm{d}x^1 \otimes \mathrm{d}x^1
      +
      \mathrm{d}x^2 \otimes \mathrm{d}x^2
    \big)    
    \,,
    \;\;\;\;\;\;\;
    \sqrt{-h}
    \;=\;
    \tfrac
      { r^3_{\mathrm{prb}} }
      { N^{3/6} }\;.
  \end{equation}
  and the corresponding radial component of the 
  equation of motion
  \eqref{BosonicM2EOMByDuffEtAl}
  becomes
  \begin{equation}
    \def\arraystretch{1.5}
    \begin{array}{lcl}
    0 &=&
  \partial_i
  \Big(
    \tfrac
      {r^3_{\mathrm{prb}}}
      {N^{3/6}}
    \,
    h^{i j}
    \underbrace{
      (\partial_j r)
    }_{ \color{gray} = 0 }
    \tfrac
      { N^{2/6} }
      { r^2_{\mathrm{prb}} }
  \Big)
  \,+\,
  \tfrac{1}{2}
    \tfrac
      {r^3_{\mathrm{prb}}}
      {N^{3/6}}
  \,
  \underbrace{
  \eta^{i j} \eta_{i j}
  }_{\color{gray} = 3 }
  \tfrac
    { N^{2/6} }
    { r^2_{\mathrm{prb}} }
  \tfrac
    { 2 r_{\mathrm{prb}} }
    { N^{2/6} }
  \,\pm\,
    \tfrac{1}{6}
    \tfrac{3}{N^{1/6}}
    \tfrac
      {r^2_{\mathrm{prb}}}
      {N^{2/6}}
    \,
    \underset{
      \color{gray}
      =\, -6
    }{
      \underbrace{
        \epsilon^{i j k}
        \epsilon_{i j k}
      }
    }
    \\[15pt]
    &=&
    3
      \Big(
        \tfrac
          { r_{\mathrm{prb}}^2 }
          { N^{3/6} }
        \,\mp\,
        \tfrac
          { r_{\mathrm{prb}}^2 }
          { N^{3/6} }
      \Big)
      \,,
    \end{array}
  \end{equation}
  which is equivalent to the flux density carrying the positive sign
  \vspace{1mm} 
  \begin{equation}
    F_{M N P Q}
    \;=\;
    +
    \tfrac{3}{N^{1/6}}
    \sqrt{-g}
    \, 
    \scalebox{1.3}{$\epsilon$}_{M N P Q}\;.
  \end{equation}
  and no further condition on $r_{\mathrm{prb}}$ (in accord with \cite[(2.33)]{CKKTvP98}).
\end{example}

\smallskip 
\begin{example}[\bf Holographic M2 probe EOM in static chart]
  \label{HolographicM2ProbeEOMInStaticChart}
  Consider the case where $g$ is the $\mathrm{AdS}_4$-metric in ``static coordinates'' \cite[(39.76)]{Blau22}\cite[(12)]{BDPS87}\cite[(A.4)]{CKKTvP98}
  \vspace{1mm} 
  $$
    \mathrm{d}s^2_g
    \;=\;
    -
    \big(
      1 + a^2 r^2
    \big)
    \,
    \mathrm{d}T \otimes \mathrm{d}T
    \,+\,
    r^2
    \,
      \mathrm{d}\Theta
      \otimes
      \mathrm{d}\Theta
    +
    r^2
      \mathrm{sin}^2(\Theta)
    \,
      \mathrm{d}\Phi
      \otimes
      \mathrm{d}\Phi
    \,+\,
    \big(
      1 + a^2 r^2
    \big)^{-1}
    \,
    \mathrm{d}r \otimes \mathrm{d}r
  $$
  and
  \vspace{1mm} 
  $$
    F_{M N P Q}
    \;\defneq\;
    \pm
    3 a \, \sqrt{-g}
    \;
    \scalebox{1.3}{$\epsilon$}_{M N P Q}
    \,,
    \;\;\;\;\;\;\;\;
    \mbox{
      for $a,r \in \mathbb{R}_{\geq 0}$
      \,,
    }
    \;\;\;\;\;\;\;
    \sqrt{-g} 
    \,=\,
    r^2 \mathrm{sin}(\Theta)
    \,.
  $$
  \vspace{1mm} 
  
  \noindent   
  With the corresponding static embedding now being \cite[(14)]{BDPS87}
  \begin{equation}
    \label{StaticM2EmbeddingInStaticChart}
    \def\arraystretch{1.3}
    \begin{array}{rcc}
      T(t, \theta, \phi) 
        &\defneq& 
      t
      \\
      \Theta(t, \theta, \phi) 
        &\defneq& 
      \theta
      \\
      \Phi(t, \theta, \phi) 
        &\defneq& 
      \phi
      \\
      r\,(t, \theta, \phi) 
        &\defneq& 
      r_{\mathrm{prb}}\;,
    \end{array}
  \end{equation}
  and hence intrisnically different from \eqref{OrdinaryHolographicM2EmbeddingRecalled},
  the induced metric is
  \vspace{1mm} 
  \begin{equation}
    \mathrm{d}s^2_h
    \;=\;
    -\big(1 + a^2 r^2_{\mathrm{prb}}\big)
    \, 
    \mathrm{d}t \otimes \mathrm{d}t
    \,+\,
    r^2_{\mathrm{prb}}
    \,
    \mathrm{d}\theta \otimes \mathrm{d}\theta
    \,+\,
    r^2_{\mathrm{prb}} 
    \, \mathrm{sin}^2(\theta)
    \, 
    \mathrm{d}\phi \otimes \mathrm{d}\phi
    \,,
    \;\;\;\;\;\;
    \sqrt{-h}
    \;=\;
    (1 + a^2 r^2_{\mathrm{prb}})^{1/2}
    \,
    r^2_{\mathrm{prb}}
    \,
    \mathrm{sin}(\Theta)
    \,.
  \end{equation}
  \vspace{2mm} 

  \noindent  
  Now the radial component of the 
  equation of motion
  \eqref{BosonicM2EOMByDuffEtAl} becomes
  \vspace{2mm}
  \begin{equation}
    \label{AnalogEquationOfBDPS87-15}
    \begin{array}{rcl}
      0 &=&
      r_{\mathrm{prb}}^2
      \,
      \mathrm{sin}(\Theta)
      \Big(
      \tfrac{1}{2}
      \,
      (
        1+ a^2 r_{\mathrm{prb}}^2
      )^{1/2}
      \Big(
        \tfrac{
          2 a^2 r_{\mathrm{prb}}
        }{
          1 + a^2 r^2_{\mathrm{prb}}
        }
        \,+\,
        \tfrac{2}{
          r_{\mathrm{prb}}
        }        
        \,+\,
        \tfrac{2}{
          r_{\mathrm{prb}}
        }        
      \Big)
      \,\pm\,
      \tfrac{1}{6}
      \,
      3a
      \underbrace{
        \epsilon^{i j k}
        \epsilon_{i j k}
      }_{
        \color{gray}
        -6
      }
      \Big).
    \end{array}
  \end{equation}
  This is equivalent to
  \vspace{1mm}
  $$
    F_{M N P Q}
    \;=\;
    + 3a \, \sqrt{-g} 
    \,
     \scalebox{1.3}{$\epsilon$}_{M N P Q}
    \,,
    \;\;\;\;\;\;\;\;\;\;
    a^2 r_{\mathrm{prb}}^3
    (1 + a^2 r_{\mathrm{prb}}^2)^{-1}
    +
    2 r_{\mathrm{prb}}
    -
    3 a r_{\mathrm{prb}}^2
    (1 + a^2 r_{\mathrm{prb}}^2)^{-1/2}
    \,=\,
    0
  $$  
  
  \vspace{1mm} 
  \noindent   
  as in \cite[(15)]{BDPS87} (cf. also \cite[(A.9)]{CKKTvP98}), nominally solved by 
  \begin{equation}
    \label{NominaSolutionToStaticChartEmbedding}
    r_{\mathrm{prb}} \in \{0, \infty\}\,.
  \end{equation}
\end{example}

\smallskip 
\begin{remark}[\bf Non-existence of static M2-embeddings in the static chart]
Beware that neither of the values \eqref{NominaSolutionToStaticChartEmbedding} can be used to construct an actual M2 super-embedding: 
\begin{itemize}[
  leftmargin=.65cm,
  topsep=2pt,
  itemsep=2pt
]
\item[\bf (i)] At the value $r_{\mathrm{prb}} = 0$ the map \eqref{StaticM2EmbeddingInStaticChart} is constant and hence is not an immersion (much less an embedding).
\item[\bf (ii)] The would-be value $r_{\mathrm{prb}} = \infty$ is outside the actual range of this variable 

(one might interpret it as only a mnemonic for taking the limit of observables as $r_{\mathrm{prb}} \to \infty$, but since the EOM is violated at all the finite $r_{\mathrm{prb}}$ whose limiting case would be computed thereby, it may be hard to interpret the result).
\end{itemize}
This problem suggests that for the purpose of microscopic $p$-brane holography, the appropriate holographic embeddings are those static with respect to the Poincar{\'e} chart, as used in \cite{CKKTvP98}\cite{PST99} and here in the main text. 
\end{remark}

\medskip

\newpage

\end{document}